\numberwithin{equation}{section}
\newcommand{\eq}[1]{\eqref{#1}}
\newcommand{\qtx}[1]{\quad\text{#1}\quad}
\newcommand{\beq}{\begin{equation}}
\newcommand{\eeq}{\end{equation}}
\newcommand{\abs}[1]{\left\lvert #1 \right\rvert}
\newcommand{\norm}[1]{\left\lVert #1 \right\rVert}
\newcommand{\set}[1]{\left\{ #1 \right\}}
\newcommand{\pa}[1]{\left( #1 \right)}
\DeclareMathOperator{\supp}{supp}  
\DeclareMathOperator{\im}{\mathrm{Im}}  
\DeclareMathOperator{\re}{\mathrm{Re}}
\newcommand{\ci}{\cite}
\newtheorem{theorem}{Theorem}[section]
\newtheorem{lemma}[theorem]{Lemma}
\newtheorem{cor}[theorem]{Corollary}
\newtheorem{prop}[theorem]{Proposition}
\newtheorem{remark}[theorem]{Remark} 
\newtheorem{defs}[theorem]{Definition}
\newcommand{\bl}{\begin{lemma}}
\newcommand{\el}{\end{lemma}}
\newcommand{\bc}{\begin{cor}}
\newcommand{\ec}{\end{cor}}
\newcommand{\bp}{\begin{prop}}
\newcommand{\ep}{\end{prop}}
\newtheorem*{assumptions}{Assumptions}
\newcommand{\de}{\delta}
\newcommand{\De}{\Delta}
\newcommand{\si}{\sigma}
\newcommand{\lb}{\lambda}
\newcommand{\ze}{\zeta}
\newcommand{\bvp}{{\boldsymbol{\varphi}}}
\newcommand{\ps}{\psi}
\newcommand{\BPhi}{\mathbf{\Phi}}
\newcommand{\BPsi}{\boldsymbol{\Psi}}
\newcommand{\La}{\Lambda}
\newcommand{\ve}{\varepsilon }
\newcommand{\e}{{\rm e}}
\newcommand{\Hh}{\mathcal{H}}
\newcommand{\F}{\mathcal{F}}
\newcommand{\K}{\mathcal{K}}
\newcommand{\T}{\mathcal{T}}
\newcommand{\cB}{\mathcal{B}}
\newcommand{\Vv}{\mathcal{V}}
\newcommand{\Ll}{\mathcal{L}}
\newcommand{\Tr}{{\rm Tr}}
\newcommand{\Mm}{\mathcal{M}}
\newcommand{\Cc}{\mathcal{C}}
\newcommand{\Uu}{\mathcal{U}}
\newcommand{\Pp}{\mathcal{P}}
\newcommand{\Ee}{\mathcal{E}}
\newcommand{\Ss}{\mathcal{S}}
\newcommand{\Ff}{\mathcal{F}}
\newcommand{\Jj}{\mathcal{J}}
\newcommand{\RR}{\mathbb{R}}
\newcommand{\NN}{\mathbb{N}}
\newcommand{\ZZ}{\mathbb{Z}}
\newcommand{\CC}{\mathbb{C}}
\newcommand{\B}{\mathbb{B}}
\newcommand{\E}{\mathbb{E}}
\newcommand{\G}{\mathbb{G}}
\newcommand{\aaa}{\mathbf{a}}
\newcommand{\bbb}{\mathbf{b}}
\newcommand{\ccc}{\mathbf{c}}
\newcommand{\Sym}{{\rm Sym}}
\newcommand{\diag}{{\rm diag}}
\newcommand{\sgn}{{\rm sgn}}
\newcommand{\hn}{|\!|\!|}
\newcommand{\hnn}{|\!|\!|\!|}
\newcommand{\PE}{\mbox{$\Pp\Ee$}}
\newcommand{\od}{\odot}
\newcommand{\ot}{{\od 2}}
\begin{document}

\title[A.C. Spectrum for Random Schr\"odinger Operators on the Bethe Strip]{Absolutely Continuous Spectrum for Random Schr\"odinger Operators on the Bethe Strip}

\author{Abel Klein}
\address[Klein]{University of California, Irvine,
Department of Mathematics,
Irvine, CA 92697-3875,  USA}
 \email{aklein@uci.edu}

\author{Christian Sadel}
\address[Sadel]{University of California, Irvine,
Department of Mathematics,
Irvine, CA 92697-3875,  USA}
 \email{csadel@math.uci.edu}

\thanks{A.K was  supported in part by the NSF under grant  DMS-1001509.}


\begin{abstract}
The Bethe Strip of width $m$ is the cartesian product $\B\times\{1,\ldots,m\}$, where $\B$ is the Bethe lattice  (Cayley tree). We prove that Anderson models  on the Bethe strip
have  ``extended states'' for small disorder. More precisely, we consider  Anderson-like Hamiltonians    $\;H_\lb=\frac12 \De \otimes 1 + 1 \otimes A\,+\,\lb \Vv$ on a Bethe strip with connectivity $K \ge 2$, where $A$ is an $m\times m$ symmetric matrix,  $\Vv$ is a random matrix potential, and $\lambda$ is the disorder parameter.  Given any closed interval $I\subset  (-\sqrt{K}+a_{\mathrm{max}},\sqrt{K}+a_{\mathrm{min}})$, where $a_{\mathrm{min}}$ and $a_{\mathrm{max}}$ are the smallest and largest eigenvalues of the matrix  $A$,  we prove that for $\lambda$ small  the random Schr\"odinger operator  $\;H_\lb$ has 
purely absolutely continuous spectrum in $I$ with probability one  and its  integrated density of
 states is continuously differentiable on the interval $I$. 
\end{abstract}

\maketitle 


\section{Introduction}


The Bethe strip of width $m$ is the cartesian product $\B\times\{1,\ldots,m\}$, where 
$\B$ denotes (the vertices of) the Bethe lattice (Cayley tree),  an infinite connected graph with no closed 
loops and a fixed degree (number of nearest neighbors) at each vertex.  
This fixed degree will be written as $K+1$ with $K\in \NN$ called the connectivity of $\B$.
The distance between two sites $x$ and $y$ {of $\B$} will be denoted by $d(x,y)$
and is equal to the length of the shortest path connecting $x$ and $y$.
The $\ell^2$ space of functions on the Bethe strip, $\ell^2(\B\times\{1,\ldots,m\})$, 
will be identified, as needed, with the tensor product
$\ell^2(\B)\otimes\CC^m$,  with the direct sum $\bigoplus_{x\in\B} \CC^m$, and with   $\ell^2(\B,\CC^m)=\big\{u:\B \mapsto \CC^m\,; \sum_{x\in\B} \|u(x)\|^2 < \infty \big\}$,
the space of  $\CC^m$-valued $\ell^2$ functions on $\B$,   {i.e.},
\begin{equation}
\ell^2(\B\times\{1,\ldots,m\})\;\cong\;\ell^2(\B)\otimes\CC^m\;\cong\;\bigoplus_{x\in\B} \CC^m\;\cong\; \ell^2(\B,\CC^m) \;.
\end{equation}

We consider the family of random Hamiltonians on $\ell^2(\B\times\{1,\ldots,m\})$ given by
\begin{equation}\label{Hlambda}
H_\lb\; = \tfrac12\;\De\otimes 1\,+\,1\otimes A\,+\,\lb\Vv \;.
\end{equation}
Here $\De$ denotes the centered Laplacian on $\ell^2(\B)$, which has 
spectrum $ \sigma(\Delta)  =  [-2\sqrt{ K},2\sqrt{K}]\,$ (e.g., \ci{AK}).  We use $\frac{1}{2}\De$
in the definition of $H_\lb$ to simplify some formulas.
$A \in \Sym(m)$ denotes the ``free vertical operator'' on the Bethe strip, where $\Sym(m)\cong \RR^{\frac 1 2 {m(m+1)}}$ is the set of
real symmetric $m\times m$ matrices.
$\Vv$ is the random matrix-potential given by $\Vv=\bigoplus_{x\in\B} V(x)$ on $\bigoplus_{x\in\B} \CC^m$, where 
$\{V(x)\}_{x\in\B}$ are independent identically distributed $\Sym(m)$-valued random variables with common probability distribution $\mu$.
The coefficient  $\lb$ is a real parameter called the {\em disorder}.
In particular, for  $u\in\ell^2(\B,\CC^m)$ we have 
\begin{equation}
(H_\lb u)(x) = \tfrac 1 2   \!\!\!\!\!
\sum_{\substack{y\in \B\\  d(x,y)=1}}  \!\!\!\!{u(y)}\;+\;A\,u(x)\,+\,\lambda\, V(x)\, u(x)\quad\text{for all}\quad x \in \B\;.
\end{equation}

An important special case of this model is the Anderson model on the product graph $\B\times\G$, where
$\G$ is a finite graph with $m$ labeled vertices.  If $ A_\G$ is the adjacency matrix of the graph $\G$, i.e.,
$(A_\G)_{k,\ell}$ denotes the number of edges between $k\in\G$ and $\ell \in \G$, then
$\De\otimes 1\,+\,1\otimes A_\G$ is the {adjacency} operator on the {product graph} $\B \times \G$. {If in \eqref{Hlambda}
  we take
$A=\frac12 A_\G$ and $\mu$   supported 
by the diagonal matrices, with the diagonal entries being independent identically distributed,  then $H_\lb$
is} the Anderson model on the product graph $\B\times\G$.  {The Anderson model is a particular case of the matrix Anderson model:  $H_{\lambda}$ as in \eqref{Hlambda} with $A=\frac12 A_\G$.

The Anderson model \ci{And}  describes} the motion of a quantum-mechanical
electron in a crystal with impurities.  {If $\lb \not= 0$, the following picture is widely accepted \cite{And,AALR}: In one and two dimensions  the Anderson model always  exhibits  localization (i.e.,
 pure point spectrum with exponentially decaying eigenfunctions). In three and more dimensions 
both localized and extended states (i.e., absolutely continuous spectrum) are expected for small
disorder, with the energies of extended and localized  states being separated by a ``mobility edge''.

We  have by now a good understanding of localization.  For the Anderson model there is  always localization in dimension $d=1$ \cite{KS1,CKM} and on the one-dimensional strip \cite{Lac,KLS}. For dimensions $d \ge 2$, with suitable regularity conditions on the single site probability distribution there is always  localization 
at high disorder or at the edges of the spectrum \cite{FMSS,DLS,SW,CKM,DK,K3,A,AM,Wang,Klo}. The expected localization at all disorders in dimension $d=2$ remains an open problem.

On the other hand, there are no results on  the expected existence of absolutely continuous spectrum for the Anderson model in dimension $d=3$ or higher.  Existence of absolutely continuous spectrum has only been proven for the Anderson model
on the Bethe lattice and similar tree like structures.  Klein proved that, at low disorder, the Anderson model on the Bethe lattice
  has purely absolutely
continuous spectrum in a nontrivial  interval  \cite{K1,K2,K9}     and exhibits 
ballistic behavior \cite{K8}. More recently, different proofs for the existence of absolutely
continuous spectrum on the Bethe lattice and similar tree structures have been  provided in
\cite{ASW,FHH,FHS,H,KLW}.  Absolutely continuous spectrum has also been shown in   models were certain symmetries prevent localization, e.g., \cite{SS}.

 Recently,  Froese, Halasan and Hasler  \cite{FHH} extended the hyperbolic geometry methods used in \cite{FHS,H} to an Anderson model
on the Bethe strip  with connectivity $K=2$ and width $m=2$, proving the existence of absolutely continuous spectrum in an interval at low disorder.  Their method requires  working in  the Siegel upper half plane when $m=2$ instead of working in the upper half plane as when $m=1$.  They also conjectured the analogous result for general Bethe strips.

Klein's original proof of absolutely continuous spectrum for the Anderson model on the Bethe lattice \cite{K1,K2,K9}  relied on the Implicit Function Theorem on Banach spaces and 
some crucial identities arising from a supersymmetric formalism. These ideas are extended to the Bethe strip in this article. In particular,  we  prove the conjecture  in  \cite{FHH}, providing an extension of their results  to Bethe strips of arbitrary connectivity $K\ge 2$ and width $m\in \NN$.

In a sequel to this paper  we prove  ballistic behavior  for the Anderson model in the Bethe strip at low disorder \cite{KS},  extending the results of \cite{K8}.

Going to the strip requires an extension of the supersymmetric formalism, as already seen by Klein  and Speis \cite{KSp2,KLS2} in the one-dimensional strip.  The change is akin to going from one-variable to multi-variable calculus. The formalism becomes more cumbersome: scalar quantities are replaced by matrix quantities, derivatives are replaced by partial derivatives, etc.
In particular, 
 a  difference appears between Bethe strips with width $m=1,2$ and those with bigger widths, i.e., $m=3,4,\ldots$.  If  $m=1,2$ only one 
replica of the supersymmetric variables suffices. But if 
 $m\geq 3$ one needs  $n\ge \frac m 2$  replicas,  as noted in \cite{KSp2}. This leads to more complicated function spaces, the fixed point analysis that is the crux of the proof is conducted on function spaces requiring derivatives up to  order $nm \ge \frac {m^2} 2$, not just of order $m$.  To use the Implicit Function Theorem, one needs to prove the invertibility of  certain operators in these spaces.  This was done in \cite{K2} (for $m=1$) by calculating eigenvalues and eigenfunctions of the operators explicitly, and  proving that the linear span of these eigenfunctions is dense in the relevant Banach space.
 This density argument, which relied on  results of Acosta and Klein \cite{AK},  does not carry over to the case $m\ge 2$.  In this article  we use a different approach, showing that it suffices to carry the analysis in  function spaces defined as  closures of the  linear span of certain  eigenfunctions.
 
Besides the Anderson model on $\B\times\G$, another interesting special case of  \eqref{Hlambda} is the Wegner $m$-orbital model on the Bethe lattice: Set
$A=0$ and  let the random matrix $V(x)$ be distributed as in the Gaussian Orthogonal Ensemble  (GOE).  Then  $V(x)=V(x)^t$  and the upper triangular  entries
are independent, centered, Gaussian variables with variance 1 along the diagonal and variance $\frac12$ for the off-diagonal entries.
This model was  introduced by Wegner \cite{Weg} on the lattice   $\ZZ^d$, where he studied   the limit $m\to\infty$.  
Dorokhov \cite{Dor} studied a related quasi one-dimensional model.  If $A=0$,
we will call $H_{\lambda}$ in \eqref{Hlambda} a general Wegner $m$-orbital model .


Analogous to the case of the Bethe lattice, it follows from ergodicity (the ergodic theorem in the Bethe lattice is discussed in 
 \ci[Appendix]{AK}) that the spectrum of the Hamiltonian $\, H_\lb$ is given by
\begin{equation} \label{eq-spectrum}
\sigma(H_\lb) = \sigma(\tfrac{1}{2}\Delta) + \bigcup_{V\in \supp \mu} \sigma(A+\lambda V) 
\end{equation}
with probability one \ci{P,CL}, where $\sigma(A+\lambda V)$ denotes the set of eigenvalues of the {$m \times m$} matrix $A+\lambda V$.
For each choice of $\Vv$ the spectrum of $H_\lb$ can be decomposed
into pure point spectrum, $\sigma_{pp}(H_\lb)$, absolutely continuous
spectrum, $\sigma_{ac}(H_\lb)$, and singular continuous spectrum,
$\sigma_{sc}(H_\lb)$.  Ergodicity gives the existence of sets $\Sigma_{\lb,pp}\, , \; \Sigma_{\lb,ac}\, ,
\; \Sigma_{\lb,sc} \subset {{\RR}}$,  such that $\sigma_{pp}(H_\lb) = \Sigma_{\lb,pp}
\; , \; \sigma_{ac}(H_\lb) = \Sigma_{\lb,ac}$, $\sigma_{sc}(H_\lb) = \Sigma_{\lb,sc}$
with probability one \ci{KS1,CL}.

\begin{assumptions} In this article we always make the following assumptions:
\begin{enumerate}
\item[(I)]  $K \ge 2$,  so $\B$ is not equal to $\ZZ$. 
\item[(II)] The common probability distribution  $\mu$ of the $\Sym(m)$-valued random variables $\{V(x)\}_{x\in\B}$ has finite  (mixed) moments of all orders. In particular, the characteristic function
of $\mu$,
\begin{equation}
h(M) : = \int_{{\Sym(m)}} {{\e}^{-i\Tr(MV)}d\mu (V)}\quad\text{for}\; M\, \in\,\Sym(m)\;, 
\end{equation}
is a $C^{\infty}$ function on $\Sym(m)$ with bounded derivatives.

\item[(III)] Let   $a_{\mathrm{min}} :=a_1\leq a_2\leq \ldots \leq a_m= : a_{\mathrm{max}}$ be the eigenvalues of the  ``free vertical operator'' $A$, and set
\begin{equation} \label{IAK}
I_{A,K} = \bigcap_{i=1}^n (-\sqrt{K}+a_i,\sqrt{K}+a_i) = (-\sqrt{K}+a_{\mathrm{max}},\sqrt{K}+a_{\mathrm{min}}).
\end{equation}
The interval $I_{A,K}$ is not empty, i.e.,
\begin{equation}\label{IAKhyp}
a_{\mathrm{max}}-a_{\mathrm{min}} < 2\sqrt{K}.
\end{equation}  

\end{enumerate}
\end{assumptions}

It would suffice to require $\mu$ to have finite  moments up to order $2m \left\lceil\frac m 2 \right\rceil$; we require all moments for simplicity.
Note also that for a fixed $A$ we can always obtain \eq{IAKhyp}  by taking $K$ big enough.

If $I\subset I_{A,K}$ is a  compact interval, it follows from  \eqref{eq-spectrum}  that $I\subset\sigma(H_\lb)$ for $\lambda$ small enough.
We will prove that under the above assumptions $H_\lambda$ has 
``extended states'' in $I$ for small disorder.

\begin{theorem} \label{main} For any compact interval $I\subset I_{A,K}  $ 
there exists $\lb(I) > 0$, such that for any $\lb$ with
$ |\lb| < \lb(I) $ the spectrum of $\;H_\lb$ in $ I$  is 
purely absolutely continuous with probability one, i.e., 
 we have $\Sigma_{\lb,ac} \cap I =  I$ and  
$\Sigma_{\lb,pp} \cap I = \Sigma_{\lb,sc} \cap I=\emptyset $.
\end{theorem}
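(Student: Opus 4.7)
The overall strategy, following Klein \ci{K1,K2,K9}, is to establish a uniform-in-$\ve$ bound on suitable moments of the imaginary part of the $m\times m$ diagonal matrix elements of $(H_\lb-E-i\ve)^{-1}$, uniformly in $E\in I$; such a bound implies bounded spectral densities on $I$ and hence purely absolutely continuous spectrum there almost surely. Because $\B$ is a tree, the diagonal resolvent at the origin is a rational function of the matrix-valued Weyl--Titchmarsh functions $\Gamma_y(E+i\ve)$ attached to the $K+1$ forward subtrees at $0$, each of which satisfies a local self-consistent recursion of the form $\Gamma = F(\lb V-A+E+i\ve,\{\Gamma_y\}_y)$. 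The problem thus reduces to controlling the stationary distribution of this matrix-valued random recursion, averaged over the i.i.d.\ potentials.

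The first technical step is to linearize this nonlinear random dynamics by introducing a supersymmetric integral representation. With $n\ge\lceil m/2\rceil$ SUSY replicas (as the authors note, fewer replicas are insufficient beyond $m=2$), the average over $V(x)$ converts the recursion into a fixed-point equation $\Ph=\Ff_{\lb,E,\ve}(\Ph)$ for a density $\Ph$ on the Siegel upper half-space of symmetric $n\times n$ complex matrices with positive imaginary part. At $\lb=\ve=0$ the map $\Ff_{0,E,0}$ has an explicit fixed point $\Ph_0$ built from the spectral data of $A$ and the Green's function of $\tfrac12\De$, and the assumption $E\in I_{A,K}$ is exactly what guarantees that the relevant covariance matrices are positive so that $\Ph_0$ is integrable. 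The plan is to apply the Implicit Function Theorem in a Banach space $\X$ of smooth functions on the Siegel domain (whose norm controls partial derivatives up to order $nm$) to obtain a $C^\infty$ family of fixed points $\Ph_{\lb,E,\ve}$ for small $|\lb|$, extending continuously down to $\ve=0$ uniformly in $E\in I$. Extracting appropriate moments of $\Ph_{\lb,E,\ve}$ then yields the required resolvent bound and the theorem follows.

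The hard part is verifying the hypothesis of the IFT, namely invertibility of $\mathbf{1}-D\Ff_{0,E,0}$ on $\X$. In the lattice case ($m=1$), Klein \ci{K2} diagonalized this linearization using explicit eigenfunctions and completed the argument via a density result of Acosta and Klein \ci{AK}; as the authors point out, that density step does not extend to $m\ge 2$. The plan to circumvent this is to \emph{define} $\X$ itself as the closed linear span, in a suitable derivative norm, of the eigenfunctions of $D\Ff_{0,E,0}$, so that invertibility reduces to the explicit check that the computed spectrum stays uniformly away from $1$ for $E\in I_{A,K}$. The remaining nontrivial work, which should occupy the bulk of the paper, is to verify that this abstractly defined $\X$ is preserved by the nonlinear map and that $\Ff_{\lb,E,\ve}$ is smooth on $\X$ in all parameters; this requires showing that multiplication by the characteristic function $h$ of $\mu$ (smooth with bounded derivatives by assumption (II)) and the $K$-fold pointwise product coming from the tree branching both respect the closure, replacing the density argument of the one-dimensional theory.
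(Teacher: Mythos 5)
Your overall architecture matches the paper's: a supersymmetric replica representation converts the self-consistent resolvent recursion on the tree into a fixed-point equation; the Implicit Function Theorem is applied at $\lb=0$, $\eta=0$; the key obstruction (the density argument of Acosta--Klein failing for $m\ge 2$) is handled by working on spaces built as closures of the linear span of the explicit eigenfunctions of the linearization; and the resulting uniform $L^2$ bound on the diagonal Green's matrix gives purely absolutely continuous spectrum via a Fubini/Fatou argument together with Theorem~4.1 of Klein's Bethe-lattice paper. All of that is the paper's route.

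There are, however, two substantive inaccuracies in your description of how the supersymmetric step actually works, both of which would derail the details if carried through. First, the fixed-point unknown does \emph{not} live on the Siegel upper half-space of symmetric $n\times n$ complex matrices; that is the Froese--Halasan--Hasler hyperbolic-geometry framework, not the one used here. In the supersymmetric approach, after reducing supersymmetric functions of $m\times n$ supermatrices to functions of $\BPhi^{\od 2}$, the unknown $\ze_{\lb,z}$ is a function on $\Sym^+(m)$, the cone of non-negative real symmetric $m\times m$ matrices, and it is a characteristic-function-like object $\E\big(\exp\big(\tfrac{i}{4}\Tr(G^{(0)}_\lb(z)\,\bvp^{\ot})\big)\big)$ rather than a density. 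The replica number $n$ enters only through the range of the map $\bvp\mapsto\bvp\bvp^t$ (surjectivity onto $\Sym^+(m)$ needs $n\ge m/2$) and through the number of Grassmann derivatives the norms must control; the matrix variable stays $m\times m$. Second, a single fixed-point equation for $\ze_{\lb,z}$ controls $\E(G_\lb)$ but not $\E(|G_\lb|^2)$, which is what the absolute-continuity criterion needs. You must double the formalism: introduce two independent supermatrices $\BPhi_\pm$, the tensor-product space $\K=\Hh\otimes\Hh$ with its own norms $\hnn\cdot\hnn_p$, and a second fixed-point function $\xi_{\lb,z}$ on $(\Sym^+(m))^2$ satisfying $\xi_{\lb,z}=\T\cB_{\lb,z}\xi_{\lb,z}^K$. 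Invertibility of the linearization on $\K_\infty$ then uses $\Cc_E=C_E\otimes\overline{C}_E$ and the explicit eigenvalues $\lambda_J\bar\lambda_{J'}$; none of this is visible in your single-equation sketch.

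One further gap is worth flagging in the closure-of-eigenfunctions idea. As you state it, the Banach space $\X$ would be the closure of the span of eigenfunctions of $D\Ff_{0,E,0}$, which is $E$-dependent (the eigenfunctions are $p_J(\bvp^\ot)\,e^{-i\bvp\cdot A_E\bvp}$ with $A_E$ depending on $E$). For the IFT argument to run uniformly over a compact $E$-interval, and for the continuity of $(\lb,E,\eta)\mapsto\ze_{\lb,E+i\eta}$, you want a \emph{fixed} Banach space. The paper resolves this by defining $\Hh$ and $\Hh_\infty$ as closures of $\PE(m)$ (polynomials times Gaussians over \emph{all} admissible complex symmetric $B$ with $\re B>0$) and then proving as a separate lemma that this closure coincides with the closure of $\PE(B)$ for any single such $B$, in particular $B=-iA_E$. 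That connectedness argument on the parameter set $\set{B:\re B>0}$ is precisely what replaces the Acosta--Klein density result, and it is what makes the ``closure of eigenfunctions'' trick robust and $E$-independent. Without that lemma your proposed construction of $\X$ is not obviously well-posed as $E$ varies across the interval $I$.
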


In particular we get the following interesting cases.

\begin{cor} 
Let $H_\lb$ be a matrix Anderson model {(i.e., $A=\frac12 A_\G$)} on $\B\times \G_m$,  $m\in \{2,3,4,\ldots\}$, where
$\G_2$ denotes the finite graph with two vertices and one edge connecting them, and
for $m\ge 3$, 
$\G_m$ denotes the finite loop with $m$ vertices where every vertex is connected to two neighbors. Let the open intervals  $I_m$ be defined by
\beq
I_m =\begin{cases} \big(-\sqrt{K}+\tfrac 1 2\,,\, \sqrt{K}-\tfrac 1 2\big) & \text{if}\quad  m=2\\
 \big(-\sqrt{K}+1\,,\,\sqrt{K}+\cos (\tfrac {m-1}m\pi) \big) & \text{if}\quad  m=3,5,7,\ldots\\
 \big(-\sqrt{K}+1\,,\,\sqrt{K}-1 \big) & \text{if}\quad  m=4,6,8,\ldots
\end{cases}.
\eeq
Then for all compact intervals $I\subset I_m$, if $\lambda$ is small enough,  the 
matrix Anderson model  $H_\lb$ has {purely absolutely continuous spectrum in $I$ with probability one.}
\end{cor}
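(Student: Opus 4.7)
The plan is to reduce the Corollary directly to Theorem~\ref{main} by identifying, for each of the graphs $\G_m$, the spectrum of the free vertical operator $A=\tfrac12 A_{\G_m}$ and then checking that the interval $I_m$ claimed in the Corollary coincides with the interval $I_{A,K}=(-\sqrt{K}+a_{\max},\sqrt{K}+a_{\min})$ defined in Assumption (III). There is essentially no analytic work beyond diagonalising the circulant adjacency matrix of the cycle graph; the only care needed is the parity case split for the minimum eigenvalue.

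First, for $m=2$ the graph $\G_2$ has the $2\times 2$ adjacency matrix with zero diagonal and off-diagonal entries equal to $1$; its eigenvalues are $\pm 1$, so $A=\tfrac12 A_{\G_2}$ has eigenvalues $\pm\tfrac12$. Therefore $a_{\min}=-\tfrac12$ and $a_{\max}=\tfrac12$, so $I_{A,K}=(-\sqrt K+\tfrac12,\sqrt K-\tfrac12)=I_2$. For $m\ge3$ the graph $\G_m$ is the cycle $C_m$, whose adjacency matrix is circulant, diagonalised by the discrete Fourier basis; its eigenvalues are $2\cos(2\pi k/m)$, $k=0,1,\ldots,m-1$, so $A=\tfrac12 A_{\G_m}$ has eigenvalues $\cos(2\pi k/m)$. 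Clearly $a_{\max}=\cos 0=1$.

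The case split comes from computing $a_{\min}=\min_k\cos(2\pi k/m)$. If $m$ is even, then $k=m/2$ gives $\cos\pi=-1$, so $a_{\min}=-1$ and
\[
I_{A,K}=(-\sqrt K+1,\sqrt K-1)=I_m.
\]
If $m\ge3$ is odd, the closest integer to $m/2$ is $(m\pm1)/2$, giving
\[
a_{\min}=\cos\!\big(\tfrac{(m-1)\pi}{m}\big)=-\cos(\pi/m),
\]
so $I_{A,K}=(-\sqrt K+1,\sqrt K+\cos(\tfrac{m-1}{m}\pi))=I_m$.

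Next I would verify that Assumption (III) holds, i.e.\ that $I_{A,K}$ is nonempty. In the even case $m\ge4$ we have $a_{\max}-a_{\min}=2<2\sqrt K$ since $K\ge2$; for $m=2$, $a_{\max}-a_{\min}=1<2\sqrt K$; for odd $m\ge 3$, $a_{\max}-a_{\min}=1+\cos(\pi/m)<2\le 2\sqrt K$. In each case the inequality \eqref{IAKhyp} holds. The matrix Anderson assumption that $\mu$ is supported on diagonal matrices with i.i.d.\ entries does not affect Assumption (II): if the common single-site distribution has all moments (for instance for bounded or Gaussian entries), then $\mu$ has all mixed moments as required. Under such a regularity hypothesis on the single-site law, Theorem~\ref{main} applied with the above eigenvalue computation yields, for every compact $I\subset I_m$, a threshold $\lambda(I)>0$ such that for $|\lambda|<\lambda(I)$ the operator $H_\lambda$ has purely absolutely continuous spectrum in $I$ almost surely, which is exactly the Corollary. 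The ``main obstacle'' here is genuinely just the bookkeeping of the parity split in diagonalising $A_{\G_m}$; all the substantive work sits inside Theorem~\ref{main}.
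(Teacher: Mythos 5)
Your proof is correct and is exactly the argument the paper intends (the Corollary is stated without proof because it reduces directly to Theorem~\ref{main} after diagonalising $A=\tfrac12 A_{\G_m}$, which is what you do). One small inaccuracy worth flagging: the hypothesis of the Corollary is that $H_\lb$ is a \emph{matrix} Anderson model, i.e.\ $A=\tfrac12 A_\G$ with $\mu$ an arbitrary $\Sym(m)$-valued distribution satisfying the paper's blanket Assumption~(II) --- $\mu$ need not be supported on diagonal matrices; that extra restriction defines the (scalar) Anderson model, which is a special case, so your aside about diagonal support is extraneous though harmless.
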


\begin{cor}
 Let $H_\lb$ be a {general} Wegner $m$-orbital model on the Bethe lattice {(i.e., $A=0$)}.
 Then for all compact intervals $I\subset (-\sqrt{K}\,,\,\sqrt{K})$, if $\lambda$ is small enough,
the general Wegner $m$-orbital model  $H_\lb$ has {purely absolutely continuous spectrum in $I$ with probability one.}
\end{cor}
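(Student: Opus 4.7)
The corollary is a direct specialization of Theorem~\ref{main} to the case $A = 0$, so the plan is simply to verify that the three standing assumptions of the theorem hold in this setting and then read off the conclusion.

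Setting $A = 0$ forces $a_1 = \cdots = a_m = 0$, hence $a_{\mathrm{min}} = a_{\mathrm{max}} = 0$. The interval \eqref{IAK} then collapses to
\begin{equation*}
I_{A,K} = (-\sqrt{K},\sqrt{K}),
\end{equation*}
while the non-degeneracy condition \eqref{IAKhyp} becomes $0 < 2\sqrt{K}$, which is automatic once Assumption~(I) ($K \geq 2$) is in force. Assumption~(III) is thus trivially satisfied.

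Assumption~(II) is part of the standing hypothesis of the paper and is hence implicit in the corollary: the common distribution $\mu$ of the $\Sym(m)$-valued potentials must have finite moments of all orders and a $C^\infty$ characteristic function $h$ with bounded derivatives. For the motivating GOE example recalled in the introduction, $h$ is a centered Gaussian on $\Sym(m)$, which is manifestly smooth with bounded derivatives; more generally any $\mu$ qualifying as a ``general Wegner $m$-orbital'' potential is taken to satisfy~(II).

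Once (I)--(III) have been checked, for any compact interval $I \subset (-\sqrt{K},\sqrt{K})$ Theorem~\ref{main} directly produces a threshold $\lambda(I) > 0$ and asserts that, for $|\lambda| < \lambda(I)$, the spectrum of $H_\lambda$ in $I$ is purely absolutely continuous with probability one. There is no genuine obstacle to overcome here; the entire content of the corollary is that the choice $A = 0$ lies in the parameter region to which Theorem~\ref{main} applies, and the computation of $I_{A,K}$ in that case reproduces the advertised interval $(-\sqrt{K},\sqrt{K})$.
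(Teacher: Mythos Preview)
Your proposal is correct and matches the paper's approach: the corollary is stated in the paper without a separate proof, being an immediate specialization of Theorem~\ref{main} to the case $A=0$, and your verification that Assumptions (I)--(III) hold and that $I_{A,K}=(-\sqrt{K},\sqrt{K})$ is exactly what is required.
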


The key object to be analyzed is the $m\times m$  matrix Green's function of $H_\lb$:   
\begin{equation}
\left[G_{\lb}\, (x,y;z)\right]_{j,k}\; = \;\left\langle {x,j|(H_{\lb} -z)^{-1}|y,k} \right\rangle, 
\end{equation}
where $x,y \in {\B}$, $j,k \in\{1,\ldots,m\}$, and $z = E + i\eta$ with $E \in \RR$, $\eta> 0$.  Here $|x,k\rangle$ denotes the $\CC^m$-valued function $u(y)=\delta_{x,y} e_k$, where $e_k$ is the $k$-th canonical basis vector in $\CC^m$,
so $\{|x,k\rangle; \; x\in\B,\,k\in\{1,\ldots,m\}\}$ is an orthonormal basis for $\ell^2(\B,\CC^m)$.
{Similarly} to the Bethe lattice (see \ci{AK} for a discussion of the integrated density of states in the Bethe lattice) 
we define the integrated density of states $N_\lb (E)$ by
\begin{equation}
N_\lb (E)\; = \;\tfrac1m\;\E\left(\sum_{k=1}^m\left\langle {x,k|\chi_{(-\infty,E]}(H_{\lb})|x,k} \right\rangle\right)\;\;\;
\mbox{for any}\;\; x \in \B\;, 
\end{equation}
where $\E$ denotes the expectation with respect to the distribution of $\{V(x)\}_{x\in\B}$.

 For any  $x \in \B$ and any potential $\Vv$,
$  G_{\lb}\, (x,x; E +i\eta) $ is a continuous function of  
$  (\lb, E,\eta) \in \RR \times \RR \times (0, \infty)$; to prove it one uses the  
resolvent identity plus the fact that,  as long as $\eta > 0$, we have  
$ \lb \Vv(\lb \Vv - i\eta)^{-1}  \to 0$ strongly as $\lb \to 0$.
 It then follows from the Dominated Convergence Theorem  that 
 $\E (G_{\lb}\, (x,x; E +i\eta) )$ and $ \E (|G_{\lb}\, (x,x; E +i\eta)|^2) $ are also
continuous functions of  $  (\lb, E,\eta) \in \RR \times \RR \times (0, \infty)$. The crucial observation  is that we can let $\eta \downarrow 0$  inside $I_{A,K}$.

\begin{theorem} \label{cor} For any compact interval $I\subset I_{A,K}$ there exists $\lb(I) > 0$ such that:
\begin{enumerate} 
\item[(i)] For all $x \in \B$ 
the continuous functions 
\begin{align}
(\lb, E,\eta) &\in (-\lb(I),\lb(I)) \times I \times (0, \infty) \;\longrightarrow\;
 \E (G_{\lb}\, (x,x; E 
  +i\eta) ),\\
(\lb, E,\eta) &\in (-\lb(I),\lb(I)) \times I \times (0, \infty) \;\longrightarrow\;
 \E (|G_{\lb}\, (x,x; E +i\eta)|^2  )
\end{align}
 have continuous extensions to  $ (-\lb(I),\lb(I)) \times I \times [0, \infty) $.

 \item[(ii)] For any $\lb$ with
$ |\lb| < \lb(I) $ the integrated density of states $N_\lb (E)$ is continuously differentiable
on $ \mathring{I}$, the interior of  ${I}$,  and for all $E \in  \mathring{I}$ we have 
\beq
N'_\lb (E) =\lim_{\eta \downarrow 0} \frac{1}{\pi} \mbox{\em Im}\, \E \Tr(G_{\lb}\, (x,x; E +i\eta) ) \qtx{for any} x \in \B.
\eeq

 \item[(iii)] For all $x \in \B$ we
have 
\begin{equation}
\sup_{\lb;\, |\lb| < \lb(I)}\,  \sup_{E \in I} \, \sup_{\eta;\, 0 <\eta }\,
\E (\Tr(|G_{\lb}\, (x,x; E +i\eta)|^2)  )\;\;<\;\; \infty .     \label{sup}
\end{equation}
 \end{enumerate}
\end{theorem}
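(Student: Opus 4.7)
The plan is to establish all three parts simultaneously through a fixed-point analysis for the averaged Green's function in a suitable supersymmetric representation, adapting Klein's approach on the Bethe lattice (\cite{K1,K2,K9}) to the matrix/strip setting. The starting point is the recursion forced by the tree structure of $\B$: deleting $x$ splits $\B$ into $K+1$ forward subtrees, which gives
\begin{equation*}
G_\lb(x,x;z) \;=\; -\bigl(z - A - \lb V(x) - \tfrac12\sum_{y\sim x}\Ga_\lb(y,x;z)\bigr)^{-1},
\end{equation*}
where each boundary self-energy $\Ga_\lb(y,x;z)$ is an $m\times m$ matrix Green's function of the restriction of $H_\lb$ to the forward subtree rooted at $y$. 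Stationarity and independence make the $\Ga_\lb(y,x;z)$ i.i.d.\ with a common distribution $\nu_{\lb,z}$ that satisfies a distributional recursion driven by $\mu$ and the characteristic function $h$.

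Next I would encode $\nu_{\lb,z}$ via its super-Fourier/Laplace transform. Using Berezin integration over $n$ replicas of bosonic and fermionic variables with $n\ge\lceil m/2\rceil$ (as flagged in the introduction), the expected resolvent $\E(G_\lb(x,x;z))$ and the expected $\E(|G_\lb(x,x;z)|^2)$ can be expressed as $n$-replica super-integrals against a symmetric density $\vp_{\lb,z}$ satisfying a fixed-point equation $\vp_{\lb,z} = \T_{\lb,z}(\vp_{\lb,z})$ on an appropriate Banach space $\X$ of $\Sym(m)^n$-type functions with derivatives up to order $nm$. The nonlinear operator $\T_{\lb,z}$ is built from $h$ (which is $C^\infty$ with bounded derivatives by Assumption (II)) and a linear ``free'' transfer piece encoding the Laplacian on $\B$. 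At $\lb=0$ the fixed point $\vp_{0,z}$ is explicit: it corresponds to the free matrix Green's function $(z-A-\tfrac12\De)^{-1}$, which extends continuously (in fact analytically) across $I_{A,K}$ channel by channel because $I_{A,K} = \bigcap_i (-\sqrt K + a_i,\sqrt K + a_i)$ lies strictly inside each shifted free spectral band.

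I would then apply the Implicit Function Theorem to $F(\vp,\lb,E,\eta) = \vp - \T_{\lb,E+i\eta}(\vp)$ around $(\vp_{0,z},0,E,\eta)$. This requires showing that the linearization $I - D\T_{0,z}$ is a topological isomorphism of $\X$ for all $z \in I \cup \{E+i\eta:\eta>0\}$, with a bound uniform on the compact set $(-\lb(I),\lb(I))\times I\times[0,\infty]$ (where $\infty$ is included using the crude bound $\|G_\lb(x,x;z)\|\le 1/\eta$). The invertibility would be proved by explicitly diagonalizing the linear transfer piece on a tractable family of eigenfunctions; since the density argument of \cite{AK} used in the $m=1$ case does not extend, I would, following the strategy flagged in the introduction, define $\X$ itself as the closure of the linear span of these eigenfunctions, so that spectral information on the dense set automatically controls the full operator. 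The IFT then yields a unique $\vp_{\lb,z}\in\X$, continuous in $(\lb,E,\eta)\in(-\lb(I),\lb(I))\times I\times[0,\infty)$, for $\lb(I)$ small enough.

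With this fixed point in hand, part (i) follows by reading off $\E(G_\lb)$ and $\E(|G_\lb|^2)$ as continuous linear functionals of $\vp_{\lb,z}$. For part (ii), the Stieltjes transform identity $N_\lb(E) - N_\lb(E') = \tfrac{1}{m\pi}\lim_{\eta\downarrow 0}\int_{E'}^{E}\im\,\E\,\Tr G_\lb(x,x;t+i\eta)\,dt$, combined with the continuous extension from (i), gives the claimed $C^1$ formula for $N'_\lb$. Part (iii) is immediate from continuity of $\E(|G_\lb|^2)$ on the compact set $[-\lb(I),\lb(I)]\times I\times[0,\infty]$. The main obstacle is the invertibility of $I - D\T_{0,z}$ uniformly in $z$ down to $\eta=0$ on the $n$-replica Banach space for $m\ge 2$: this is where the matrix geometry, the need for several replicas, and the requirement to work modulo the $\Sym(m)^n$ symmetries all combine, and where the alternative to the Acosta--Klein density argument must carry the weight of the proof.
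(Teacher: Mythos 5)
Your proposal follows essentially the same route the paper takes: express $\E(G_\lb)$ and $\E(|G_\lb|^2)$ as supersymmetric integrals against fixed points $\ze_{\lb,z}$ and $\xi_{\lb,z}$ of nonlinear maps built from the supersymmetric Fourier transform $T$ and the characteristic function $h$, apply the Implicit Function Theorem around the explicit $\lb=0$, $\eta=0$ solution, and obtain the linearization's invertibility by diagonalizing it on a family of polynomial-times-Gaussian eigenfunctions whose closed span is taken to be the working Banach space. One refinement you should be explicit about: the eigenfunctions of the linearization $C_E$ at energy $E$ lie in $\PE(-iA_E)$, which depends on $E$; if you literally define $\X$ as the closed span of the eigenfunctions at a single $E$, you still need to know this closure is independent of $E$ (and that $\zeta_{0,E'}$, $B_{\lb,z}$, $T$, etc.\ for nearby parameters all live on, and act boundedly on, the \emph{same} space) in order for the IFT argument and the continuity in $(\lb,E,\eta)$ to make sense. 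The paper handles this via Lemma~\ref{lem-PEB}: the closure of $\PE(B)$ is the same for every complex symmetric $B$ with $\re B>0$, and coincides with the closure of the full $\PE(m)$; this simultaneously makes the space $E$-independent, makes it visibly invariant under $T$ and multiplication by Gaussian-type factors, and still keeps the eigenfunctions dense so that the spectrum of $C_E$ is exactly the computed $\{\lambda_J\}\cup\{0\}$. With that ingredient supplied, the rest of your outline (parts (ii) and (iii) as consequences of (i) via the Stieltjes transform and a compactness-in-$\eta$ argument) matches the paper.
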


Theorem \ref{main}  will follow from part (iii) in the theorem, an immediate consequence of  part (i).  Realizing that
$ \frac1m\,\E\, \Tr (G_{\lb}\, (x,x; E +i\eta) )$ is 
the Stieltjes transform of the integrated density of states, we see that
part (ii) also  follows from part (i). Thus, we only need to prove part (i).

Since $A$ is symmetric, there exists an orthogonal matrix $O$ such that $O^t A O$ is diagonal. Then $\Uu=1\otimes O$ is a unitary transformation on
 $\ell^2(\B,\CC^m)$, such that
$\Uu^*H_\lambda \Uu = \frac12 \Delta\otimes 1 \,+\,1\otimes O^t A O \,+\,\lambda \Uu^* \Vv\Uu$.
Now $\Uu^*\Vv\Uu = \bigoplus_{x\in\B} O^tV(x)O$ is a matrix potential like $\Vv$ itself. 
Hence by conjugating the distribution $\mu$ of the matrix potential $V(x)$
we can assume, without loss of generality,  that $A$ is diagonal and we will do so in the proofs.
Thus from now on we assume 
\begin{equation}
 A = \diag\,(a_1,\ldots, a_m)\;, \qtx{i.e.,} A_{j,k}=\delta_{j,k}a_k \qtx{for} j,k=1,2,\dots,m.
\end{equation}

This article is organized as follows:  In Section~\ref{sec-super}  we   introduce the supersymmetric formalism
and the crucial supersymmetric function spaces. In Section~\ref{sec-avgreen} we use the supersymmetric replica trick to rewrite the matrix Green's function, and derive a fixed point equation for a certain supersymmetric function from which we calculate the averaged Green's matrix. In Section~\ref{sec-avsqgreen} we obtain analogous results for the averaged squared matrix Green's function.
In Section~\ref{sec-fxp} we  perform a fixed point analysis using the Implicit Function Theorem
to show the existence of continuous extensions for the solutions of the fixed point equation to energies on the real line.  Finally, in Section~\ref{sec-proofs}  we show that these continuous extensions   yield the proofs of Theorems~\ref{main} and \ref{cor}.


\section{The supersymmetric formalism \label{sec-super}}


The formalism described here can be found in more detail in
\cite{B,E,K,KSp2}. In particular,  \cite{KSp2} contains all the  important formulas we need.   In this section we introduce our notation and  review the relevant results.

\subsection{Supervariables and supermatrices}

Given  $m,n \in \NN$, let $\{\psi_{k,\ell}, \overline{\psi}_{k,\ell}; \; k=1,\ldots,m,\,\ell=1,\ldots,n \}$  be
$2mn$ independent Grassmann variables, { i.e.,} they are generators of a
Grassmann algebra isomorphic to $\Lambda^{2mn}(\RR)$. In particular, they all anti-commute and the 
algebra is given by the free algebra over $\RR$ generated by these symbols modulo the ideal generated by the anti-commutators 
\[
\psi_{i,j} \psi_{k,\ell} + \psi_{k,\ell} \psi_{i,j} , \quad  \overline{\psi}_{i,j} \overline{\psi}_{k,\ell} + \overline{\psi}_{k,\ell} \overline{\psi}_{i,j},  \quad \overline{\psi}_{i,j} {\psi}_{k,\ell} + {\psi}_{k,\ell} \overline{\psi}_{i,j} ,
\]
where $ i,k=1,\ldots,m$ and $j,\ell=1,\ldots,n$.
This algebra is finite dimensional and will be denoted by $\Lambda(\BPsi)$, where 
$\BPsi$ denotes the matrix of pairs $\BPsi=(\overline{\psi}_{k,\ell},\psi_{k,\ell})_{k,\ell}$.
Its complexification will be denoted by
$\Lambda_\CC(\BPsi)=\CC\otimes_\RR \Lambda(\BPsi)$.
We denote the set of one forms (linear combinations of the generators) by $\Lambda^1(\BPsi)$.
Sometimes we will also allow to add and multiply expressions from different Grassmann algebras
$\Lambda(\BPsi)$ and $\Lambda(\BPsi')$. In this case, sums and products have to be understood in the Grassmann algebra
 $\Lambda(\BPsi,\BPsi')$, which is generated by the entries of $\BPsi$ and $\BPsi'$ as independent
Grassmann variables.

A supervariable is an element of $\RR^2\oplus\Lambda^1(\BPsi)\oplus\Lambda^1(\BPsi)$.
We introduce variables $\varphi_{k,\ell} \in \RR^2$ and consider the supervariables  
$\phi_{k,\ell}=(\varphi_{k,\ell},\overline{\psi}_{k,\ell},\psi_{k,\ell})$.
The collection $\BPhi=(\phi_{k,\ell})_{k,\ell}$
will be called a $m \times n$ supermatrix.
More generally, an $m\times n$ matrix 
$\tilde\BPhi=(\tilde\varphi_{k,\ell},\overline{\tilde{\psi}}_{k,\ell},\tilde\psi_{k,\ell})_{k,\ell}\,\in \,
\left[\RR^2\oplus\Lambda^1(\BPsi)\oplus\Lambda^1(\BPsi)\right]^{m\times n}$
will be called a supermatrix  if all the 
appearing one-forms $\overline{\tilde{\psi}}_{k,\ell},\tilde\psi_{k,\ell}$, $k=1,2\ldots,m$ and $\ell=1,2\ldots,n$, are linearly independent.
The collection of all supermatrices is a dense open subset of the vector space
$\left[\RR^2\oplus\Lambda^1(\BPsi)\oplus\Lambda^1(\BPsi)\right]^{m\times n}$
 and will be denoted by $\Ll_{m,n}(\BPsi)$, or just $\Ll_{m,n}$.
Linear maps defined on $\Ll_{m,n}(\BPsi)$ have to be understood as restrictions of linear maps 
defined on $\left[\RR^2\oplus\Lambda^1(\BPsi)\oplus\Lambda^1(\BPsi)\right]^{m\times n}$.

Supermatrices $(\BPhi_i)_i$ are said to be independent if
$\BPhi_i\in\Ll_{m,n}(\BPsi_i)$ for all $i$, and all the entries of the different
$\BPsi_i$ are independent Grassmann variables.

We also consider  matrices $\bvp=(\varphi_{k,\ell})_{k,\ell}$ with entries in $\RR^{2}$. Writing each entry
$\varphi_{k,\ell}$   as a row vector, $\bvp$ may be considered
as $m\times 2n$ matrix {with real entries}. 
Similarly, one may consider $\BPsi$ as $m\times 2n$ matrix with entries in $\Lambda^1(\BPsi)$.
With all these notations one may write $\BPhi=(\bvp,\BPsi)$,  splitting a
supermatrix into its real and Grassmann-variables parts.

For supervariables $\phi_1=(\varphi_1,\overline{\psi}_1,\psi_1)$ and $\phi_2=(\varphi_2,\overline{\psi}_2,\psi_2)$
we define
\begin{equation}
 \label{eq-dot-supervec}
\phi_1 \cdot \phi_2 : = \varphi_1\cdot\varphi_2+\tfrac12(\overline{\psi}_1 \psi_2 + \overline\psi_2 \psi_1)\;.
\end{equation}
By $\Phi_k$ we denote the $k$-th row vector $(\phi_{k,\ell})_{\ell=1\,\ldots,n}$ of {a supermatrix}  $\BPhi$.
For the row vectors of two supermatrices $\BPhi$ and $\BPhi'$ we set
\begin{equation} \label{eq-dot-rowvec}
\Phi'_j \cdot \Phi_k : = \sum_{\ell=1}^n \phi'_{j,\ell} \cdot \phi_{k,\ell}.
\end{equation}
We also define a dot product between supermatrices by
\begin{equation}\label{eq-dot-supermatrix}
 \BPhi' \cdot \BPhi : = 
\sum_{k=1}^m \Phi_k' \cdot \Phi_k = 
\sum_{k=1}^m \sum_{\ell=1}^n \phi'_{k,\ell}\cdot\phi_{k,\ell}\;.
\end{equation}
Furthermore, we introduce the $m\times m$ matrix $\BPhi^\ot$ with entries in $\Lambda(\BPsi)$ by
\begin{equation}\label{eq-supermatrix-tensor}
(\BPhi^{\od 2})_{j,k} :=  \Phi_j \cdot \Phi_k = \sum_{\ell=1}^n \phi_{j,\ell} \cdot \phi_{k,\ell} \;.
\end{equation}
In addition, for any complex $m\times m$ matrix $B$ and supermatrices $\BPhi, \,\BPhi'$ we set
\begin{equation}\label{eq-supermatrix-prod}
\BPhi'\cdot B \BPhi := \sum_{j,k=1}^m B_{j,k} \Phi'_j \cdot \Phi_k\;\in\;\Lambda_\CC(\BPsi)\;.
\end{equation}
Note that
\begin{equation}
\BPhi\cdot B \BPhi = \Tr(B \BPhi^\ot)\;.
\end{equation}

These definitions may be memorized as follows: If $n=1$, $\BPhi$ is a column vector indexed by $k$ 
and $B\BPhi$ corresponds to a matrix vector product and $\BPhi'\cdot B\BPhi$ is the dot product of vectors of supervariables.
For general $n$ the supermatrix  $\BPhi$ has columns indexed by $\ell=1,2,\ldots,n$, ``the $n$ replicas",  and in all definitions 
of dot products
there is an additional sum over this index.

For a supermatrix $\BPhi=(\bvp,\BPsi)$, 
where $\bvp\in\RR^{m\times 2n}$ and 
$\BPsi \in \Lambda^1(\BPsi)^{m\times 2n}$, one has  
\begin{align}
(\BPhi^\ot)_{j,k} &=
\sum_{\ell=1}^n \left\{\varphi_{j,\ell} \cdot \varphi_{k,\ell}+\tfrac12(\overline \psi_{j,\ell}\psi_{k,\ell}+\overline \psi_{k,\ell}\psi_{j,\ell}) \right\}
\label{eq-BPhi1}\\
\nonumber 
&=\sum_{\ell=1}^n \left\{\varphi_{j,\ell} \cdot \varphi_{k,\ell}+ \left[ \begin{matrix} \,\overline{\psi}_{j,\ell} & \psi_{j,\ell} \end{matrix}\right]
\left[\begin{matrix} \,0 & \tfrac12 \\ -\tfrac12 & 0 \end{matrix}\right]
\left[\begin{matrix} \overline{\psi}_{k,\ell} \\ \psi_{k,\ell} \end{matrix}\right]\right\}\;.
\end{align}
It follows that
\beq \label{eq-BPhi2}
\BPhi^\ot = \bvp^\ot\,+\,\BPsi^\ot\,, \qtx{with}
\bvp^\ot:=\bvp\bvp^t\;\text{and}\;  \BPsi^\ot:=\BPsi J \BPsi^t\; ,
\eeq
where $J$ is the $2n\times 2n$ matrix consisting of $n$ blocks 
$\left[\begin{smallmatrix} \,0 & \frac12 \\ -\frac12 & 0 \end{smallmatrix}\right]$ along the diagonal.

Given a matrix $B$ as in \eq{eq-supermatrix-prod} and $\bvp', \bvp\in \RR^{m\times 2n}$, we write
\begin{equation}\label{eq-supermatrix-prod25}
\bvp'\cdot B \bvp := \sum_{j,k,\ell} B_{j,k} \varphi'_{j,\ell} \cdot \varphi_{k,\ell} = \Tr((\bvp')^t B \bvp )\;\in\;\CC\; .
\end{equation}

A complex superfunction with respect to $\Lambda(\BPsi)$ is a function 
$F:\RR^{m\times 2n}\to \Lambda_\CC(\BPsi)$.
Let $\beta_{i} \in \Lambda(\BPsi)$ for $i\in\{1,\ldots,2^{2mn}\}$ 
be a basis {for} $\Lambda(\BPsi)$ over $\RR$. 
Each $\beta_{i}$ is a polynomial in the entries of $\BPsi$ (since we required the entries of $\BPsi$ to be independent)
and $F$ is of the form  
\begin{equation} \label{eq-superfct-expand}
F(\bvp) = \sum_{i=1}^{2^{2mn}}\, F_i(\bvp)\, \beta_i\;,\quad\text{where}\quad F_i\,:\,\RR^{m\times 2n}\,\to\,\CC\;.
\end{equation} 
We interpret this as a function $F(\BPhi)$ where 
$\BPhi=(\bvp,\BPsi)$. In this sense the map $\BPhi\mapsto \BPhi\cdot B\BPhi$ as in \eqref{eq-supermatrix-prod} can be considered as a 
superfunction.
Similarly one can define superfunctions $F(\BPhi_1,\ldots,\BPhi_k)$, of $k$  independent supermatrices,
using the Grassmann algebra $\Lambda((\BPsi_j)_{j\in\{1,\ldots,k\}})$.

We write $F\in C(\Ll_{m,n})$ and $F\in C^k(\Ll_{m,n})$ if 
$F_i\in C(\RR^{m\times 2n})$ or $F_i\in C^k(\RR^{m\times 2n})$ respectively,
for all $i$ in the expansion \eqref{eq-superfct-expand}.
Similarly, we write $F\in\Ss(\Ll_{m,n})$ if $F_i\in\Ss(\RR^{m\times 2n})$, the Schwartz space.

Let us now define the integral over the Grassmann variables in the following way.
For a fixed pair $k,\ell$ we  write $F=F(\BPhi)$ as  $F=F^{k,\ell}_0+F^{k,\ell}_1\overline{\psi}_{k,\ell}+F^{k,\ell}_2\psi_{k,\ell}+
F^{k,\ell}_3\overline{\psi}_{k,\ell}\psi_{k,\ell}$
where the $F^{k,\ell}_i$ are superfunctions not depending on $\overline{\psi}_{k,\ell}$ and $\psi_{k,\ell}$.
Then
\beq
\int F\,  d\overline{\psi}_{k,\ell}\,d\psi_{k,\ell}   := 
   -F^{k,\ell}_3 \;.
\eeq
If all functions $F_i$ in the expansion \eqref{eq-superfct-expand} are in $L^1(\RR^{m\times2n})$,  we say that
$F\in L^1(\Ll_{m,n})$ and define the supersymmetric integral by
\begin{equation}\label{eq-def-DPhi}
\int F(\BPhi)\;D\BPhi = \frac 1 {\pi^{mn}}
\int\,F(\BPhi)\; \prod_{k=1}^m \prod_{\ell=1}^{n} d^2 \varphi_{k,\ell}\;d\overline{\psi}_{k,\ell}\,d\psi_{k,\ell}\;.
\end{equation}


\subsection{Supersymmetries and supersymmetric functions}

To obtain the full set of supersymmetries,  we  introduce another Grassmann variable as in \cite{KSp2}.
So let $\xi$ be a new Grassmann variable, independent of $\BPsi=(\overline{\psi}_{k,\ell},\psi_{k,\ell})_{k,\ell}$,  $k=1,\ldots,m$ and $\ell=1,\ldots,n$, and let $\Lambda(\xi)$ and
$\Lambda(\BPsi,\xi)$ denote the Grassmann algebras generated by $\xi$ and {$\BPsi\cup \{\xi\}$,} respectively.
We consider the real vector space 
\beq
\Mm=\big(\RR \oplus \xi\Lambda^1(\BPsi)\big)^2=\RR^2\oplus[\xi\Lambda^1(\BPsi)]^2 \subset \Lambda(\BPsi,\xi)^2 .  
\eeq   
The scalar product on $\RR^2$ extends to a $\RR \oplus \xi\Lambda^1(\BPsi)$-valued scalar product on   $\Mm$ by
\beq
(a_1 + \xi \psi_1,a_2 + \xi \psi_2)\cdot(a_1^{\prime} + \xi \psi_1^{\prime} ,a_2^{\prime}  + \xi \psi_2^{\prime} )= a_1a_1^{\prime}  +  a_2a_2^{\prime}+ \xi (a_1\psi_1^{\prime} +a_1^{\prime}\psi_1 +a_2\psi_2^{\prime} +a_2^{\prime}\psi_2) 
\eeq
for $a_1,a_1^{\prime},a_2,a_2^{\prime}\in \RR$, $\psi_1,\psi_1^{\prime},\psi_2,\psi_2^{\prime}\in \Lambda^1(\BPsi)$.

A generalized {supervariable} is a triple $(\varphi, \overline{\psi},\psi)$ where $\varphi\in\Mm$ 
and $\overline{\psi},\psi \in \Lambda^1(\BPsi,\xi)$. Generalized supermatrices are defined in terms of generalized supervariables in the same way  supermatrices  were defined in terms of supervariables.
The dot product for  generalized supervariables and supermatrices is defined similarly to \eqref{eq-dot-supervec} and \eq{eq-dot-supermatrix}, respectively. We also extend definitions  \eq{eq-supermatrix-tensor} and \eq{eq-supermatrix-prod}  to generalized supermatrices.
The collection   of generalized supermatrices will be denoted by  $\tilde\Ll_{m,n}$, a dense open subset of a real vector space.  Note that $\Ll_{m,n}\subset \tilde\Ll_{m,n}$.

\begin{defs}
A supersymmetry is a linear transformation $u: \Ll_{m,n}\to\tilde\Ll_{m,n}$ which leaves
$\BPhi^{\od 2}$ invariant, i.e.,
$\BPhi^{\od 2}= (u\BPhi)^{\od 2}$ {for all $\BPhi \in\Ll_{m,n}$}.
\end{defs}

Note that for $\BPhi\in\Ll_{m,n}$ the matrix $\BPhi^{\od 2}$ has entries in $\Lambda(\BPsi)$, 
which are polynomials of degree 2 in the {Grassmann generators.  If  $\BPhi\in\tilde\Ll_{m,n}$, the matrix
$\BPhi^{\od 2}$ has entries in $\Lambda(\BPsi,\xi)\supset \Lambda(\BPsi)$.}

To get some understanding of supersymmetric transformations, let us first consider 
supersymmetries that do not mix commuting and Grassmannian variables.
These supersymmetries include the orthogonal group ${\rm O}(2n)$ as follows. 
In view of \eq{eq-BPhi2}, 
given $O\in {\rm O}(2n)$ the map $\BPhi=(\bvp,\BPsi)\mapsto(\bvp O,\BPsi)$ is a supersymmetry because 
$(\bvp O)^\ot=(\bvp O)(\bvp O)^t=\bvp\bvp^t=\bvp^\ot$.
Similarly, for $S$ satisfying $S J S^t = J$ the map $(\bvp,\BPsi)\mapsto(\bvp, \BPsi S)$ is a supersymmetry, because
$(\BPsi S)^\ot=\BPsi^\ot$.
The set of such $S$ is isomorphic to the real symplectic group ${\rm Sp}(2n,\RR)$.

{Simple supersymmetric transformations   mixing commuting and Grassmannian variables are given by the maps $\Cc^p_{b,\bar b}: \Ll_{m,n}\to\tilde\Ll_{m,n}$,  with $p=1,2\ldots,n$ and $b, \bar b \, \in \RR^2$,} defined by  
\begin{equation}
\pa{\Cc^p_{b,\bar b}(\BPhi)}_{k,\ell}
: =\phi_{k,\ell}+\delta_{\ell,p} \left(2\bar b\xi \psi_{k,p}+2b\xi \overline{\psi}_{k,p} \,,\, 
 - 4\xi  \bar b\cdot \varphi_{k,p}\,,\,  4\xi  b\cdot \varphi_{k,p}\right) 
\end{equation}
for $k=1,\ldots,m$ and $\ell=1,\ldots,n$.

The dual action of supersymmetries on superfunctions is defined as follows.  
Given $F \in C^1(\RR^{m\times 2n})=C^1(\RR^{2mn})$, we extend it to a function 
$F: \Mm^{m\times n}\to \CC \oplus \xi \Lambda_\CC (\BPsi)=\Lambda_\CC(\BPsi,\xi)$ by a formal Taylor expansion,
 \beq\label{Mext}
F((x_i +\xi \psi_i)_i)= f((x_i )_i)+ \xi \sum_{j=1}^{2mn} (\partial_j F)((x_i )_i)\psi_j
\eeq  
for all $x_i \in \RR$, $\psi_i \in \Lambda^1(\BPsi)$, $i=1,2\ldots,2mn$, where by $\partial_i$ we denote the $i$-th partial derivative. Higher order terms of the Taylor expansion are not needed due to the fact that $\xi^2=0$.
Let $u:\Ll_{m,n}\mapsto\tilde\Ll_{m,n}$ be a supersymmetry. Given a supermatrix  
 $\BPhi=(\phi_{k,\ell})_{k,\ell}= (\varphi_{k,\ell},\overline{\psi}_{k,\ell},\psi_{k,\ell})_{k,\ell}$,
 we  have $u(\BPhi)_{k,\ell}= (\varphi_{k,\ell}^\prime  ,{\overline{\psi}}^\prime_{k,\ell},\psi_{k,\ell}^\prime)$ with $\varphi_{k,\ell}^\prime\in \Mm$, $\overline{\psi}_{k,\ell}^\prime,{\psi}_{k,\ell}^\prime \in \Lambda^1(\BPsi,\xi)$.  If  $\beta_{i} \in \Lambda(\BPsi)$, $i\in\{1,\ldots,2^{2mn}\}$, 
is a basis for $\Lambda(\BPsi)$, so each  $\beta_{i}$ is a polynomial in $\set{\overline{\psi}_{k,\ell},\psi_{k,\ell}}_{k,\ell}$,  we set $\beta_{i}^\prime \in  \Lambda(\BPsi,\xi)$ to be the same polynomial in the $\set{\overline{\psi}^\prime_{k,\ell},\psi^\prime_{k,\ell}}_{k,\ell}$.  Then, given $F\in C^1(\Ll_{m,n})$, we write it as in \eqref{eq-superfct-expand}, and define the function
$u^t F\in  C(\tilde\Ll_{m,n})$, where $C(\tilde\Ll_{m,n})$ is defined similarly to $C(\Ll_{m,n})$, by
\beq
u^t F(\BPhi) =  F(u\BPhi) =  \sum_i F_i ((\varphi^\prime_{k,\ell})_{k,\ell}) \beta_{i}^\prime,
\eeq
where we used \eq{Mext}.

\begin{defs}
A superfunction
$F\in  C^1(\Ll_{m,n})$ is called supersymmetric if {for all supersymmetries $u$ we have
$u^t F = F$ , i.e.,  $F(u\BPhi) = F(\BPhi)$ for all $\BPhi\in \Ll_{m,n}$.}
The set of such {supersymmetric functions will be}  denoted by $SC^1(\Ll_{m,n})$.  {We set  $SC^k(\Ll_{m,n})=SC^1(\Ll_{m,n})\cap C^k(\Ll_{m,n})$  for $k\in\NN\cup\{\infty\}$ and  
$S\Ss(\Ll_{m,n}) = SC^1(\Ll_{m,n})\cap \Ss(\Ll_{m,n})$ .}
\end{defs}

{Since supersymmetries leave $\BPhi^{\od 2}$ invariant, one may expect  that every supersymmetric function $F$ can be written 
as a function of $\BPhi^{\od 2}$, i.e., $F(\BPhi)=f(\BPhi^{\od 2})$.} This is possible in the following sense.
Let $\Sym^+(m)$ denote the non-negative, real, symmetric $m\times m$ matrices;  {clearly
$\bvp^\ot\in\Sym^+(m)$ for $\bvp  \in \RR^{m\times 2n}$.}
Let $n\geq\frac{m}{2}$, {so} the map $\bvp\mapsto \bvp^\ot=\bvp\bvp^t$ from 
$\RR^{m\times 2n}$ to $\Sym^+(m)$ is surjective {\cite[Lemma~2.5]{KSp2}}.
We denote by
$C^\infty(\Sym^+(m))$ the set of continuous  functions $f$ {on $\Sym^+(m)$ which are $C^\infty$ on the interior of $\Sym^+(m)$.  If $f\in C^\infty(\Sym^+(m))$, it follows that {$F(\bvp) = f(\bvp^\ot)$ is a continuous function on $\RR^{m\times 2n}$,  $C^\infty$ on the dense open set $\det(\bvp^\ot)\neq 0$}  \cite[Proposition~2.6]{KSp2}.}

Given a function $f\in C^\infty(\Sym^+(m))$ and $\BPhi= (\bvp,\BPsi)$ with $\det(\bvp^\ot)\neq 0$, we define  $f(\BPhi^\ot)=f(\bvp^\ot+\BPsi^\ot)$  by a formal Taylor series expansion:
\begin{equation} \label{eq-supersym-Taylor}
f(\BPhi^\ot) := f(\bvp^\ot)\,+\,\sum_{k=1}^{2mn} \frac1{k!}D^{(k)} f(\bvp^\ot)
(\BPsi^\ot,\BPsi^\ot,\ldots,\BPsi^\ot)\;,
\end{equation}
where $D^{(k)} f$ is the $k$-th derivative considered as multi-linear map $[\Sym(m)]^k\to\CC$, naturally extended to
$\Lambda_\CC(\BPsi)\otimes_\RR [\Sym(m)]^k$.
Since  $D^{(k)}f(\bvp^\ot) (\BPsi^\ot,\ldots,\BPsi^\ot) = 0$ 
for $k>2mn$, the higher order expansion terms are neglected. Given $ n \ge \frac m 2$, we define $C^\infty_n(\Sym^+(m))$ to be the subset of functions $f\in C^\infty(\Sym^+(m))$ such that there exists $F_f(\BPhi)\in C^\infty(\Ll_{m,n})$ such that $F_f(\BPhi)=f(\BPhi^\ot)$ for all  $\bvp$ with $\det(\bvp^\ot)\neq 0$. By construction  $F_f(u\BPhi) = F_f(\BPhi)$ if  $\det(\bvp^\ot)\neq 0$, so we conclude that $F_f(\BPhi)\in SC^\infty(\Ll_{m,n})$.

The following result corresponds to { \cite[Corollary~2.9]{KSp2}}.

\begin{prop} \label{prop-super-fct}{Let $n\geq\frac{m}{2}$. For all $F\in SC^\infty(\Ll_{m,n})$ there exists a unique 
$f\in C^\infty_n(\Sym^+(m))$ such that $F(\BPhi)=f(\BPhi^{\od 2})$. This establishes a bijection  from  
$SC^\infty(\Ll_{m,n})$ to $C^\infty_n(\Sym^+(m))$.}   
\end{prop}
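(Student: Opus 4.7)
The plan is to construct $f$ from the body $F|_{\BPsi=0}$ using the orthogonal subgroup of supersymmetries, and then to force the full $\BPhi$-dependence of $F$ to coincide with the Taylor extension $F_f(\BPhi)=f(\BPhi^\ot)$ via the mixing supersymmetries $\Cc^p_{b,\bar b}$. Uniqueness of $f$ is automatic: the hypothesis $n\ge m/2$ guarantees, as recalled in the excerpt, that $\bvp\mapsto\bvp^\ot$ is surjective onto $\Sym^+(m)$, so any two candidates agreeing after pullback to $\RR^{m\times 2n}$ must coincide on $\Sym^+(m)$; injectivity of $f\mapsto F_f$ is then built into the definition \eqref{eq-supersym-Taylor}.

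For existence, I would first consider $F_0(\bvp):=F(\BPhi)\big|_{\BPsi=0}\in C^\infty(\RR^{m\times 2n})$. Since $(\bvp,\BPsi)\mapsto(\bvp O,\BPsi)$ is a supersymmetry for every $O\in{\rm O}(2n)$, the function $F_0$ is invariant under the right action of ${\rm O}(2n)$. Combined with the surjectivity of $\bvp\mapsto\bvp^\ot$ this yields a unique continuous $f:\Sym^+(m)\to\CC$ with $F_0(\bvp)=f(\bvp^\ot)$, and the smooth-invariant-theoretic result \cite[Proposition~2.6]{KSp2}, a Schwarz-type theorem for ${\rm O}(2n)$-invariant smooth functions on $\RR^{m\times 2n}$, upgrades this to $f\in C^\infty_n(\Sym^+(m))$. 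Defining $F_f(\BPhi):=f(\BPhi^\ot)$ via the finite formal Taylor series \eqref{eq-supersym-Taylor} then places $F_f$ in $SC^\infty(\Ll_{m,n})$ and matches $F$ on the body $\BPsi=0$.

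The main, and most delicate, step is to conclude that $G:=F-F_f\in SC^\infty(\Ll_{m,n})$, which satisfies $G(\bvp,0)=0$, vanishes identically. I would expand $G(\bvp,\BPsi)=\sum_\alpha G_\alpha(\bvp)\BPsi^\alpha$ in the finite basis of Grassmann monomials and argue inductively on the Grassmann degree $|\alpha|$. The symplectic invariance $(\bvp,\BPsi)\mapsto(\bvp,\BPsi S)$ with $SJS^t=J$ forces each Grassmann component of any supersymmetric function to be built out of contractions of $\BPsi J\BPsi^t=\BPsi^\ot$, fixing the tensorial structure of the $G_\alpha$. The mixing supersymmetries $\Cc^p_{b,\bar b}$, applied to a generic $\BPhi$ and expanded in the real parameters $b,\bar b\in\RR^2$ (and in the auxiliary Grassmann generator $\xi$), then yield differential identities that recursively express the higher-$|\alpha|$ coefficients in terms of $\bvp$-derivatives of lower-$|\alpha|$ ones; starting from $G_0\equiv 0$ this propagates the vanishing through all Grassmann orders and gives $G\equiv 0$, so $F=F_f=f(\BPhi^\ot)$.

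The hardest part of this program, beyond the boundary regularity of $f$ handled by the cited smooth-invariant-theory result, is the combinatorial/algebraic verification that ${\rm O}(2n)$, ${\rm Sp}(2n,\RR)$, and the $\Cc^p_{b,\bar b}$ together form a family of supersymmetries rich enough to make $\BPhi^\ot$ a complete invariant. Once this is established the bijection claim of the proposition follows directly, the map $F\mapsto f$ just described being the two-sided inverse of $f\mapsto F_f$.
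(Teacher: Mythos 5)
The paper does not actually prove this proposition; it simply asserts that the statement ``corresponds to \cite[Corollary~2.9]{KSp2}'' and moves on. Your sketch is therefore a stand-alone argument, and its overall architecture---restrict to the body $\BPsi=0$, use the ${\rm O}(2n)$-invariance and a Schwarz--Glaeser theorem to manufacture $f$, form $F_f$ by the finite Taylor expansion \eqref{eq-supersym-Taylor}, and then use the mixing supersymmetries $\Cc^p_{b,\bar b}$ to propagate the identity $F=F_f$ up through the Grassmann degrees---is the correct one and is presumably the shape of the argument behind Corollary~2.9 of \cite{KSp2}. The uniqueness argument via surjectivity of $\bvp\mapsto\bvp^\ot$ for $n\ge m/2$ is fine.

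Two local issues. First, the citation is off: \cite[Proposition~2.6]{KSp2}, as used in the present paper, goes the \emph{opposite} way (it says that $f\in C^\infty(\Sym^+(m))$ pulls back to a continuous function, smooth off $\det\bvp^\ot=0$); it is not the Glaeser--Schwarz statement you need, namely that every smooth ${\rm O}(2n)$-invariant function on $\RR^{m\times 2n}$ factors \emph{smoothly} through $\bvp\mapsto\bvp\bvp^t$. That invariant-theoretic input is genuine and does deliver $f\in C^\infty(\Sym^+(m))$, but it is a different result and needs its own reference. Second, you assert mid-stream that ``defining $F_f$\dots places $F_f$ in $SC^\infty(\Ll_{m,n})$,'' i.e.\ that $f\in C^\infty_n(\Sym^+(m))$; at that stage you only know $F_f$ is smooth where $\det\bvp^\ot\ne 0$. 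The membership $f\in C^\infty_n$ should be read off \emph{after} you prove $G:=F-F_f=0$ on the set $\det\bvp^\ot\ne0$, because then $F$ itself provides the required $C^\infty$ extension of $F_f$ across the degenerate locus.

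The load-bearing step is the one you flag yourself: extracting from $F(\Cc^p_{b,\bar b}(\BPhi))=F(\BPhi)$ the $\xi$-coefficient, varying $b,\bar b\in\RR^2$ and $p$, and organising the resulting first-order identities into a two-step ladder of the schematic form $\nabla_{\bvp}G_{\alpha}\cdot(\cdots)+G_{\alpha'}\cdot(\cdots\,\bvp)=0$ with $|\alpha'|=|\alpha|+2$ (the odd-degree coefficients already vanish by ${\rm Sp}(2n,\RR)$-invariance). Starting from $G_0\equiv 0$, the coefficients at Grassmann degree $|\alpha|+2$ are then killed on the dense set $\det\bvp^\ot\ne0$, hence everywhere by continuity. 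This works transparently in the test case $m=n=1$, where the single relation $\nabla F_0=2F_3\varphi$ immediately forces $F_3=f'(\varphi\cdot\varphi)$, but for general $m,n$ the combinatorics of the multiple $G_\alpha$ at each degree, and the precise $\bvp$-polynomial factors that must be divided out, constitute exactly the verification you defer. The sketch correctly locates the work, but does not carry it out.
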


In order to define the {appropriate function spaces we need to consider the expansion in
\eqref{eq-supersym-Taylor} in  more detail.}
We {will} reduce the general case to the case $n=1$.
{Let}  $\Phi^{(\ell)}$ denote the $\ell$-th column vector of $\BPhi$, {i.e.}
$\Phi^{(\ell)}$ is a $m\times 1$ supermatrix with entries $(\varphi_{k,\ell},\overline{\psi}_{k,\ell}, \psi_{k,\ell})_{k=1,\ldots,m}$.
Using the definition \eqref{eq-supermatrix-tensor} with $n=1$ gives
\begin{align}
((\Phi^{(\ell)})^{\od 2})_{j,k}& = 
\varphi_{j,\ell} \cdot \varphi_{k,\ell}\;+\;
\frac12\left( \overline{\psi}_{j,\ell} \,\psi_{k,\ell}\;+\;\overline{\psi}_{k,\ell}\,\psi_{j,\ell} \right)\;,\\
\label{eq-sum-replicas}
\BPhi^{\od 2} &=  \sum_{\ell=1}^n\,\left(\Phi^{(\ell)}\right)^{\od 2}\;. 
\end{align}
The different $\Phi^{(\ell)}$ are called replicas, { e.g.,}  \cite{K, KSp2}.
We also define the $m\times 2$ matrices $\varphi^{(\ell)}$ and $\Psi^{(\ell)}$, splitting
$\Phi^{(\ell)}=(\varphi^{(\ell)},\Psi^{(\ell)})$ in its real and Grassmann variables parts.

The formal Taylor expansion of $f([\Phi^{(\ell)}]^{\od 2})$ contains  only terms with monomials in $\overline{\ps_{j,\ell}},\psi_{k,\ell}$
with {equal numbers of  $\overline{\ps_{j,\ell}}$'s and  ${\ps_{j,\ell}}$'s.} 
Let $\Pp_m$ denote the set of pairs $(\bar a, a)$ of subsets of $\{1,\ldots,m\}$ with same cardinality, 
i.e.,
\begin{equation}
\Pp_m=\left\{(\bar a, a)\;: \bar a, a \subset\{1,\ldots,m\}\,,\,|\bar a|=|a|\right\}\;.
\end{equation}
For $(\bar a, a)\in\Pp_m$ and
$\bar a=\{\bar a_1,\ldots, \bar a_c\},\;a=\{a_1,\ldots,\bar a_c\}$, both ordered ({ i.e.}, $\bar a_j < \bar a_k$
and $a_j<a_k$ if $j<k$) define
\begin{equation}
\Psi^{(\ell)}_{\bar a, a} =  \prod_{k=1}^{|a|}\;\left(\,\overline{\psi}_{\bar a_k, \ell}\,\psi_{a_k,\ell}\,\right)\;,\quad
\text{with the convention}\; \Psi^{(\ell)}_{\emptyset,\emptyset} = 1
\end{equation}
For a function $f\in C^{\infty}(\Sym^+(m))$,  
let $\partial_{j,k}$ denote the derivative {with respect to} to the $j,k$-entry of the
symmetric matrix, {\it i.e.} $\partial_{j,k} f(M)=\frac{\partial}{\partial M_{k,j}}\; f(M)$. Note that $\partial_{j,k}=\partial_{k,j}$.
Furthermore, let  $\tilde \partial_{j,k} = \frac12 \partial_{j,k}$ for $j\neq k$ and  $\tilde \partial_{j,j}=\partial_{j,j}$.
Given  $(\bar a, a)\in\Pp_m$, we set $\delta_{\emptyset,\emptyset}$ to be the identity operator and
\begin{equation}
 \delta_{\bar a,a} := \det \pa{ \tilde\partial_{\bar a_r,a_s}}_{r,s=1,2,\ldots,c} =
\det\;\begin{pmatrix}
       \tilde \partial_{\bar a_1,a_1} & \cdots & \tilde \partial_{\bar a_1,a_c} &     \\
        \vdots & \ddots & \vdots   \\
	\tilde \partial_{\bar a_c,a_1} & \cdots & \tilde \partial_{\bar a_c, a_c}
      \end{pmatrix}\;  \qtx{if}  a\not= \emptyset\,.
\end{equation}
Then a Taylor expansion in the Grassmann variables yields
\beq
 f\left([\Phi^{(\ell)}]^{\od 2}\right) = 
\sum_{(\bar a, a)\in\Pp_m}\;
\delta_{\bar a,a} f\left([\varphi^{(\ell)}]^\ot \right)\;
\Psi^{(\ell)}_{\bar a, a}\; ,
\eeq
which for $n=1$ {is the same as} \eqref{eq-supersym-Taylor}.
For the general case we use \eqref{eq-sum-replicas} and an {iterated} Taylor expansion.
Thus, for $(\bar\aaa, \aaa) = (\bar\aaa_\ell, \aaa_\ell)_{\ell=1,\ldots,n} \;\in\;(\Pp_m)^n=\Pp_m^n$ we define
\begin{equation} \label{eq-def-repl-prod}
 \Psi_{\bar\aaa,\aaa} := 
\prod_{\ell=1}^n \Psi^{(\ell)}_{\bar\aaa_\ell, \aaa_\ell}\;,\quad
D_{\bar \aaa, \aaa} :=  \prod_{\ell=1}^n \delta_{\bar\aaa_\ell, \aaa_\ell}\; ,
\end{equation}
{getting}
\begin{equation}\label{eq-super-Taylor}
 f(\BPhi^{\od 2}) = 
\sum_{ (\bar\aaa, \aaa)\,\in\,\Pp_m^n}
D_{\bar\aaa,\aaa}\; f(\bvp^\ot)\;\Psi_{\bar\aaa,\aaa}\;.
\end{equation}

From this formula one can obtain an interesting Leibniz-type formula.
Let $(\bar\aaa,\aaa), (\bar\bbb,\bbb) \in \Pp_m^n$. If $\bar\aaa_\ell\cap \bar\bbb_\ell= \aaa_\ell\cap \bbb_\ell=\emptyset $ for each $\ell=1,\ldots,n$,  we define
$(\bar\aaa+\bar\bbb, \aaa+\bbb) \in\Pp_m^n$ by
$(\bar\aaa+\bar\bbb)_\ell = \bar\aaa_\ell \cup \bar\bbb_\ell$ and $(\aaa+\bbb)_\ell=\aaa_\ell\cup\bbb_\ell$.
In this case we also define $\sgn(\bar\aaa,\aaa,\bar\bbb,\bbb)\in\{-1,1\}$ by
\beq
\Psi_{\bar\aaa,\aaa}\,\Psi_{\bar\bbb,\bbb} = 
\sgn(\bar\aaa,\aaa,\bar\bbb,\bbb)\; \Psi_{\bar\aaa+\bar\bbb,\aaa+\bbb}\;.
\eeq
Since the product of two smooth supersymmetric functions is smooth and supersymmetric, we obtain for all
$f, g\in {C^\infty_n(\Sym^+(m))}$ and all $(\bar\aaa,\aaa)\in\Pp_m^n$ that
\beq \label{eq-Leibn}
D_{\bar\aaa,\aaa}\,(fg) = 
\sum_{\bar\bbb+\bar\bbb'=\bar\aaa\;,
\bbb+\bbb'=\aaa}\;\sgn(\bar\aaa,\aaa,\bar\bbb,\bbb)\,
D_{\bar\bbb,\bbb}\, g\,D_{\bar\bbb',\bbb'}\, f\;.
\eeq


\subsection{The supersymmetric Fourier transform and  Banach spaces of supersymmetric functions}

 From now on we  fix $n\geq\frac{m}{2}$ and, in view of Proposition~\ref{prop-super-fct},  we make the following identifications: 
\beq  \begin{split} 
&SC^{\infty}(\Ll_{m,n}) \cong C^\infty_n(\Sym^+(m)) := \set{f\in C^{\infty}(\Sym^+(m)): \; F(\BPhi)=f(\BPhi^{\od 2})\in SC^{\infty}(\Ll_{m,n}) }, \\
& S\Ss(\Ll_{m,n})\cong{ \Ss_n(\Sym^+(m))}   : =\left\{f\in {C^\infty_n(\Sym^+(m))}\;:\; F(\BPhi):=f(\BPhi^{\od 2}) \in \Ss(\Ll_{m,n}) \right\}. 
\end{split}  \eeq

{The supersymmetric Fourier transform $T$ will {play an important role in our analysis.}   Given} $f\in S\Ss(\Ll_{m,n})$ we define $Tf\in S\Ss(\Ll_{m,n})$ by
\begin{equation}\label{eq-def-T}
 (Tf)((\BPhi')^{\od 2}) = 
\int e^{\imath \BPhi'\cdot\BPhi}\, f({\BPhi}^{\od 2})\; D\BPhi\; ,
\end{equation}
where we use the fact   that the right hand side defines a  supersymmetric function 
\cite{KSp2}. 
The integral {with respect to} $D\BPhi$ only sees terms multiplied
by $\Psi_{\ccc,\ccc}$ where $(\ccc,\ccc)\in\Pp_m^n$ with $\ccc_\ell=\{1,\ldots,m\}$ for all $\ell$,
{\it i.e.} all sets in $(\ccc,\ccc)$ are the complete set $\{1,\ldots,m\}$.
In other words $\Psi_{\ccc,\ccc}=\prod_{k,\ell}\overline{\psi}_{k,\ell}\psi_{k,\ell}$, so
 $\int {\Psi}_{\ccc,\ccc} D\Psi_{\ccc,\ccc} = (-1)^{mn}$, where
$D\Psi_{\ccc,\ccc} = \prod_{k,\ell} d\overline{\psi}_{k,\ell}\,d\psi_{k,\ell}$.
For $(\bar\aaa,\aaa)=(\bar\aaa_\ell,\aaa_\ell)_\ell \in \Pp_m^n$ we define the complement 
$(C\bar\aaa, C\aaa)\in\Pp_m^n$ by 
{ $\bar\aaa+C\bar\aaa=\ccc$ and $\aaa+C\aaa=\ccc$.
Setting $\sgn(\bar\aaa,\aaa)=(-1)^{mn} \sgn(\bar\aaa,\aaa,C\bar\aaa,C\aaa)$, we have}
\begin{eqnarray}
 \Psi_{\bar\aaa,\aaa}\;\Psi_{C\bar\aaa,C\aaa}&=&
(-1)^{mn}\; \sgn(\bar \aaa, \aaa)\;\Psi_{\ccc,\ccc}\; ,\\
\sgn(\bar\aaa,\aaa) &=&
\int \Psi_{\bar\aaa,\aaa}\;\Psi_{C\bar\aaa,C\aaa}\,D\Psi_{\ccc,\ccc}\;.
\end{eqnarray}
Clearly  $\sgn(\bar\aaa,\aaa)=\sgn(C\bar\aaa,C\aaa)$. Interchanging each $ \overline{\psi}_{k,\ell}$ with $ {\psi_{k,l}}$ in
 $\Psi_{\bar\aaa,\aaa}\;\Psi_{C\bar\aaa,C\aaa}$ gives a sign of $(-1)^{mn}$ because {one applies a permutation} 
consisting of $mn$ transpositions. {One then has a product of the form  $\prod_{x,y} \psi_{x}\overline{\psi}_{y}$.}
Changing {this product} to $\prod_{x,y} \overline{\psi}_{y}\psi_{x}$ gives another sign of $(-1)^{mn}$. After these two changes one has 
switched $\bar\aaa$ and $\aaa$. Therefore
$\Psi_{\bar\aaa,\aaa}\Psi_{C\bar\aaa,C\aaa} = \Psi_{\aaa,\bar\aaa}\Psi_{C\aaa, C\bar\aaa}$ and hence the signs are the same.
To summarize we get
\begin{equation}\label{eq-sgn-relations}
\sgn(\bar\aaa,\aaa) = \sgn(C\bar\aaa,C \aaa) = 
\sgn(\aaa,\bar\aaa) = \sgn(C\aaa,C\bar\aaa).
\end{equation}

In order to relate the components in the Grassmann variables we need to expand
$e^{\imath \BPhi'\cdot \BPhi}$, but the only terms that matter for the supersymmetric  integral are with
Grassmann monomials of the form $\Psi_{\bar\aaa,\aaa}$.
As
\begin{equation}
e^{i \BPhi'\cdot\BPhi} = 
e^{i \Tr(\bvp' \bvp^t)}\;\prod_{k,\ell} \left[\left(1+\tfrac{i}{2} \overline{\psi'}_{k,\ell}\psi_{k,\ell}\right)
\left(1+\tfrac{i}{2} \overline{\psi}_{k,\ell}\psi'_{k,\ell}\right)\right]\; ,
\end{equation}
 the factor $\Psi_{\bar\aaa,\aaa}$ in the expansion of $e^{-i\Tr(\varphi'\varphi^t) }e^{i \BPhi'\cdot \BPhi}$ 
appears as
\begin{align} 
\left(\tfrac{i}{2}\right)^{2|\aaa|} \prod_{\ell=1}^n \prod_{k=1}^{|\aaa_\ell| }
\overline{\psi}_{\bar a_\ell^k, \ell} \psi'_{\bar a_\ell^k, \ell} \overline{\psi'}_{a_\ell^k,\ell}\psi_{a_\ell^k,\ell}&  = 
\left(\tfrac{-1}{4}\right)^{|\aaa|} (-1)^{|\aaa|}\prod_{\ell=1}^n \prod_{k=1}^{|\aaa_\ell|}
\overline{\psi'}_{a_\ell^k, \ell} \psi'_{\bar a_\ell^k,\ell} \overline{\psi}_{\bar a_\ell^k, \ell} \psi_{a_\ell^k,\ell}\\
&= 
\tfrac{1}{4^{|\aaa|}}\,\Psi'_{\aaa,\bar\aaa}\; \Psi_{\bar\aaa,\aaa}\; , \nonumber
\end{align}
where $|\aaa|=\sum_{\ell=1}^n |\aaa_\ell|$ and
$\bar \aaa_\ell=\{\bar a_\ell^1, \ldots ,\bar a_\ell^{|\bar \aaa_\ell|}\},\;
\aaa_\ell = \{a_\ell^1,\ldots, a_\ell^{|\aaa_\ell|} \}$.
Thus, given  $f\in S\Ss(\Ll_{m,n})$,  we get
\begin{align}
(Tf)(\BPhi'^{\od 2}) & = 
\int e^{i \Tr(\bvp'\bvp^t)}
\left[ \sum_{(\bar\aaa,\aaa)\in\Pp_m^n} \tfrac{1}{4^{|\aaa|}}\,\Psi'_{\aaa,\bar\aaa}\; \Psi_{\bar\aaa,\aaa}\right]
\sum_{(\bar \bbb,\bbb)\in\Pp_m^n} D_{\bar\bbb,\bbb} f(\bvp^\ot) \Psi_{\bar\bbb,\bbb}\;
D\BPhi \\
& = \frac 1 {\pi^{mn}}\sum_{(\bar \aaa, \aaa)\in\Pp_m^n} \tfrac{1}{4^{|\aaa|}} \Psi'_{\aaa,\bar\aaa} \;\sgn(\bar\aaa,\aaa)
\int_{\RR^{m\times 2n}} \!\!\!\!\!\! e^{i \Tr(\bvp'\bvp^t)}\;
D_{C\bar\aaa, C\aaa}\; f(\bvp^\ot)\;d^{2mn}\bvp.\notag
\end{align}
In particular,  one has
\begin{equation}\label{eq-T-parts}
D_{\bar\aaa,\aaa} (Tf) = 
\tfrac{2^{mn}}{4^{|\aaa|}}\;\sgn(\aaa,\bar\aaa)\;\Ff(D_{C\aaa,C\bar\aaa}\,f) \qtx{for all} (\bar\aaa,\aaa) \in \Pp_m^n.
\end{equation}
Here  $\F$ denotes the Fourier transform on $\RR^{m\times 2n}$; we abuse the notation by letting $\F f$ denote  
the function in $\Ss_n(\Sym^+(m))$ such that $(\F f)(\bvp^\ot)$ is the Fourier transform of the function  $F(\bvp)=f(\bvp^\ot)$.
Using $|\aaa|+|C\aaa| = mn$, $\sgn(\aaa,\bar\aaa) = \sgn(C\bar \aaa,C\aaa)$, and the fact that
the inverse Fourier transform $\Ff^*$ and $\Ff$ {coincide} on functions invariant under $\bvp\mapsto-\bvp$,  
{we conclude that for all
$f\in\Ss(\Ll_{m,n})$ we have
\beq
D_{\bar\aaa,\aaa}\; (TTf) = 
D_{\bar\aaa,\aaa}f  \qtx{for all} (\bar\aaa,\aaa)\,\in\,\Pp_m^n.
\eeq
Thus,
\beq
T^2f =  TTf = f\; \qtx{for all} f\in{ \Ss_n(\Sym^+(m))}.
\eeq}

Following Campanino and Klein \cite{CK,K,KSp2}, we introduce  the norms $\hn  \cdot  \hn_p$ on 
 ${ \Ss_n(\Sym^+(m))}\cong S\Ss(\Ll_{m,n})$, $p\in [1,\infty)$, given by
\begin{equation} \label{eq-def-norms}
\hn f \hn_p^2\;: =\;
\sum_{(\bar\aaa,\aaa)\in\Pp_m^n}\; \left\|\,2^{|\aaa|}\;
D_{\bar\aaa,\aaa} \;f\,(\bvp^\ot)\right\|^2_{L^p(\RR^{m\times 2n}, d^{2mn}\bvp)}.
\end{equation}
We  define the Hilbert space $\widehat{\Hh}$ as completion of ${ \Ss_n(\Sym^+(m))}$ with respect to the norm $\hn\cdot\hn_2$.  The Banach spaces  $\widehat{\Hh}_p$,  $p \in [1,\infty]$, are defined by
\begin{equation}\label{eq-def-spaces}
\widehat{\Hh}_p\; : =\;\{f\;\in\;\widehat{\Hh}\;:\; \|f\|_{\widehat{\Hh}_p}\; :=\;\hn f \hn_2 +\hn f \hn_p\;<\; \infty\;\}\;.
\end{equation}

The supersymmetric Fourier transform $T$, defined in \eq{eq-def-T} as an operator on ${ \Ss_n(\Sym^+(m))}$, extends to $\widehat\Hh$ as a unitary operator in view of  \eqref{eq-T-parts}-\eqref{eq-def-norms}.  Moreover, as $\abs{e^{\imath\Tr(\bvp'\bvp^t)}}\le 1$, we get
$\hn Tf \hn_\infty\leq (2\pi)^{-mn}\,\hn f \hn_1$, {so}
$T$ is a bounded operator from $\widehat\Hh_1$ to $\widehat\Hh_\infty$. 
\begin{remark}  The spaces $\widehat{\Hh}$, $\widehat{\Hh}_p$, the supersymmetric Fourier transform $T$, etc., all depend on our choice of $n\ge \frac m 2$ for a given $m$.  This dependence on $n$ (and $m$) will be generally omitted.
\end{remark}

For technical reasons we will work on closed subspaces ${\Hh}$ and ${\Hh}_p$ of $\widehat{\Hh}$ and $\widehat{\Hh}_p$.   
For a given  a complex symmetric $m\times m$ matrix $B$ with strictly positive real part
(i.e., $\re B > 0$), let $\PE(B)$ denote the vector space spanned by functions $f \in C^\infty_n(\Sym^+(m))$ of the form
$f(M) = p(M) \exp(-\Tr(MB))$, where $p(M)$ is a polynomial in the entries of $M\in\Sym^+(m)$.  As derivatives only introduce polynomial factors, we have
$\PE(B) \subset{ \Ss_n(\Sym^+(m))}$. We now define $\PE(m) \subset{ \Ss_n(\Sym^+(m))}$ as the smallest vector space containing $\PE(B)$ for all complex symmetric $m\times m$ matrices $B$ with strictly positive real part.
($\PE$ stands for ``polynomial times exponential''.)
  We define $\Hh$ and  $\Hh_p$ as the closures of $\PE(m)$ in $\widehat\Hh$ and  
$\widehat\Hh_p$, respectively.

\begin{lemma} \label{lem-PEB} Given any    complex symmetric $m\times m$ matrix $B$  with $\re B > 0$ and $p\in [1,\infty)$,
 $\Hh$ and  $\Hh_p$ are the closures of $\PE(B)$ in $\widehat\Hh$ and  
$\widehat\Hh_p$, respectively.
\end{lemma}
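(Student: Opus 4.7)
Fix a complex symmetric $m\times m$ matrix $B_0$ with $\re B_0 > 0$, and let $\mathcal D$ denote the set of complex symmetric $m\times m$ matrices $B$ with $\re B > 0$; it is open and convex, hence connected. The plan is to prove that
\[
\Cc_0 := \set{B\in\mathcal D : \PE(B)\subset \overline{\PE(B_0)}^{\widehat\Hh_p}}
\]
equals all of $\mathcal D$, where the overline denotes closure in the norm $\|\cdot\|_{\widehat\Hh_p}$. Since $\PE(m)$ is by definition the linear span of $\bigcup_{B\in\mathcal D}\PE(B)$, this will give $\overline{\PE(B_0)}^{\widehat\Hh_p} = \overline{\PE(m)}^{\widehat\Hh_p} = \Hh_p$, and the same argument with $\hn\cdot\hn_2$ in place of $\|\cdot\|_{\widehat\Hh_p}$ handles $\Hh$. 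Since $B_0\in\Cc_0$ and $\mathcal D$ is connected, it suffices to verify that $\Cc_0$ is both open and closed in $\mathcal D$.

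For openness, given $B\in\Cc_0$ and $B'\in\mathcal D$ with $\Delta := B'-B$ sufficiently small (in any matrix norm), I would approximate $q(M)e^{-\Tr(MB')}$ for an arbitrary polynomial $q$ on $\Sym(m)$ by the partial sums of
\[
q(M)\,e^{-\Tr(MB')} \;=\; \sum_{k=0}^{\infty} \frac{q(M)\,(-\Tr(M\Delta))^k}{k!}\, e^{-\Tr(MB)},
\]
each of which lies in $\PE(B)$. Applying $D_{\bar\aaa,\aaa}$ via the Leibniz formula \eqref{eq-Leibn} to the $k$-th term produces a polynomially-in-$k$ bounded number of summands of the form $\tilde q(M)(-\Tr(M\Delta))^{k-j}e^{-\Tr(MB)}$, where $\tilde q$ is a polynomial of degree bounded independently of $k$. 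Evaluated at $M = \bvp^\ot$, each such summand is a polynomial in $\bvp$ of degree linear in $k$, with coefficients of size $O(\|\Delta\|^{k-j})$, times the Gaussian $e^{-\Tr(\bvp^\ot B)}$, whose modulus is bounded by $e^{-c\|\bvp\|^2}$ since $\re B > 0$. Standard Gaussian moment estimates together with Stirling's approximation then yield an $L^q$-bound of order $(C_q\|\Delta\|)^k\,k^{C'}\!/k!$ on the $k$-th term of the $\widehat\Hh_p$-norm for $q\in\{2,p\}$, which is summable provided $\|\Delta\|$ is below an explicit threshold depending on $\re B$, $q$, and $(\bar\aaa,\aaa)$. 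Since only finitely many $(\bar\aaa,\aaa)\in\Pp_m^n$ enter the definition \eqref{eq-def-norms} of the norm, this forces $q(M)e^{-\Tr(MB')}\in \overline{\PE(B)}^{\widehat\Hh_p}\subset \overline{\PE(B_0)}^{\widehat\Hh_p}$, so $B'\in\Cc_0$.

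For closedness, if $B_n\to B$ in $\mathcal D$ with each $B_n\in\Cc_0$, it suffices to show $q(M)e^{-\Tr(MB_n)}\to q(M)e^{-\Tr(MB)}$ in $\widehat\Hh_p$ for every polynomial $q$. Applying $D_{\bar\aaa,\aaa}$ and evaluating at $M=\bvp^\ot$ yields integrands continuous in $B_n$ and, for $n$ large, dominated by a fixed polynomial-times-Gaussian majorant (using the eventual bound $\re B_n \ge \tfrac12 \re B$), so the Dominated Convergence Theorem gives convergence of the required $L^2$ and $L^p$ norms. The main obstacle in the overall argument is the bookkeeping in the openness step: the Leibniz formula \eqref{eq-Leibn} applied to a product of $k+1$ factors generates combinatorially many summands, but because each differentiation introduces only a polynomial factor in $\bvp$ and the combinatorial growth is bounded by a power of $k$ (coming from the number of ways to split the fixed sets $\bar\aaa,\aaa$ among the $k+1$ factors), the $k!$ in the denominator dominates and summability survives. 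Once this estimate is secured, the open/closed/connected dichotomy finishes the proof.
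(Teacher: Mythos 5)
Your proof follows essentially the same path as the paper: a Taylor expansion of $e^{-\Tr(M B')}$ in powers of $B'-B$, convergence of the partial sums (which lie in $\PE(B)$) in the $\widehat\Hh_p$ norm for $\|B'-B\|$ small, and a connectedness argument on the parameter space of complex symmetric matrices with positive real part. The one structural difference is cosmetic: you prove that a fixed set $\Cc_0$ is both open and closed, whereas the paper observes that the equivalence classes $\Xi_B=\{B':\Hh_p^{(B')}=\Hh_p^{(B)}\}$ are all open (making each one also closed as the complement of a union of opens), which lets it skip a separate closedness step. Both are valid.

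One remark on your estimate: the stated bound $(C_q\|\Delta\|)^k k^{C'}/k!$ on the $k$-th term cannot be right as written, because it would be summable for \emph{every} $\|\Delta\|$, contradicting your own (correct) statement that a smallness threshold on $\|\Delta\|$ is required. What the Gaussian moment computation actually gives, after the factorial from $\Gamma(qk+O(1))^{1/q}$ is traded off against the $1/k!$, is a bound of the form $(C\|\Delta\|/\alpha_B)^{k}\cdot\mathrm{poly}(k)$ with $\alpha_B=\min\sigma(\re B)$, so the $1/k!$ does not survive. The conclusion (summability for $\|\Delta\|$ below a threshold depending on $\re B$) is still correct. The paper avoids this bookkeeping entirely by noting the uniform pointwise domination $|h_n(\bvp^\ot)|\le 2|p(\bvp^\ot)|e^{-\frac12\alpha_B\Tr(\bvp^\ot)}$ for the difference between the partial sum and the limit, and then invoking dominated convergence directly. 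That is a cleaner way to run the same estimate, and you might consider restating your argument in that form to sidestep the Stirling computation.
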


\begin{proof} Let  $\Xi$ denote the collection of complex symmetric $m\times m$ matrices $B$  with $\re B > 0$.   Note that $\Xi$ can be identified with an  
open connected subset of $\CC^{\frac 1 2 m(m+1)}$. Given $B\in\Xi$, we set  $\alpha_{B} = \min \sigma (\re B)>0$.   
For each $B\in \Xi$ and $p\in [1,\infty)$ we let  $\Hh^{(B)}$ and  $\Hh_p^{(B)}$ denote the closures of $\PE(B)$ in $\widehat\Hh$ and  
$\widehat\Hh_p$, respectively.

Given $B\in \Xi$, let $C$ be a complex symmetric $m\times m$ matrix  such that $\norm{C} \le \frac 1 2\alpha_{B}$, so $B+C \in \Xi$ with $\alpha_{B+C}\ge \frac 1 2\alpha_{B}$. Consider
\begin{align}
 h_n(\bvp^\ot)   =   p(\bvp^\ot)\,\left\{e^{-\Tr(B \bvp^\ot)}\left[ \sum_{s=0}^n \frac1{s!} (-\Tr(C\bvp^\ot))^s\right]\;-\;e^{-\Tr((B+C)\bvp^\ot)}\right\}\;, 
\end{align}
where $p(M)$ is a polynomial on the entries of $M \in \Sym^+(m)$.  We have uniform bounds in $n$:
\beq
\abs{h_n(\bvp^\ot)}\le 2\,|p(\bvp^\ot)|e^{-\Tr((\re B)\bvp^\ot)+|\Tr(C\bvp^\ot)|}\le 2 |p(\bvp^\ot)|e^{-\frac 1 2\alpha_{B} \Tr(\bvp^\ot)}.
\eeq
 Letting $n\to\infty$, $h_n$ converges point-wise to zero.
Similar statements hold for the derivatives $D_{\bar\aaa,\aaa} h_n$. 
Thus, $h_n\to 0$ in   $\Hh_p$   by dominated convergence.  It follows that $\PE(B+C)\subset   \Hh_p^{(B)}$, and hence $\Hh_p^{(B+C)}\subset \Hh_p^{(B)}$.

If  $B\in \Xi$ and  $C$ is a complex symmetric $m\times m$ matrix with  $\norm{C} \le \frac 1 3\alpha_{B}$, it follows that $\alpha_{B+C}\ge \frac 2 3 \alpha$, so $\norm{C} \le \frac 1 2\alpha_{B+C}$, and we have  $\Hh_p^{(B)}\subset \Hh_p^{(B+C)}$.  We thus conclude that $\Hh_p^{(B+C)}= \Hh_p^{(B)}$ if $\norm{C} \le \frac 1 3\alpha_{B}$.

Thus, for all $B \in \Xi$ we have that  $\Xi_B:=\set{B^\prime \in \Xi : \Hh_p^{(B^\prime)}= \Hh_p^{(B)}}$ 
and $\Xi\setminus \Xi_B$ are open subsets of $\Xi$.
Since $\Xi$ is connected we conclude that for all $B \in \Xi$ we
have $\Xi_B=\Xi$, and hence  $\Hh_p^{(B)}=\Hh_p$ for all $p\in[1,\infty)$.

The same argument applies to $\Hh$, which is the same as $\Hh_{2}$ except for a scalar factor in the norm.
\end{proof}

The supersymmetric Fourier transform $T$ maps Gaussian functions, 
$\bvp \mapsto \exp(-\Tr(B\bvp^\ot))$, into Gaussian functions. 
Letting {$B=B_0+tB_1$} and expanding in $t$, one recognizes that $T$ leaves $\PE(m)$ invariant. 
It follows that $T$ is a unitary operator on $\Hh$ and  a bounded operator from $\Hh_1$ to $\Hh_\infty$.
This proves  the following lemma.

\begin{lemma} \label{lem-T} 
The supersymmetric Fourier transform $T$ is a unitary operator on $\Hh$ and $\widehat\Hh$, and  a bounded operator from $\Hh_1$ to $\Hh_\infty$ and from $\widehat\Hh_1$ to $\widehat\Hh_\infty$. 
\end{lemma}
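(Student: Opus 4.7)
The plan is to bootstrap everything from the corresponding statements for the ``wide'' spaces $\widehat\Hh$ and $\widehat\Hh_p$, which have already been verified in the discussion preceding the lemma: unitarity on $\widehat\Hh$ follows from the identities \eqref{eq-T-parts}--\eqref{eq-def-norms} together with $T^2 = I$ on $\Ss_n(\Sym^+(m))$, and boundedness from $\widehat\Hh_1$ to $\widehat\Hh_\infty$ follows from the trivial pointwise estimate $|e^{i\Tr(\bvp'\bvp^t)}|\le 1$ combined again with \eqref{eq-T-parts}. Since $\Hh$ and $\Hh_p$ are defined as the closures of $\PE(m)$ inside $\widehat\Hh$ and $\widehat\Hh_p$, the entire task reduces to showing that $T(\PE(m))\subset \PE(m)$.

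The first and main step is to prove invariance of $\PE(m)$ under $T$. Fix a complex symmetric $B$ with $\re B>0$. For the pure Gaussian $f_B(M)=e^{-\Tr(MB)}$, a standard superGaussian computation gives $(Tf_B)((\BPhi')^{\od 2})=c(B)\,e^{-\frac14\BPhi'\cdot B^{-1}\BPhi'}$ with an explicit constant $c(B)$, and this value again defines a function in $\PE(\tfrac14 B^{-1})\subset\PE(m)$. To extend to polynomial prefactors, I would use the trick announced in the paragraph preceding the lemma: pick any symmetric $B_1$ and consider $B+tB_1\in\Xi$ for $|t|$ small. Writing
\begin{equation*}
T(f_{B+tB_1})=T\!\left(e^{-\Tr(MB)}\sum_{k\ge 0}\frac{(-t)^k}{k!}(\Tr(MB_1))^k\right)=\sum_{k\ge 0}\frac{(-t)^k}{k!}\,T\!\left((\Tr(MB_1))^k e^{-\Tr(MB)}\right),
\end{equation*}
and expanding the right-hand side in powers of $t$ (both the prefactor $c(B+tB_1)$ and the exponent $-\tfrac14\BPhi'\cdot(B+tB_1)^{-1}\BPhi'$ are real-analytic in $t$, yielding polynomial times $e^{-\frac14\BPhi'\cdot B^{-1}\BPhi'}$ coefficients), I can match coefficients in $t$ and conclude that every $T\!\left((\Tr(MB_1))^k e^{-\Tr(MB)}\right)$ lies in $\PE(\tfrac14 B^{-1})$. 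Varying $B_1$ over a basis of $\Sym(m)$ (say the elementary symmetric matrices $E_{ij}+E_{ji}$) and taking products, the monomials $(\Tr(ME_{ij}))^{k_{ij}}$ linearly span all polynomials in the entries of $M$, so $T(\PE(B))\subset\PE(\tfrac14 B^{-1})\subset\PE(m)$, and hence $T(\PE(m))\subset\PE(m)$.

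With invariance established, the rest is routine. Since $\PE(m)\subset \Ss_n(\Sym^+(m))$ and $T^2=I$ on the latter, continuity of $T$ on $\widehat\Hh$ and the density of $\PE(m)$ in $\Hh$ give $T^2=I$ on $\Hh$; combined with $T(\Hh)\subset\Hh$ this produces a bijection, and the isometry inherited from $\widehat\Hh$ makes it unitary on $\Hh$. For the bounded map from $\Hh_1$ to $\Hh_\infty$: given $f\in \Hh_1$ and a sequence $f_n\in\PE(m)$ with $f_n\to f$ in $\widehat\Hh_1$, the images $Tf_n\in \PE(m)$ converge in $\widehat\Hh_\infty$ to $Tf$, so $Tf$ lies in the $\widehat\Hh_\infty$-closure of $\PE(m)$, i.e.\ in $\Hh_\infty$, and the operator norm bound transfers directly.

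The only step with any real content is the first: producing the formula $T(f_B)=c(B)\,e^{-\frac14\BPhi'\cdot B^{-1}\BPhi'}$ in the supersymmetric setting, and justifying the analyticity of both $c(B+tB_1)$ and $(B+tB_1)^{-1}$ in $t$ needed for the Taylor-expansion argument. This is where I would expect the main obstacle, since one must carry out the Gaussian superintegral and extract the precise $B$-dependence; everything afterward is a soft functional-analytic packaging using the continuity of $T$ on the already-established spaces.
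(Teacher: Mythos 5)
Your proof is correct and follows essentially the same plan as the paper's one-paragraph proof: show $T(\PE(m))\subset\PE(m)$ via the Gaussian computation and a Taylor expansion in the matrix parameter, then transfer unitarity on $\Hh$ and boundedness $\Hh_1\to\Hh_\infty$ from the already-established $\widehat\Hh$-statements by density. Two minor remarks: the constant $c(B)$ is identically $1$ (this is exactly the content of the supersymmetric Gaussian identity \eqref{eq-SSint}), and to span all polynomial prefactors from the expansion you should either vary $B_1$ over all of $\Sym(m)$ and use polarization ($\{(\Tr(MB_1))^k : B_1\in\Sym(m)\}$ spans the degree-$k$ homogeneous polynomials) or perform a multi-parameter expansion $B+\sum_\alpha t_\alpha B_1^{(\alpha)}$ over a basis, since a single-parameter expansion in one fixed $B_1$ yields only powers of one linear form, not products --- but this is the same level of informality as the paper's own ``letting $B=B_0+tB_1$ and expanding in $t$''.
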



\section{The averaged matrix Green's  function} \label{sec-avgreen}


We fix an arbitrary site in $\B$ which we will call the origin and denote by $0$.  
Given two nearest neighbors 
sites $x,y \in \B$, we denote by ${\B}^{(x|y)}$ the lattice  obtained from $\B$ by removing
the branch emanating from $x$ that passes through $y$; if we do not specify which 
branch was removed we will simply write ${\B}^{(x)}$.  Each vertex in  ${\B}^{(x)}$ has degree 
$K +1$, with the single exception of $x$ which has degree $K$. 
Given $\La \subset \B$, we will use $\,H_{\lb, \La}$ to denote the operator $\,H_{\lb}$  
restricted to $\ell^2 (\Lambda,\CC^m)$ with Dirichlet boundary condition  {(i.e., free boundary condition)}.
The matrix Green's function corresponding to $\,H_{\lb, \La}$ will be denoted by  	
\begin{equation}
G_{\lb, \La}\, (x,y;z)\;\; :=\;\;\left[\left\langle {x,j|(H_{\lb, \La} -z)^{-1}|y,k} \right\rangle\right]_{j,k\in\{1,\ldots,m\}},
\end{equation}
where $x,y \in {\La}$ 
and $z = E + i\eta$ with $E \in \RR$, $\eta> 0$.
Special important choices of $\Lambda$ are the sets $\B_L$, denoting all sites {$y \in \B$ with  $d(0,y)\leq L$}, and
$\B^{(x|y)}_L$ denoting all sites $x' \in \B^{(x|y)}$ with $d(x,x')\leq L$.
We will use the Green's matrix at the origin very often, therefore let us define
\beq
G_\lb(z)\;{:=}\; G_\lb(0,0;z)\;.
\eeq

For special choices of $\Lambda$ let us also introduce the following notations:
\beq
\begin{array}{lclclcl}
H_{\lb,L}            &{:=}&  H_{\lb,\B_L}      &\qquad& 
G_{\lb,L}(z)         &{:=}&  G_{\lambda,\B_L}(0,0;z)  \\
H_{\lb}^{(x|y)}      &{:=}&  H_{\lb, {\B}^{(x|y)}} & &
G_{\lb}^{(x|y)}(z)   &{:=}&  G_{\lb,{\B}^{(x|y)}}(x,x;z) \\
H_{\lb,L}^{(x|y)}    &{:=}&  H_{\lb, {\B}^{(x|y)}_L} & &
G_{\lb,L}^{(x|y)}(z) &{:=}&  G_{\lb,{\B}^{(x|y)}_L}(x,x;z) \\
H_{\lb}^{(x)}        &{:=}&  H_{\lb, {\B}^{(x)}} & &
G_{\lb}^{(x)}(z)     &{:=}&  G_{\lb,{\B}^{(x)}}(x,x;z) \\
\end{array}
\eeq

To each site $x\in\B$ we assign independent supermatrix variables $\BPhi_x = 
(\bvp_x,\BPsi_x)\,\in\,\Ll_{m,n}(\BPsi_x)$, i.e. $\bvp_x$ is a variable varying in
$\RR^{m\times 2n}$ and $\BPsi_x=((\overline{\psi}_x)_{k,\ell},(\psi_x)_{k,\ell})_{k,\ell}$
where the $(\psi_x)_{k,\ell}, (\overline{\psi}_x)_{k,\ell}$ are all independent
Grassmann variables. 
Let $B$ be an operator on $\ell^2(\B,\CC^m)$ and $B_\Lambda$ its restriction to $\ell^2(\Lambda,\CC^m)$ for a subset
$\Lambda\subset\B$. 
For $x,y\in\Lambda$ we define $\langle x|B_\Lambda|y \rangle$ 
to be the $m\times m$ matrix with entries $(\langle x,j|B_\Lambda|y,k\rangle)_{j,k}$.
Furthermore, for a finite subset $\Lambda\subset\B$, we
define 
\beq D_\Lambda\BPhi = \prod\limits_{x\in\Lambda} D\BPhi_x\;,
\eeq 
with $D\BPhi_x$ as in \eqref{eq-def-DPhi}, and
\beq
\langle \BPhi|B_\Lambda-z|\BPhi\rangle =  \sum_{x,y\in\Lambda}\, \BPhi_x\,\cdot\;\langle x\,|\,B_\Lambda-z\,|\,y\,\rangle\,\BPhi_y\;.
\eeq
We will use this notation for $B_\Lambda =H_{\lb,L}$ and $B_\Lambda=H^{(x|y)}_{\lb,L}$, where $\Lambda=\B_L$ or $\B_L^{(x|y)}$.

We now take  $\im z >0$, $\Lambda\subset\B$  finite, and $x,y\in\Lambda$, and state 
 identities that are crucial for our analysis.
 
  If $B_\Lambda$ is an operator on $\ell^2(\Lambda,\CC^m)$, 
 symmetric (i.e., $\langle x,j|B_\Lambda|y,k\rangle=\langle y,k |B_\Lambda|x,j\rangle$) with a
strictly positive real part, then
 \cite[Theorem~III.1.1]{KSp2}
\begin{equation}\label{eq-SSint}
 \int e^{-\langle \BPhi|B_\Lambda|\BPhi\rangle}\, D_{\Lambda} \BPhi = 1\;.
\end{equation}

The supersymmetric replica trick \cite{B,E,K} gives  
\begin{equation}\label{eq-SSrpltrick}
 \left[G_{\lambda,\Lambda}(x,y;z)\right]_{j,k} = 
i\int (\psi_x)_{j,\ell} (\overline{\psi}_y)_{k,\ell}\,e^{-i\langle\BPhi|H_{\lambda,\Lambda}-z|\BPhi\rangle}\,D_{\Lambda}\BPhi 
\end{equation} 
for any  $\ell\in\{1,\ldots,n\}$.

As $H_{\lambda,\Lambda}$ is a real operator ({ i.e.,} it leaves $\ell^2(\Lambda,\RR^m)$ invariant) the resolvent
$G_{\lambda,\Lambda}$ is symmetric. This means
$
\left[G_{\lambda,\Lambda}(x,y;z)\right]_{j,k}  = \left[G_{\lambda,\Lambda}(y,x;z)\right]_{k,j} 
$.
Therefore one can replace $(\psi_x)_{j,s} (\overline{\psi}_y)_{k,s}$ in \eq{eq-SSrpltrick}  by
\beq
-\frac1n \left[\BPsi_x J \BPsi^t_y\right]_{j,k} = - \frac{1}{n} \sum_{\ell=1}^n \frac12 \left[ (\overline{\psi}_y)_{k,\ell}(\psi_x)_{j,\ell}
+\overline{\psi}_x)_{j,\ell}(\psi_y)_{k,\ell} \right],
\eeq
obtaining
\begin{equation} \label{eq-Gr}
 G_{\lambda,\Lambda}(x,y;z) = 
-\frac{i}{n} \int \BPsi_x J \BPsi^t_y\,e^{-i\langle\BPhi|H_{\lambda,\Lambda}-z|\BPhi\rangle}\,D_{\Lambda}\BPhi\,.
\end{equation}

 In particular,
\begin{align} \label{eq-Greensmatrix-1}
& G_{\lambda,L}(0,0;z) = -\frac{i}{n}
\int  \BPsi_0^\ot \, e^{-i\,\langle 
\BPhi| H_{\lambda,L}\,-z |\BPhi\rangle}\,D_{\B_L}\BPhi\;\\ \notag
& \quad =
 -\frac{i}{n} \int  \BPsi_0^\ot \, e^{i\BPhi_0\cdot \,(z-\lb V(0)-A)\,\BPhi_0} \Bigg[ \prod_{\substack{x\in \B\\d(x,0)=1}} e^{-i \BPhi_0\,\cdot\,\BPhi_x-i\langle \BPhi|H^{(x|0)}_{\lambda,L-1}\,-z|\BPhi\rangle}
D_{\B^{(x|0)}_{L-1}}\BPhi \Bigg]  D\BPhi_0 . \notag
\end{align}
In order to simplify this equation one uses
\begin{equation} \label{eq-Greensmatrix-2}
\int\, e^{-i\BPhi_0\cdot\BPhi_x- i\langle \BPhi|H^{(x|0)}_{\lambda,L-1} -z|\BPhi\rangle} 
\,D_{\B^{(x|0)}_{L-1}}\BPhi
 = 
e^{(i/4)\,\BPhi_0 \cdot G^{(x|0)}_{\lambda,L-1}(z)\BPhi_0}\;,
\end{equation}
which can be obtained from \eqref{eq-SSint} by completing the square.
{We plug \eqref{eq-Greensmatrix-2} into \eqref{eq-Greensmatrix-1},  take the limit $L$ to infinity,
and write $\BPhi$ for the supermatrix variable $\BPhi_0$, obtaining}
\begin{equation}
\label{eq-Greensmatrix-3}
G_\lb(z) = 
-\frac{i}{n}\int {\BPsi^\ot} \; e^{i\BPhi \cdot (z-\lb V(0)-A)\BPhi}\:
e^{\left(\frac{i}{4}\! \sum\limits_{x:d(x,0)=1} \BPhi \cdot G^{(x|0)}_{\lambda} \BPhi \right)}\;D\BPhi.
\end{equation}

If in \eqref{eq-Greensmatrix-2} we repeat the argument used in \eqref{eq-Greensmatrix-1} and let $L\to\infty$,  then
 for $d(x,0)=1$ we get
\begin{equation}\label{eq-Greensmatrix-4}
 e^{\frac i 4\,\BPhi\,\cdot\, G^{(x|0)}_{\lambda}(z)\BPhi} = 
\int e^{-i\BPhi \cdot \BPhi'} e^{i  \BPhi'\,\cdot\,(z-\lambda V(x)-A)\BPhi'}
e^{\left(\frac{i}{4}\!\! \sum\limits_{y:d(y,x)=1, y\neq 0}\!\! \BPhi' \,\cdot\, G^{(y|x)}_{\lambda}(z) \BPhi' \right)} D\BPhi'.
\end{equation}

For any $z=E+i\eta$ in the upper half plane, i.e., $\eta>0$, we define  $\ze_{\lb,z}\in  C^\infty_n(\Sym^+(m))$ by
\begin{equation}\label{eq-zeta}
\ze_{\lb,z} (\bvp^{\od 2}) =  \E\left( e^{ \frac{i}{4} \Tr( G_{\lb}^{(0)}(z)\,\bvp^{\od 2})}\right)
 = \E\left(\e^{\frac{i}{4}\,\bvp\,\cdot\,G_{\lb}^{(0)}(z)\,\bvp} \right)\;.
\end{equation}

\begin{theorem}   For any $\lb \in \RR$, $E \in \RR$ and $\eta > 0$ one has
\begin{equation}
\E(G_{\lb} (z)) = 
 -\frac{i}{n} \int   {\BPsi^\ot}\,e^{i\BPhi\cdot(z -A)\BPhi}\, h(\lb  \BPhi^{\od 2}) \,[\ze_{\lb,z} (\BPhi^{\od 2})]^{K+1}\,  D\BPhi  \;,
\label{eq-EG}
\end{equation}
and 
\begin{equation} \label{eq-zeta-recursion}
\ze_{\lb,z} (\BPhi^{\od 2}) = \int e^ { - i \BPhi\cdot\BPhi'}\,
 \left\{e^{i \BPhi'\,\cdot\,(z-A)\BPhi' }\, h(\lb  \BPhi'^{\od 2}) \,[\ze_{\lb,z} (\BPhi'^{\od 2})]^K \right\}\, D\BPhi' \;.
\end{equation}
\end{theorem}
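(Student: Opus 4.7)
The plan is to obtain \eqref{eq-EG} and \eqref{eq-zeta-recursion} by applying the expectation $\E$ to the identities \eqref{eq-Greensmatrix-3} and \eqref{eq-Greensmatrix-4} respectively, and exploiting two structural properties of the model. First, vertex-transitivity of $\B$ together with the i.i.d.\ assumption on the $\{V(x)\}_{x\in\B}$ implies that for any nearest neighbor $x$ of $0$ the rooted tree $\B^{(x|0)}$ (rooted at $x$, of degree $K$) is isomorphic to $\B^{(0)}$ (rooted at $0$, of degree $K$), so $G^{(x|0)}_\lb(z)$ and $G^{(0)}_\lb(z)$ have the same distribution; in particular $\E(e^{\frac{i}{4}\BPhi\cdot G^{(x|0)}_\lb(z)\BPhi}) = \zeta_{\lb,z}(\BPhi^{\od 2})$. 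Second, the $K+1$ subtrees $\{\B^{(x|0)}\}_{x:\,d(x,0)=1}$ are pairwise disjoint and none contains $0$, so the Green's functions $\{G^{(x|0)}_\lb(z)\}_{x:\,d(x,0)=1}$ are mutually independent and jointly independent of $V(0)$; the analogous statement with the branch through $0$ further removed is what is needed for \eqref{eq-zeta-recursion}.

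For \eqref{eq-EG}, I take $\E$ in \eqref{eq-Greensmatrix-3}. Since $\im z>0$ provides Gaussian decay in $\bvp$ uniformly in the randomness, Fubini lets me interchange $\E$ with the supersymmetric integral and with the finite Grassmann expansion. The independences above factor the random part of the integrand as
\begin{equation*}
\E\bigl(e^{-i\lb\BPhi\cdot V(0)\BPhi}\bigr)\,\prod_{x:\,d(x,0)=1} \E\bigl(e^{\frac{i}{4}\BPhi\cdot G^{(x|0)}_\lb\BPhi}\bigr).
\end{equation*}
The $K+1$ factors on the right each equal $\zeta_{\lb,z}(\BPhi^{\od 2})$ by the first structural fact. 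For the $V(0)$ factor, I use $\BPhi\cdot V(0)\BPhi=\Tr(V(0)\BPhi^{\od 2})$ and identify $\E(e^{-i\lb\BPhi\cdot V(0)\BPhi}) = h(\lb\BPhi^{\od 2})$: expanding the exponential in the nilpotent part $\BPsi^{\od 2}$ gives a finite sum of mixed moments of $V(0)$ against $e^{-i\lb\Tr(V(0)\bvp^{\od 2})}$, which match term by term the formal Taylor expansion of $h$ at $\lb\bvp^{\od 2}$ that defines $h(\lb\BPhi^{\od 2})$ via \eqref{eq-supersym-Taylor}; Assumption (II) provides existence and boundedness of all derivatives needed. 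This yields \eqref{eq-EG}.

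For \eqref{eq-zeta-recursion}, I apply $\E$ to \eqref{eq-Greensmatrix-4}. The left side equals $\zeta_{\lb,z}(\BPhi^{\od 2})$ by vertex-transitivity. On the right, $V(x)$ is independent of the $K$ Green's functions $\{G^{(y|x)}_\lb:d(y,x)=1,\,y\neq 0\}$, which are mutually independent and each distributed as $G^{(0)}_\lb$; Fubini combined with the identifications $\E(e^{-i\lb\BPhi'\cdot V(x)\BPhi'})=h(\lb\BPhi'^{\od 2})$ and $\E(e^{\frac{i}{4}\BPhi'\cdot G^{(y|x)}_\lb\BPhi'})=\zeta_{\lb,z}(\BPhi'^{\od 2})$ produces \eqref{eq-zeta-recursion}. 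The main obstacle is the supersymmetric identification $\E(e^{-i\lb\BPhi\cdot V(0)\BPhi}) = h(\lb\BPhi^{\od 2})$: it requires a careful matching of the finite Grassmann expansion of the left side with the formal Taylor definition of $h$ on supermatrices, relying on the existence of all moments of $\mu$. Once this identification is secured, the rest of the argument is a routine bookkeeping using the independence and homogeneity statements above.
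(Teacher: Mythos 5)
Your proposal is correct and takes essentially the same approach as the paper, which compresses the argument to a single sentence (``take expectations in \eqref{eq-Greensmatrix-3} and \eqref{eq-Greensmatrix-4} and use i.i.d.''). You have simply filled in the implicit steps — the independence and distributional-equivalence structure of the subtrees, the Fubini justification, and the supersymmetric identification $\E\bigl(e^{-i\lb\BPhi\cdot V(0)\BPhi}\bigr)=h(\lb\BPhi^{\od 2})$ via the formal Taylor expansion and Assumption (II) — all of which match the paper's intended reasoning.
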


\begin{proof}  If we  take expectations in  \eqref{eq-Greensmatrix-3} and  \eqref{eq-Greensmatrix-4}, with respect to the potential's 
probability distribution, and recall that the $V(x)$,  $x \in {{\B}}$,  are independent, identically 
distributed random variables, we get \eqref{eq-EG} and \eqref{eq-zeta-recursion}.
\end{proof}

 Note that the Hamiltonian $H_0$ (i.e., $\lambda=0$) splits into a direct sum of shifted Laplacians on $m$ copies of the
Bethe lattice. The Laplacians are   shifted by the energies $a_i$, $i=1,\ldots,m$, where $A=\diag(a_1,\ldots,a_m)$.
In this case we can calculate $G_{0}^{(0)} (z) $  as in \ci{AK}
and obtain
\begin{equation}
\ze_{0,z} (\bvp^{\od 2}) =  \prod_{k=1}^m 
e^{ \frac{ i }{2K } \left[\,-z+a_k +\sqrt{(z-a_k)^2 - K}\,\right]\varphi_k^2 } \;,\label{eq-ze0}
\end{equation}
where $\varphi_k^2=\sum_\ell \varphi_{k,\ell}\cdot \varphi_{k,\ell}$ 
and \mbox{Im $\sqrt{ \ } >0$}.  If $E\,\in\, I_{A,K}$, {\it i.e.} 
$ |E-a_k| <\sqrt{ K}$ for all $k=1,\ldots,m$, then we have the point-wise limit  
\begin{equation}
\ze_{0,E} (\bvp^{\od 2})\;{:=}\;\lim_{\eta \downarrow 0} \ze_{0,z} (\bvp^{\od 2}) =
e^{-i \bvp\cdot A_E\,\bvp}, \label{eq-ze00}
\end{equation}
where $A_E$ is the diagonal matrix 
\begin{align}\label{eq-def-AE}
A_E &= \tfrac{1}{2K }\left((E-A)  \,-\,i\sqrt{ K- (E-A)^2} \right)\;, \qtx{i.e.,}\\
 (A_E)_{k,k^\prime}&= \tfrac{1}{2K }\pa{(E-a_k)  -i\sqrt{ K- (E-a_k)^2} }\delta_{k,k^\prime}, \quad k,k^\prime =1,2,\ldots,m. \notag
\end{align}

In order to write \eqref{eq-zeta-recursion} in a compact way, let us introduce
the operator 
\beq
B_{\lb,z} =M(e^{i\bvp \cdot (z-A) \bvp } h(\lb  \bvp ^{\od 2})),
\eeq
where for a given function $g\in C^\infty_n(\Sym^+(m))$ we use $M(g)$, or $M(g(\bvp^\ot))$, 
to denote the operator given by multiplication by $g(\bvp^\ot)$:   
\begin{equation}
(M(g)f)(\bvp^\ot) \,:=\, g(\bvp^\ot)f(\bvp^\ot)\;.
\end{equation}
Then  \eqref{eq-zeta-recursion} can be written as $\zeta_{\lb,z} = TB_{\lambda,z} \zeta_{\lb,z}^K$ with $T$ as defined in \eqref{eq-def-T}.
The crucial observation is that this is a fixed point equation in $\Hh_\infty$.

\begin{prop} \label{zeta} We have:
\begin{enumerate}
\item[{\rm (i)}] For $\eta=\im z\geq 0$ the operator $B_{\lb,z}$ is a bounded operator on $\widehat\Hh_1$, leaving  $\Hh_1$ invariant, and the map $(\lambda,E,\eta,f)\mapsto TB_{\lambda,E+i\eta} f^K$ defines a continuous map from 
$\RR\times\RR\times[0,\infty)\times \Hh_\infty$ to $\Hh_\infty$.

\item[{\rm (ii)}]   $\ze_{\lb,z} \in {\Hh}_\infty$ for all $\lb \in \RR$ and $z=E+i\eta$ with 
$\eta >0$.   The map $(\lb,E, \eta) \to \ze_{\lb,E +i\eta}$ is continuous from
$\RR \times \RR \times (0,\infty)$ to $ {\Hh}_\infty$.

\item[{\rm (iii)}]  If $E\,\in\,I_{A,K}$, then  $\ze_{0,E} \in {\Hh}_\infty$ and
\begin{equation}
\lim_{\eta \downarrow 0} \ze_{0,E+i\eta} = \ze_{0,E} \;\;\;\mbox{in}\;\; {\Hh}_\infty  \;.   
\end{equation}

\item[{\rm (iv)}] {The equality}  \eqref{eq-zeta-recursion} can be rewritten as a fixed point equation in 
${\Hh}_\infty$: 
\begin{equation} \label{eq-fpz}
\ze_{\lb,z}  = TB_{\lb,z} \ze_{\lb,z}^K \;,
\end{equation}
valid for all $\lb \in \RR$ and $z=E+i\eta$ with $\eta >0$, and also valid for  $\lb=0$ and
 $z=E$ with $E\,\in\, I_{A,K}$. 
\end{enumerate}
\end{prop}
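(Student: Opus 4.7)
My plan is to address the four parts in order, with part (i) providing the technical estimates used throughout. For (i), I identify $B_{\lambda,z}$ with multiplication by $g_{\lambda,z}(M):=e^{i\Tr((z-A)M)}h(\lambda M)$ on $\Sym^+(m)$. Since $M=\bvp^{\od 2}\in\Sym^+(m)$ and $\eta=\im z\ge 0$, the pointwise bound $|e^{i\Tr((z-A)M)}|=e^{-\eta\Tr M}\le 1$ holds, and every partial derivative $\partial^\alpha_M g_{\lambda,z}$ is a polynomial in $(\lambda,z,A)$ times $e^{i\Tr((z-A)M)}$ times a derivative of $h$, hence uniformly bounded on $\Sym^+(m)$ by Assumption (II). Applying the Leibniz rule \eqref{eq-Leibn} componentwise in the graded sum defining $\hn\cdot\hn_1$ yields $\|g_{\lambda,z}f\|_{\widehat\Hh_1}\le C_{\lambda,z}\|f\|_{\widehat\Hh_1}$, with $C_{\lambda,z}$ locally bounded. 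To verify invariance of $\Hh_1$, density reduces the problem to showing $B_{\lambda,z}\PE(m)\subset\Hh_1$. Writing $h(\lambda M)=\int e^{-i\lambda\Tr(VM)}d\mu(V)$ and taking $f(M)=p(M)e^{-\Tr(BM)}\in\PE(B)$, one has $B_{\lambda,z}f=\int p(M)e^{-\Tr((B-i(z-A)+i\lambda V)M)}d\mu(V)$; for each $V$ the integrand lies in $\PE(m)$ since $\re(B-i(z-A)+i\lambda V)=\re B+\eta I>0$. Approximating this Bochner integral by Riemann sums in the $\widehat\Hh_1$-norm, with convergence controlled by the moment hypothesis of Assumption (II), places the integral in the closure $\Hh_1$.

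The continuity assertion is the delicate part. I decompose the map as $f\mapsto f^K\mapsto B_{\lambda,z}f^K\mapsto TB_{\lambda,z}f^K$. The first arrow is continuous from $\Hh_\infty$ to $\Hh_1$: for $f\in\Hh_\infty$ each $D_{\bar a,a}f$ lies in $L^2\cap L^\infty$ (as limits in both norms of Schwartz functions), so Leibniz combined with H\"older's inequality, taking the exponents $p_i=K$ for the $\hn\cdot\hn_1$ part and $p_i=2K$ for the $\hn\cdot\hn_2$ part followed by $L^p$-interpolation between $L^2$ and $L^\infty$, yields $\|f^K\|_{\widehat\Hh_1}\le C\|f\|_{\widehat\Hh_\infty}^K$; this requires $K\ge 2$, which is precisely Assumption (I), and the same multilinear estimate gives Lipschitz continuity on bounded sets. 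The second arrow is jointly continuous in $(\lambda,E,\eta,f)$ via the uniform estimates on $D_{\bar a,a}g_{\lambda,z}$ from above. The third arrow is bounded $\Hh_1\to\Hh_\infty$ by Lemma \ref{lem-T}. Composition yields the claim. The main obstacle I expect is the careful bookkeeping of these multilinear H\"older estimates across the graded sums defining both $\hn\cdot\hn_1$ and $\hn\cdot\hn_2$, ensuring that all bounds are locally uniform in the parameters down to $\eta=0$.

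For (ii), I start from $\zeta_{\lambda,z}(M)=\E(e^{(i/4)\Tr(G_\lambda^{(0)}(z)M)})$. For $\eta>0$, the resolvent identity gives $\im G_\lambda^{(0)}(z)=\eta\,\langle 0|(H_\lambda^{(0)}-z)^{-*}(H_\lambda^{(0)}-z)^{-1}|0\rangle>0$ as an $m\times m$ matrix almost surely, so $\re(-iG_\lambda^{(0)}(z)/4)>0$ and the integrand lies pointwise in $\PE(m)$ with uniform bounds $\|G_\lambda^{(0)}(z)\|\le 1/\eta$. Approximating $\E$ by finite sums and using joint continuity of $G_\lambda^{(0)}(z)$ in $(\lambda,E,\eta)\in\RR\times\RR\times(0,\infty)$ combined with dominated convergence in the graded $\widehat\Hh_\infty$-norm yields $\zeta_{\lambda,z}\in\Hh_\infty$ with the stated continuous dependence. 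For (iii), equations \eqref{eq-ze00}--\eqref{eq-def-AE} give $\zeta_{0,E}(M)=e^{-\Tr(iA_E M)}$ with $\re(iA_E)=-\im A_E>0$ when $E\in I_{A,K}$, so $\zeta_{0,E}\in\PE(iA_E)\subset\Hh_\infty$; the convergence $\zeta_{0,E+i\eta}\to\zeta_{0,E}$ in $\Hh_\infty$ as $\eta\downarrow 0$ is then immediate from \eqref{eq-ze0} with derivative bounds uniform for small $\eta$. Finally, (iv) is \eqref{eq-zeta-recursion} rewritten using $T$ and $B_{\lambda,z}$: both sides lie in $\Hh_\infty$ by (i)--(iii); for $\eta>0$ the identity is the pointwise integral equation, while the case $\lambda=0$ and $E\in I_{A,K}$ follows by passing to the limit $\eta\downarrow 0$ in the $\Hh_\infty$-norm, using the continuity established in (i) and (iii).
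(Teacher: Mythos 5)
Your treatment of part (i) follows the same structure as the paper (boundedness of $B_{\lambda,z}$ via the graded Leibniz expansion, invariance of $\Hh_1$ by approximation, continuity by composing $f\mapsto f^K$, $B_{\lambda,z}$, and $T$). Your Bochner-integral phrasing of the $\PE(m)\to\Hh_1$ step is a mild repackaging of the paper's three-stage approximation (point measures, then compactly supported measures, then general $\mu$ by truncation), and it works for the same reason: the $\widehat\Hh_1$-norm of each integrand $p(M)e^{-\Tr((B+\eta I - i(E-A) +i\lambda V)M)}$ grows at most polynomially in $V$, and the moment hypothesis makes the vector-valued integral converge. Parts (iii) and (iv) also match the paper's route.

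There is, however, a genuine gap in your part (ii). You propose to show $\ze_{\lb,z}\in\Hh_\infty$ directly from the expectation \eqref{eq-zeta}, approximating $\E$ by finite sums and invoking ``dominated convergence in the graded $\widehat\Hh_\infty$-norm'' using only the uniform bound $\norm{G_\lb^{(0)}(z)}\le 1/\eta$. That bound controls the $L^\infty$ part of the $\widehat\Hh_\infty$-norm of the individual Gaussians $e^{(i/4)\Tr(G\bvp^\ot)}$, but it gives no control on the $L^2$ part: the $\hn\cdot\hn_2$-norm of such a Gaussian scales like $(\det\im G)^{-n/2}$, which blows up when $\im G_\lb^{(0)}(z)$ is small, and for a given $\eta>0$ the matrix $\im G_\lb^{(0)}(z)$ has no potential-independent lower bound (it degenerates like $\eta(1+\abs{\lb}\norm{V(0)})^{-2}$). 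So there is no dominating $L^1(\mu)$-function coming from $\norm{G}\le 1/\eta$ alone, and the Bochner integral in $\widehat\Hh_\infty$ is not obviously convergent. You could patch this by proving the quantitative lower bound $v^*\im G_\lb^{(0)}(z)v\ge \eta\,\norm{v}^4/\norm{(H^{(0)}_\lb-z)(\delta_0\otimes v)}^2$, which gives $\im G_\lb^{(0)}(z)\gtrsim \eta\,(1+\norm{V(0)})^{-2}I$, and then using the moment hypothesis on $\mu$ (moments up to order $mn$ suffice for the $L^2$ part). But the paper avoids this entirely: it first shows $B_{\lb,z}\ze_{\lb,z}^K\in\Hh_p$ for $\eta>0$, where the factor $e^{i\Tr((z-A)\bvp^\ot)}$ supplies a uniform Gaussian weight $e^{-\eta\Tr\bvp^\ot}$ making all $L^p$ estimates trivial, and then concludes $\ze_{\lb,z}=TB_{\lb,z}\ze_{\lb,z}^K\in\Hh_\infty$ from the recursion \eqref{eq-zeta-recursion} and Lemma~\ref{lem-T}. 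I recommend adopting the paper's detour through the fixed-point identity in (ii), which is both shorter and independent of any lower bound on $\im G_\lb^{(0)}$.
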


\begin{proof}
(i): There are polynomials $p_{\bar\aaa,\aaa}$ defined on the complex symmetric matrices such that
for any complex symmetric matrix $B$ one has
\beq
e^{i \BPhi\cdot B\BPhi} = \sum_{(\bar\aaa,\aaa)\in\Pp_m^n}\,
p_{\bar\aaa,\aaa}(B) e^{i \Tr(B \bvp^\ot)}\;\Psi_{\bar\aaa,\aaa}\;.
\eeq
Therefore
\beq
e^{i\BPhi\cdot (z-A) \BPhi} h(\lb  \BPhi^{\od 2}) =
\sum_{(\bar\aaa,\aaa)\in\Pp_m^n} \E\left[ p_{\bar\aaa,\aaa}(z-A-\lambda V(0)) e^{i \Tr(z-A-\lb V(0)) \bvp^\ot)} \right] \Psi_{\bar\aaa,\aaa} ,
\eeq
since all moments of the random variable $V(0)$ are finite.
It follows that
\begin{equation}
\hn B_{\lb,z} f \hn_p \;\leq\;   {C_{m,n,p} \,}\left[ \max_{(\bar\aaa,\aaa)}
\E  \left| p_{\bar\aaa,\aaa}(z-A-\lb V(0) \right|\right] \hn f \hn_p  \qtx{for} p \in [1,\infty). 
\end{equation}
In particular,  $B_{\lb,z}$  is a bounded operator on  $\widehat\Hh_1$. In order to show that it leaves $\Hh_1$ invariant it suffices to show that $\PE(m)$ is mapped to $\Hh_1$.

Let $f(\bvp^\ot)=p(\bvp^\ot)e^{-Tr(B\bvp^\ot)}\,\in\,\PE(m)$. 
If the distribution $\mu$ of $V(0)$ is a point measure then it is easy to see that $B_{\lambda,z} f\,\in\, \PE(m)\subset\Hh_1$.
If the distribution $\mu$ has compact support, we take a sequence of point probability measures $\mu_n$, $\supp \mu_n \subset \supp \mu$,
 which converge weakly to $\mu$.  Let $B^n_{\lambda,z}$ denote the corresponding operators 
when $\mu$ is replaced by $\mu_n$. As $p_{\bar\aaa,\aaa}(z-A-\lb V(0))$ can be replaced by bounded continuous functions in $\lb V(0)$ 
(deviating from the original function outside the support of $\mu$) 
we get by weak convergence that $B_{\lambda,z}^n f \in\Hh_1$ converges point-wise  on $\Sym^+(m)$, together with  all its derivatives, to $B_{\lb,z} f$. By dominated convergence this is true with respect to  the $\hn\cdot\hn_1$ and $\hn\cdot\hn_2$ norms, hence
$B_{\lb,z}f \in \Hh_1$.
Finally if $\mu$ is a measure such that all moments exist, we approximate it by the 
compactly supported measures $\mu 1_{\{\|V\|<n\}}$ and use dominated convergence again to obtain $B_{\lambda,z} f \in \Hh_1$.

Let $\lambda_n \to \lambda$, $z_n \to z$ with non-negative imaginary parts, and $f_n\to f$ in $\Hh_1$.  Since
$\|B_{\lambda_n,z_n} f_n - B_{\lambda,z,} f\|_{\Hh_1}\le \|B_{\lambda_n,z_n} f_n-B_{\lambda_n,z_n} f\|_{\Hh_1}+\|B_{\lambda_n,z_n} f - B_{\lambda,z} f\|_{\Hh_1}$
it follows from the uniform boundedness of $B_{\lambda_n,z_n}$ in operator norm and dominated convergence 
that both converge to zero.
Hence $(\lambda,E,\eta,f)\mapsto B_{\lambda,E+i\eta}f$ is a continuous map from $\RR\times\RR\times[0,\infty)\times \Hh_1$
to $\Hh_1$.

Let $f, g_n \in\widehat\Hh_\infty$, $\norm{f -g_{n}}_{\widehat{\Hh}_{\infty}} \to 0$.
Using \eqref{eq-Leibn} and H\"older's inequality in various ways
one sees that
$\hn f^K-g_n^K\hn_p=\hn (f-g_n)(f^{K-1}+f^{K-2}g+\ldots+g^K)\hn_p$ converges to zero for $p=1,2$.
Moreover the map $f\mapsto f^K$ leaves $\PE(m)$ invariant. Hence $f\mapsto f^K$ defines a continuous map from
$\widehat \Hh_\infty$ to $\widehat\Hh_1$, mapping $\Hh_\infty$ to $\Hh_1$.
Continuity of  $(\lambda,E,\eta,f)\mapsto T B_{\lb,E+i\eta}f^K$ 
from $\RR\times\RR\times[0,\infty)\times \Hh_\infty$ to $\Hh_\infty$ now follows from
the continuity of $(\lambda,E,\eta,f)\mapsto B_{\lb,E+i\eta} f$ as shown above  since 
$T$ is continuous from $\Hh_1$ to $\Hh_\infty$.

 (ii): If $\eta >0$, we can similarly show that $B_{\lb,z}\ze_{\lb,z}^K  \in \Hh_p$ for all $p\in [1,\infty]$.
It follows from  \eqref{eq-zeta-recursion} that  \eqref{eq-fpz} holds, so  
$\ze_{\lb,z} \in {\Hh}_\infty$  for all $\lb \in \RR$ and 
\mbox{$z=E+i\eta$} with $\eta >0$.  To prove the continuity, note that for any fixed potential $\Vv$,
\mbox{$G^{(0)}_{\lb} ( E +i\eta)$}  is a continuous function of  
$  (\lb, E,\eta) \in \RR \times \RR \times (0, \infty)$
with {$  \norm{G^{(0)}_{\lb}\, ( E +i\eta)} \le \frac{1}{\eta} $}.  The continuity
in (ii) then follows from the Dominated Convergence Theorem.

 Part (iii) is proven by explicit computations and dominated convergence.  
Finally, part (iv)  follows from \eqref{eq-zeta-recursion} and parts (i)-(iii). In particular, 
\eqref{eq-fpz} for  $\lb=0$ and $z=E$ with $E\in I_{A,K}$ follows from continuity arguments.
 \end{proof}


\section{The averaged squared matrix Green's  function} \label{sec-avsqgreen}


In order to obtain absolutely continuous spectrum, we consider the expectation of $|G_\lb(z)|^2 = (G_\lb(z))^* G_\lb(z)$.
To do so, let us introduce  independent supermatrices $\BPhi_+=(\bvp_+,\BPsi_+)$ and $\BPhi_-=(\bvp_-,\BPsi_-)$.
Furthermore, let $C^\infty((\Sym^+(m))^2)$ denote the {set of continuous functions on $(\Sym^+(m))^2$ that  are
$C^\infty$ on its interior.} For $f\in C^\infty((\Sym^+(m))^2)$, 
$\det \bvp_+^\ot \neq 0$ and $\det\bvp_-^\ot \neq 0$, define
\beq \label{eq-2expand}
f(\BPhi_+^\ot,\BPhi_-^\ot)=
\sum_{(\bar\aaa,\aaa),(\bar\bbb,\bbb) \in \Pp_m^n} D^{(+)}_{\bar\aaa,\aaa} D^{(-)}_{\bar\bbb,\bbb} f( \bvp_+^\ot,\bvp_-^\ot)\,
(\BPsi_+)_{\bar\aaa,\aaa}(\BPsi_-)_{\bar\bbb,\bbb}\,,
\eeq
where $D^{(+)}_{\bar\aaa,\aaa}$ denotes the differential operator
$D_{\bar\aaa,\aaa}$ with respect to the entries of the matrix $\bvp_+^\ot$ and
$D^{(-)}_{\bar\bbb,\bbb}$ the operator
$D_{\bar\bbb,\bbb}$ with respect to  entries of the matrix $\bvp_-^\ot$.
We define $C^\infty_n((\Sym^+(m))^2)$ to be the set of functions
$f\in C^\infty((\Sym^+(m))^2)$ {such that  $F_f(\BPhi_+,\BPhi_-)=f(\BPhi_+^\ot,\BPhi_-^\ot)$ extends
to a $C^\infty$ superfunction.}  We also define the subspace $\Ss_n((\Sym^+(m))^2)$ of functions
$C^\infty_n((\Sym^+(m))^2)$ {such that} $F_f$ is in the Schwartz space.

{Note that by construction $F_f(\BPhi_+,\BPhi_-)$ is {separately} supersymmetric in both {supervariables}, i.e.,  for supersymmetries
$u_+, u_-$ acting on $\Ll_{m,n}(\BPsi_+)$ and $\Ll_{m,n}(\BPsi_-)$ respectively, one has
$F_f(u_+\BPhi_+, u_-\BPhi_-)=F_f(\BPhi_+,\BPhi_-)$.
}

For $\lb \in \RR$, $E \in \RR$ and $\eta > 0$ let us introduce $\xi_{\lb,z}\in C^\infty_n((\Sym^+(m))^2)$
\begin{equation}
 \xi_{\lb,z} ( \bvp_+^\ot\,,\, \bvp_-^\ot) \,=\,
\E \left(\exp{ \left\{\frac{i}{4} \Tr\left ( G_{\lb}^{(0)}(z) \,\bvp_+^\ot\, - \,
\overline{ G_{\lb}^{(0)}(z) }\,\bvp_-^\ot\right)\right\}}\right)    \label{xia} \;.
\end{equation}
Note that $\overline{ G_{\lb}^{(0)}(z) } = \left[ G_{\lb}^{(0)}(z) \right]^*$ since it is a symmetric matrix.

\begin{theorem}
One has
\begin{align}\label{eq-EGG}
\E\,\left| G_{\lb}(z)\right|^2 &= \tfrac{1}{n^2}\int\;
 \BPsi_+^\ot\, \BPsi_-^\ot\;
e^{i [\BPhi_+ \cdot(z-A)\BPhi_+\,-\,\BPhi_-\cdot(\bar z -A)\BPhi_-]}\;
\\ 
&  \; \qquad \qquad \qquad \times\; h(\lambda(\BPhi_+^{\od 2}-\BPhi_-^{\od 2}))\;
\left[\xi_{\lb,z}(\BPhi_+^{\od 2} , \BPhi_-^{\od 2})\right]^{K+1}\;D\BPhi_+\,D\BPhi_-  \notag
\end{align}
and
\begin{align} \label{eq-xi-recursion}
 \xi_{\lb,z} ( \BPhi_+^\ot, \BPhi_- ^\ot) & =  
\int\;
e^ { - i (\BPhi_+\cdot\BPhi_+' -  \BPhi_-\cdot\BPhi_-')}\;
e^{i [\BPhi_+ \cdot(z-A)\BPhi_+\,-\,\BPhi_-\cdot(\bar z -A)\BPhi_-]}\;\\
  & \qquad \qquad \quad \times\; h(\lb ({\BPhi_+'}{^\ot}-{\BPhi_-'}{^\ot}))
[\xi_{\lb,z} ( {\BPhi_+'}{^\ot}, {\BPhi_-'}{^\ot})]^K \; D\BPhi_+'\, D\BPhi_-'  \nonumber  \;.
\end{align}
\end{theorem}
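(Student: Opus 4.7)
The plan is to mirror the proof of the previous theorem, replacing the single supermatrix variable at each site by a pair of independent supermatrices $\BPhi_{\pm,x}$ that keep track of the two factors in $|G_\lb(z)|^2=\overline{G_\lb(z)}\,G_\lb(z)$ (using that the symmetric matrix $G_\lb(z)$ satisfies $G_\lb(z)^*=\overline{G_\lb(z)}$). The replica identity \eqref{eq-Gr} represents $G_\lb(z)$ as a supersymmetric integral in $\BPhi_+$, and its formal complex conjugate (with $i\to -i$ inside the quadratic exponent) represents $\overline{G_\lb(z)}$ as the same type of integral in an independent copy $\BPhi_-$. Both resulting exponents produce a Gaussian damping of real part $-\eta\,\bvp_\pm\cdot\bvp_\pm$ because $\eta=\im z>0$, so the joint integral is absolutely convergent.

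To derive \eqref{eq-EGG}, I would carry out the computation in finite volume $\La=\B_L$ and then let $L\to\infty$. Multiplying the two replica representations, isolating the origin from its $K+1$ nearest neighbors, and completing the square Gaussian integrals over each subtree $\B_{L-1}^{(x|0)}$ exactly as in \eqref{eq-Greensmatrix-2} yields, per subtree, the factor $e^{(i/4)\BPhi_+\cdot G_{\lb,L-1}^{(x|0)}(z)\BPhi_+}\, e^{-(i/4)\BPhi_-\cdot\overline{G_{\lb,L-1}^{(x|0)}(z)}\BPhi_-}$ in the $\BPhi_\pm$ variables attached to the origin. Taking $L\to\infty$ and then expectation, by independence of $\{V(y)\}_{y\in\B}$ the expectation factorizes: the origin potential $V(0)$ produces the factor $h\bigl(\lb(\BPhi_+^\ot-\BPhi_-^\ot)\bigr)$ (via the definition of the characteristic function $h$ with argument $M=\lb(\BPhi_+^\ot-\BPhi_-^\ot)$), while each of the $K+1$ disjoint subtrees emanating from the origin contributes one independent copy of $\xi_{\lb,z}(\BPhi_+^\ot,\BPhi_-^\ot)$ according to \eqref{xia}. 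Combining these factors with the remaining exponentials at the origin, the Grassmann prefactor $\BPsi_+^\ot\BPsi_-^\ot$ coming from the two replica formulas, and the overall $1/n^2$, gives \eqref{eq-EGG}.

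To derive \eqref{eq-xi-recursion}, the same mechanism is iterated once more at a neighbor $x$ of the origin within the restricted lattice $\B^{(x|0)}$: represent each of $e^{(i/4)\BPhi\cdot G^{(x|0)}_\lb(z)\BPhi}$ and its complex conjugate as integrals over new supervariables $\BPhi'_\pm$ localized at $x$, in analogy with \eqref{eq-Greensmatrix-4}. The cross-term $e^{-i(\BPhi_+\cdot\BPhi_+'-\BPhi_-\cdot\BPhi_-')}$ arises from the Laplacian hopping between $0$ and $x$. Since $x$ has only $K$ remaining neighbors in $\B^{(x|0)}$, the expectation over the independent subtree potentials produces $[\xi_{\lb,z}]^K$, while the potential $V(x)$ at $x$ yields $h\bigl(\lb({\BPhi'_+}^\ot-{\BPhi'_-}^\ot)\bigr)$. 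Taking expectation of the defining identity \eqref{xia} for $\xi_{\lb,z}$ then produces \eqref{eq-xi-recursion}.

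The main technical obstacle is the careful handling of the conjugate replica representation: one must verify that formally conjugating \eqref{eq-Gr} and reinterpreting the Grassmann algebra as an independent copy yields a valid supersymmetric integral, that the Grassmann prefactor multiplying the combined exponential is precisely $\BPsi_+^\ot\BPsi_-^\ot$ in the order demanded by \eqref{eq-EGG}, and that all sign conventions remain consistent when the factors are multiplied and expanded in replicas. Once this bookkeeping is settled, the remainder of the argument is a direct two-variable analogue of the derivation of \eqref{eq-EG} and \eqref{eq-zeta-recursion}, relying only on the Gaussian identity \eqref{eq-SSint}, Fubini, and the independence of $\{V(y)\}_{y\in\B}$.
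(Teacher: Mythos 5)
Your proposal follows the same route as the paper: represent both $G_\lb(z)$ and $\overline{G_\lb(z)}$ via the supersymmetric replica trick over two independent copies of supervariables $\BPhi_\pm$, multiply, and factor the resulting expectation using independence of $\{V(x)\}_{x\in\B}$, with the $K+1$ (resp.\ $K$) disjoint subtrees each contributing a copy of $\xi_{\lb,z}$. The paper streamlines this by simply multiplying the already-established identities \eqref{eq-Greensmatrix-3} and \eqref{eq-Greensmatrix-4} by their complex conjugates (with the conjugate copy expressed in the $\BPhi_-$ variables) and then taking expectations, rather than rederiving everything from the finite-volume setup as you propose; the latter is harmless but redundant.
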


\begin{proof}  From \eqref{eq-Greensmatrix-3} we get
\begin{align}\label{GG}
\left| G_{\lb} (z)\right|^2&  = 
\tfrac{1}{n^2}\int \BPsi_+^\ot\, \BPsi_-^\ot\;
e^{i \BPhi_+\cdot(z-\lb V(0)-A)\BPhi_+ - i \BPhi_-\cdot(\bar z - \lb V(0)-A)\BPhi_-}\; \\
&  \qquad   \times \exp\left(\frac{i}{4}\,\sum\limits_{x:d(x,0)=1} (\BPhi_+\cdot G_\lb^{(x|0)}(z) \BPhi_+\,-\,
\BPhi_- \cdot \overline{G_\lb^{(x|0)}(z)} \BPhi_-)\right)\, D\BPhi_+\,D\BPhi_-  . \notag
\end{align}
Taking expectations we get \eqref{eq-EGG}.  To prove \eqref{eq-xi-recursion}, we use \eqref{eq-Greensmatrix-4}, \eqref{xia}, 
and take expectations.
\end{proof}   

For $\lb=0$ we have {
\begin{equation}
 \xi_{0,z} ( \bvp_+^\ot, \bvp_- ^\ot)\, = \, \ze_{0,z}( \bvp_+^\ot) \overline{ \ze_{0,z}( \bvp_-^\ot)} \;.
  \label{xi0}
\end{equation}}
Again, as in (\ref{eq-ze00}), when $ E\,\in\,I_{A,K}$ we have the point wise limit {
\begin{equation}
 \xi_{0,E} ( \bvp_+^\ot, \bvp_- ^\ot)\,=\,\lim_{\eta \downarrow 0}\xi_{0,z} ( \bvp_+^\ot, \bvp_- ^\ot) = 
 e^{-i\bvp_+\,\cdot\, A_E\,\bvp_+\;+\;i\bvp_-\,\cdot\,\overline{A_E}\,\bvp_-}  \;,\label{xi00}
\end{equation}}
with $A_E$ as defined in \eqref{eq-def-AE}.

We want to {rewrite} \eqref{eq-xi-recursion} as a fixed point equation {similar to
\eqref{eq-fpz}. To do so} we need to introduce some tensor spaces and tensor norms.
First let us introduce $\PE(m)^{\otimes 2} = \PE(m)\otimes \PE(m)$ which is the vector space spanned by functions
$g(\bvp_+^\ot,\bvp_-^\ot) =  f(\bvp_+^\ot) \tilde f(\bvp_-^\ot)$ where
$f,\tilde f\,\in\,\PE(m)$.
On $\PE(m)^{\otimes 2}$ we define for $1\le p \le \infty$ the tensor norms
\begin{equation}
\hnn g \hnn_p^2  \; = 
\sum_{\substack{(\bar\aaa,\aaa)\in \Pp_m^n\\(\bar\bbb,\bbb)\in \Pp_m^n}}
\left\| 2^{|\aaa|+|\bbb|}\,D^{(+)}_{\bar \aaa,\aaa}\, D^{(-)}_{\bar\bbb,\bbb} \, g(\bvp_+^\ot, \bvp_-^\ot) \,
\right\|^2_{{L}^p(\bvp_+,\bvp_-)} ,
\end{equation}
where $\|\cdot\|_{L^p(\bvp_+,\bvp_-)}$ denotes the $p$-norm of the $L^p$ space on ${\left(\RR^{m\times 2n}\right)}^2$
in the variables $\bvp_+,\bvp_-$ {with respect to} the Lebesgue measure $d^{2mn} \bvp_+\, d^{2mn} \bvp_-$.
Now the Hilbert space tensor product $\K=\Hh\otimes\Hh$ is the completion of $\PE(m)^{\otimes 2}$ {with respect to}
$\hnn\cdot\hnn_2$.
Furthermore we set ${\T} = T \otimes T$, so $\T$ is unitary on $\K$.  We also define {
\begin{equation}
{\cB}_{\lb,z} = 
M(\e^{i [\bvp_+\cdot(z-A)\bvp_+\,-\,\bvp_-\cdot(\bar z -A)\bvp_-]} h(\lb (\bvp_+^\ot-\bvp_- ^\ot))  )\;,
\end{equation}}
where
$M(g(\bvp_+^\ot,\bvp_-^\ot ))$ denotes multiplication by the function $g(\bvp_+^\ot,\bvp_-^\ot ) $.

To handle the nonlinear equations  \eqref{eq-EG}, \eqref{eq-zeta-recursion},  \eqref{eq-EGG}) and \eqref{eq-xi-recursion},  we  
introduce the Banach spaces
\begin{equation}
{\K}_p   =  \{g \in {\K}\;,\;\;\|g\|_{{\K}_p}  =  \hnn g\hnn_2 +\hnn g\hnn_p  < \infty \}  \;,
\end{equation}
for $1 \le p \le \infty$.

\begin{prop}  \label{xita} We have:
\begin{enumerate}
\item[{\rm (i)}] For $\eta=\im z\geq 0$ the operator $\cB_{\lb,z}$ is a bounded operator on $\K_1$.
For $K\geq 2$ the map $(\lambda,E,\eta,g)\mapsto \T\cB_{\lambda,E+i\eta} g^K$ defines a continuous map
from $\RR\times\RR\times [0,\infty)\times \K_\infty$ to $\K_\infty$.

\item[{\rm (ii)}] $\xi_{\lb,z} \in {\K}_\infty$ for all $\lb \in \RR$ and $z=E+i\eta$ with 
$\eta >0$.   The map $(\lb,E, \eta) \to \xi_{\lb,E +i\eta}$ is continuous from
$\RR \times \RR \times (0,\infty)$ to $ {\K}_\infty$.

\item[{\rm (iii)}]  If $E\,\in\,I_{A,K}$, then
 $\xi_{0,E} \in {\K}_\infty$ and
\begin{equation}
\lim_{\eta \downarrow 0} \xi_{0,E+i\eta} = \xi_{0,E} \;\;\;\mbox{in}\;\; {\K}_\infty   \;. 
\end{equation}

\item[{\rm (iv)}]  {The equality} \eqref{eq-xi-recursion} can be rewritten as a fixed point equation in 
${\K}_\infty$:   
\begin{equation}\label{eq-fpx}
\xi_{\lb,z}   = {\T}{\cB}_{\lb,z} \xi_{\lb,z}^K \;,
\end{equation}
valid for all $\lb \in \RR$ and $z=E+i\eta$ with $\eta >0$, and also valid for  $\lb=0$ and
 $z=E$ with  $E\in I_{A,K}$. 
 \end{enumerate}
\end{prop}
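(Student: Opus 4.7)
The plan is to follow the proof of Proposition~\ref{zeta} step for step, replacing every single-variable object $\Hh$, $B_{\lb,z}$, $T$, $\ze_{\lb,z}$ by its two-variable analogue $\K=\Hh\otimes\Hh$, $\cB_{\lb,z}$, $\T=T\otimes T$, $\xi_{\lb,z}$. The key structural observation that makes this work without any genuinely new idea is that, after writing
\[
h(\lb(\bvp_+^\ot-\bvp_-^\ot)) = \E\bigl(e^{-i\lb[\bvp_+\cdot V(0)\bvp_+-\bvp_-\cdot V(0)\bvp_-]}\bigr),
\]
the coupled multiplication operator $\cB_{\lb,z}$ factors as an $\E$-average of \emph{tensor products} of the single-variable multiplication operators already analyzed in Proposition~\ref{zeta}(i):
\[
\cB_{\lb,z} = \E\Bigl(M\bigl(e^{i\bvp_+\cdot(z-A-\lb V(0))\bvp_+}\bigr)\otimes M\bigl(e^{-i\bvp_-\cdot(\bar z-A-\lb V(0))\bvp_-}\bigr)\Bigr).
\]

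For part (i), I would expand each exponential in Grassmann variables to express $\cB_{\lb,z}$ as a finite sum indexed by $(\bar\aaa,\aaa),(\bar\bbb,\bbb)\in\Pp_m^n$ whose coefficients are polynomials in $(z-A-\lb V(0))$ and $(\bar z-A-\lb V(0))$ averaged against $\mu$. Hypothesis~(II) then gives a uniform bound $\hnn\cB_{\lb,z} g\hnn_p\le C_{m,n,p}(\lb,z)\hnn g\hnn_p$ for $p\in[1,\infty)$, so $\cB_{\lb,z}$ is bounded on $\widehat\K_1$. That it sends $\PE(m)^{\otimes 2}$ into $\K_1$ follows exactly as in Proposition~\ref{zeta}(i): verify the claim first for $\mu$ a point mass (Gaussians are sent to elements of $\PE(m)^{\otimes 2}$ up to polynomial prefactors), then for compactly supported $\mu$ by weak approximation and dominated convergence, then for a general $\mu$ by truncating to $\set{\norm{V}<n}$. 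For the $K$-th power, the two-variable version of the Leibniz identity \eqref{eq-Leibn} (applied in each variable separately) together with H\"older's inequality shows that $\hnn g^K-g_n^K\hnn_p$ is controlled by $\hnn(g-g_n)\sum_{j=0}^{K-1}g^{K-1-j}g_n^j\hnn_p$ for $p\in\{1,2\}$, which goes to zero provided $K-1\ge 1$ so one can place at least one factor in $\hnn\cdot\hnn_p$ and the rest in $\hnn\cdot\hnn_\infty$; this is where $K\ge 2$ enters. Since $T:\Hh_1\to\Hh_\infty$ is bounded by Lemma~\ref{lem-T}, $\T=T\otimes T$ is bounded from $\K_1$ to $\K_\infty$, and composing yields the continuity of $(\lb,E,\eta,g)\mapsto\T\cB_{\lb,E+i\eta}g^K$ from $\RR\times\RR\times[0,\infty)\times\K_\infty$ to $\K_\infty$.

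For part (ii), the pointwise bound $\norm{G^{(0)}_\lb(E+i\eta)}\le 1/\eta$ combined with dominated convergence applied to the Grassmann expansion \eqref{eq-2expand} of $\xi_{\lb,z}$ gives $\xi_{\lb,z}\in\K_\infty$ and continuity of $(\lb,E,\eta)\mapsto\xi_{\lb,E+i\eta}$ on $\RR\times\RR\times(0,\infty)$. Part (iii) is essentially a corollary of Proposition~\ref{zeta}(iii): the factorization \eqref{xi0} expresses $\xi_{0,z}$ as the simple tensor $\ze_{0,z}\otimes\overline{\ze_{0,z}}$, and the tensor-norm identity $\hnn f\otimes\tilde f\hnn_p=\hn f\hn_p\,\hn\tilde f\hn_p$ (immediate from the definitions) converts $\Hh_\infty$-convergence of each factor as $\eta\downarrow 0$ into $\K_\infty$-convergence of the product. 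Part (iv) is then immediate from \eqref{eq-xi-recursion} for $\eta>0$, and the extension to $\lb=0$, $E\in I_{A,K}$, $\eta=0$ is obtained by passing to the limit using the continuity established in (i)--(iii). The only conceptually new difficulty compared with Proposition~\ref{zeta} is that the coupling through $h(\lb(\bvp_+^\ot-\bvp_-^\ot))$ prevents any \emph{a~priori} factorization of $\cB_{\lb,z}$ or $\xi_{\lb,z}$ at nonzero disorder; the averaged-tensor-product representation above is precisely what reduces everything to the single-variable estimates.
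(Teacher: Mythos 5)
Your proposal is correct and takes the same approach as the paper, whose own proof consists of the single sentence that it is completely analogous to the proof of Proposition~\ref{zeta}. The observations you make explicit — writing $\cB_{\lb,z}$ as the $\mu$-average of tensor products of single-variable multiplication operators, and using the factorization $\hnn f\otimes \tilde f\hnn_p = \hn f\hn_p\,\hn \tilde f\hn_p$ together with \eqref{xi0} to reduce part (iii) to Proposition~\ref{zeta}(iii) — are precisely what makes the stated analogy go through, and your accounting of where $K\ge 2$ is used (to land $g^K$ in $\K_1$ from $g\in\K_\infty$ via H\"older) is accurate.
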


\begin{proof}  The proof is completely analogous to the proof of Proposition~\ref{zeta}. 
\end{proof}


\section{A fixed point analysis} \label{sec-fxp}


In this section we will analyze the fixed point equations \eqref{eq-fpz} and \eqref{eq-fpx} in more detail.
A crucial ingredient is given by the following lemma.

We let $\Delta(m,\ZZ_+)$ denote the collection of  $m\times m$  upper triangular matrices with non-negative integer entries. Given   $J=(J_{j,k})_{j,k}\in \Delta(m,\ZZ_+)$, we set $|J| = \sum_{j\leq k} J_{j,k}$.

\begin{lemma} \label{lemma-CE}
Let $E\,\in\,I_{A,K}$ and define the operator $C_E = TB_{0,E} M(\zeta_{0,E}^{K-1})$. 
\begin{enumerate}
\item[{\rm (i)}] $C_E$ is a bounded operator on $\Hh$ and $\Hh_\infty$. Moreover,
$C_E^2$ is a compact operator on  $\Hh$ and $\Hh_\infty$. 

\item[{\rm (ii)}]
The eigenvalues of    $C_E$ as an operator on the Hilbert space  $\Hh$ are given by
\begin{equation} \label{eq-lbj}
 \lambda_{J} = \prod_{\substack{j,k\in\{1,\ldots,m\}\\j\leq k}} \left[4\;(A_E)_{j,j} (A_E)_{k,k}\right]^{J_{j,k}}\;,
\quad  J=(J_{j,k})_{j,k}\in \Delta(m,\ZZ_+),
\end{equation}
where 
$A_E$ is the diagonal matrix  defined in \eqref{eq-def-AE}.\\
The corresponding eigenfunctions $\set{f_{J} : J\in \Delta(m,\ZZ_+)}$  are of the form
\begin{equation}\label{eq-fj}
 f_J(\bvp^\ot) = \left((\bvp^\ot)^J\,+\,p_J(\bvp^\ot)\right)\,\zeta_{0,E}(\bvp^\ot)\in\PE(m)\;,
\end{equation}
where
 \beq
(\bvp^\ot)^J\; : =\;\prod\limits_{\substack{j,k\in\{1,\ldots,m\}\\j\leq k}} \left[(\bvp^\ot)_{j,k}\right]^{J_{j,k}}
\eeq
is a monomial of degree $|J|$ and $p_J$ is a polynomial of degree strictly less than $|J|$.   
Moreover,  
\beq  \label{eq-lbj2}
\lambda_J\not= K^{-1}  \qtx{and} |\lambda_J| = {K^{- |J|}} \qtx{for all}  J\in \Delta(m,\ZZ_+),
\eeq 
and
\beq \label{eq-spectrum-CE}
K^{-1} \notin \sigma_\Hh(C_E)= \{\lambda_J: \, J\in \Delta(m,\ZZ_+) \}\cup\{0\}.
\eeq

\item[{\rm (iii)}] The spectrum of  $C_E$ as an operator on $\Hh_\infty$ is the same
as  its spectrum as an operator on $\Hh$:  
\beq \label{eq-spectrum-CEinfty}
\sigma_{\Hh_\infty}(C_E) = \sigma_\Hh(C_E).
\eeq
\end{enumerate}
\end{lemma}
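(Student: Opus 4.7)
My plan is to treat the three items in order. For (i), I would exploit the Gaussian decay of $\zeta_{0,E}^{K-1}$. Since $E\in I_{A,K}$, a direct calculation from \eqref{eq-def-AE} gives $\im((A_E)_{jj}) = -\tfrac{1}{2K}\sqrt{K-(E-a_j)^2} < 0$, so $|\zeta_{0,E}(\bvp^\ot)|^{K-1}\le e^{-c\,\Tr(\bvp^\ot)}$ for some $c > 0$. Multiplication by such a rapidly decreasing function is bounded on each $\Hh_p$ and in fact sends $\widehat\Hh$ into $\widehat\Hh_1$. Combined with the boundedness of $B_{0,E}$ from Proposition~\ref{zeta}(i) and Lemma~\ref{lem-T} (giving $T$ unitary on $\Hh$ and bounded $\Hh_1\to\Hh_\infty$), this yields boundedness of $C_E$ on both $\Hh$ and $\Hh_\infty$. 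For the compactness of $C_E^2$ I would factor
\[
C_E^2 \;=\; T B_{0,E} \,\bigl[\, M(\zeta_{0,E}^{K-1})\, T B_{0,E}\, M(\zeta_{0,E}^{K-1})\,\bigr]
\]
and recognize the bracketed operator as Hilbert--Schmidt: the two Gaussian multiplications sandwiching the Fourier-type operator $TB_{0,E}$ produce a supersymmetric Mehler-type integral kernel that is Gaussian in both arguments.

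For (ii) the anchor is the identity $C_E\zeta_{0,E}=\zeta_{0,E}$, which is just \eqref{eq-fpz} at $\lambda=0,\,z=E$; it gives the eigenpair $(\lambda_0,f_0)=(1,\zeta_{0,E})$. To obtain all other eigenpairs I would apply $C_E$ to the generating family $e^{t\Tr(J\bvp^\ot)}\zeta_{0,E}= e^{-i\bvp\cdot(A_E+itJ)\bvp}$, indexed by symmetric matrices $J$. A short calculation using the algebraic identity $(E-A)-KA_E = (4A_E)^{-1}$ (immediate from \eqref{eq-def-AE}) together with the supersymmetric Gaussian-transform formula $T\bigl(e^{i\bvp\cdot M\bvp}\bigr)=e^{-i\bvp\cdot M^{-1}\bvp/4}$ (obtained by completing the square in \eqref{eq-def-T} and applying \eqref{eq-SSint}) yields
\[
C_E\bigl[\,e^{t\Tr(J\bvp^\ot)}\zeta_{0,E}\,\bigr]\;=\;\exp\!\Bigl(-\tfrac{i}{4}\,\bvp\cdot\bigl((4A_E)^{-1}-itJ\bigr)^{-1}\bvp\Bigr).
\]
Expanding both sides jointly in the entries of $J$ and in $t$ shows that $C_E$ preserves the filtration $V_N=\{p(\bvp^\ot)\zeta_{0,E}:\deg p\le N\}$, and identifying the leading (highest-degree) coefficient gives that on each quotient $V_N/V_{N-1}$ the operator $C_E$ acts diagonally with eigenvalues $\lambda_J=\prod_{j\le k}[4(A_E)_{jj}(A_E)_{kk}]^{J_{j,k}}$ as in \eqref{eq-lbj}. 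Solving an upper-triangular system inductively in $V_{|J|}$ then produces the eigenfunctions $f_J$ of the form \eqref{eq-fj}. The norm identity $|\lambda_J|=K^{-|J|}$ follows from the elementary $|(A_E)_{jj}|^2=\tfrac{1}{4K}$. Writing $(A_E)_{jj}=\tfrac{1}{2\sqrt K}e^{-i\theta_j}$ with $\theta_j\in(0,\pi)$ gives $\lambda_J=K^{-|J|}\,e^{-i\sum_{j\le k}J_{j,k}(\theta_j+\theta_k)}$; for $\lambda_J=K^{-1}$ one would need simultaneously $|J|=1$ and $\theta_j+\theta_k\in 2\pi\ZZ$, which is impossible since $\theta_j+\theta_k\in(0,2\pi)$.

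For the full spectral identification in (ii) I would combine two facts. By Lemma~\ref{lem-PEB} applied with $B=iA_E$ (note $\re(iA_E)>0$), $\PE(iA_E)=\mathrm{span}\{(\bvp^\ot)^J\zeta_{0,E}\}=\mathrm{span}\{f_J\}$ is dense in $\Hh$. Compactness of $C_E^2$ (from (i)) plus the standard Riesz--Schauder argument applied to $A^2$-compact operators force every nonzero $\mu\in\sigma_\Hh(C_E)$ to be an isolated eigenvalue with finite-dimensional generalized eigenspace and Riesz projection $P_\mu$ commuting with $C_E$. Since $P_\mu f_J=0$ whenever $\lambda_J\ne\mu$, if $\mu\notin\{\lambda_J\}$ then $P_\mu$ vanishes on the dense subspace $\mathrm{span}\{f_J\}$ and hence $P_\mu=0$, contradicting $\mu\in\sigma_\Hh(C_E)$. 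Part (iii) is then immediate: the continuous inclusion $\Hh_\infty\hookrightarrow\Hh$ gives $\sigma_{\Hh_\infty}(C_E)\setminus\{0\}\subset\sigma_\Hh(C_E)\setminus\{0\}$, while each $f_J\in\PE(m)\subset\Hh_\infty$ remains an eigenfunction there, so $\{\lambda_J\}\subset\sigma_{\Hh_\infty}(C_E)$; combined with $0\in\sigma_{\Hh_\infty}(C_E)$ (compactness of $C_E^2$ on $\Hh_\infty$) this yields equality.

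The main obstacle is the core of part (ii): making the generating-function identity rigorous as an equality in $\Hh$ (or at least in $\PE(iA_E)$) so that term-by-term expansion in $t$ produces a well-defined triangular matrix representation of $C_E$ in the basis $\{(\bvp^\ot)^J\zeta_{0,E}\}$. One must verify the Gaussian convergence condition $\re\bigl(-i[(4A_E)^{-1}-itJ]\bigr)>0$ for small $t$ (which is elementary), then carry out the combinatorial bookkeeping of the two Taylor expansions and confirm that the upper-triangular solve produces eigenfunctions landing in $\PE(iA_E)$. Once this algebra is in place, the density and compactness arguments that identify the spectrum and transfer it to $\Hh_\infty$ are essentially formal.
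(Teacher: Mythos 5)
Your proposal follows essentially the same route as the paper on all three parts: boundedness via the Gaussian decay of $\zeta_{0,E}^{K-1}$; the generating-function identity obtained by completing the square in the supersymmetric Fourier transform; the algebraic identity $(KA_E + A - E)^{-1} = -4A_E$; the triangular action of $C_E$ on the filtration by polynomial degree; the density of $\mathrm{span}\{f_J\}$ via Lemma~\ref{lem-PEB}; and the Riesz--Schauder argument to pin down the spectrum and transfer it to $\Hh_\infty$. Your explicit $\theta_j$-argument for $\lambda_J \ne K^{-1}$ is a nice elaboration of what the paper dismisses as ``explicit computation.''

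The one place where you genuinely deviate from the paper, and where you should be a little more careful, is the compactness of $C_E^2$. You propose to show that $M(\zeta_{0,E}^{K-1})\,TB_{0,E}\,M(\zeta_{0,E}^{K-1})$ is Hilbert--Schmidt via a Mehler-type Gaussian kernel. That does give compactness of $C_E^2$ on the Hilbert space $\Hh$, but Hilbert--Schmidt is an $L^2$ notion and says nothing directly about $\Hh_\infty$, whose norm involves $L^\infty$. The paper instead approximates $B_{0,E}M(\zeta_{0,E}^{K-1})TB_{0,E}M(\zeta_{0,E}^{K-1})$ in operator norm by operators of the form $M(\beta)TM(\beta)$ with smooth compactly supported $\beta$, and invokes the Acosta--Klein / Klein--Speis result that such operators are compact \emph{both} on $L^2(\RR^{m\times 2n})$ \emph{and} on $C_b(\RR^{m\times 2n})$; the latter is exactly what is needed for compactness from $\widehat\Hh_\infty$ to $\widehat\Hh_1$, and hence on $\Hh_\infty$. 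Your Gaussian-kernel idea can be made to deliver the $C_b$ compactness as well (a jointly continuous kernel with uniform Gaussian decay yields a compact operator on $C_b$ via Arzel\`a--Ascoli, applied componentwise after expanding with \eqref{eq-Leibn} and \eqref{eq-T-parts}), but you need to say this explicitly: as written, your argument covers $\Hh$ but leaves $\Hh_\infty$ unaddressed. Once that is patched, the rest of your spectral identification and the passage to $\Hh_\infty$ in (iii) are exactly the paper's argument.
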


\begin{proof}
(i):  Since $\zeta_{0,E} \in\PE(m)$, $M(\zeta_{0,E}^{K-1})$ is a bounded operator on $\widehat \Hh$ leaving
$\Hh$ invariant. Using H\"older's inequality one also realizes that $M(\zeta_{0,E}^{K-1})$ is
a bounded operator from $\widehat \Hh_\infty$ to $\widehat \Hh_1$, 
mapping $\Hh_\infty$ to $\Hh_1$.
Hence by Lemma~\ref{lem-T} and Proposition~\ref{zeta}~(i) the operator $C_E = TB_{0,E}M(\zeta_{0,E}^{K-1})$ 
is  a bounded operator on $\widehat \Hh$ and $\widehat \Hh_\infty$, leaving $\Hh$ and $\Hh_\infty$ invariant.
Compactness of $C_E^2$ on $\Hh$ and $\Hh_\infty$ will follow from compactness of $C_E^2$ on $\widehat{\Hh}$ and $\widehat{\Hh}_\infty$.
The proof is now completely analogous to \cite[Proposition~III.1.6]{KSp2}.
As shown in \cite{AK,KSp2},  if  $\beta_1, \beta_2\in{ \Ss_n(\Sym^+(m))}$ are compactly supported smooth functions, the operator
$M(\beta_1(\bvp^\ot)) \Ff M(\beta_2(\bvp^\ot))$ is compact on $L^2(\RR^{m\times 2n})$
and on $C_b(\RR^{m\times 2n})$, the bounded
 continuous functions.
This, 
combined with \eqref{eq-T-parts} and the Leibniz rule \eqref{eq-Leibn}, implies that
$M(\beta) T M(\beta)$ is compact as an operator on $\widehat \Hh$ and also as an operator from $\widehat \Hh_\infty$ 
to $\widehat \Hh_1$ for $\beta\in{ \Ss_n(\Sym^+(m))}$ with compact support.

From \eqref{eq-ze00} we see that $\ze_{0,E}\in\PE(m)$, hence all its derivatives are exponentially decaying functions 
of $\bvp^\ot$. Therefore, using dominated convergence,  one can approximate 
$D_E{:=} B_{0,E } M( \ze_{0,E}^{K-1})TB_{0,E } M( \ze_{0,E}^{K-1})$  
in operator norm, both as an operator on  $\widehat\Hh$ and as an operator from  $\widehat\Hh_\infty$ to   $\widehat\Hh_1$, by operators 
{$M(\beta)T M(\beta)$ with smooth, compactly supported $\beta$}. 
Hence $D_E$ is compact on $\widehat{\Hh}$ and from $\widehat{\Hh}_\infty$ to $\widehat \Hh$.
Therefore $C_E^2=TD_E$ is compact on $\widehat{\Hh}$ and $\widehat{\Hh}_\infty$.

To obtain (ii) let us start with the identity   
\begin{align}
& C_E\; e^{it \Tr(M\BPhi^\ot)} \zeta_{0,E}(\BPhi^\ot) = TB_{0,E} \,e^{it \Tr(M\BPhi^\ot)} \zeta_{0,E}^K(\BPhi^\ot)  \label{eq-CE} \\
& \qquad \qquad  = \int e^{-i\BPhi'\cdot\BPhi} e^{i\BPhi'\cdot(E-A-KA_E+tM) \BPhi'}\;D\BPhi' 
= e^{\frac{i}{4} \Tr((A-E+KA_E-tM)^{-1} \BPhi^ \ot)}\; ,\nonumber
\end{align}
where $M$ is a real symmetric matrix and $t\in \RR$. \eqref{eq-CE} is derived from \eqref{eq-SSint} by completing the square.

Let $\Pp_s(\BPhi^\ot)$ denote the set of homogeneous
polynomials of degree $s$ in the entries of $\BPhi^\ot$, together with the zero polynomial to make it a vector-space.
{Furthermore, let $\Pp_{\leq s}(\BPhi^\ot)$ and $\Pp_{<s}(\BPhi^\ot)$ denote the polynomials in the entries of $\BPhi^\ot$ of degree
smaller or equal to $s$ and strictly less than $s$, respectively.}

Using the identity $(K A_E+A-E)^{-1} = -4A_E$ as well as \eqref{eq-ze00},
a Taylor expansion {with respect to} $t$ of the right hand side of \eqref{eq-CE} gives
\begin{align}\notag
& e^{\frac{i}{4} \Tr\left((A-E+KA_E-tM)^{-1} \BPhi^\ot\right)} = 
e^{\frac i4 \Tr(-4A_E\BPhi^\ot)}
e^{\frac{i}{4} \sum_{s=1}^\infty t^s \Tr\left((-4A_E M)^s (-4A_E) \BPhi^\ot\right)} \\
 & \qquad \qquad \qquad =
\zeta_{0,E}(\BPhi^\ot)\left[1+\sum_{s=1}^\infty \frac{(it)^s}{s!}\left(\left[\Tr(4 A_E M A_E \BPhi^\ot)\right]^s 
+p_{s,M}(\BPhi^\ot) \right) \right],
\end{align}
where $p_{s,M}\in\Pp_{<s}(\BPhi^\ot)$.
Performing a Taylor expansion of the left hand side of
\eqref{eq-CE} and comparing terms leads to
\begin{equation}\label{eq-CE-2}
 C_E\,\left(\Tr(M \BPhi^\ot)\right)^s  \zeta_{0,E}(\BPhi^\ot)  = 
\left[\left(\Tr(4A_E M A_E \BPhi^\ot)\right)^s \,+\,p_{s,M}(\BPhi^\ot) \right]\zeta_{0,E}(\BPhi^\ot)\;.
\end{equation}
By linearity of $C_E$, the map
$[\Tr(M\BPhi^\ot)]^s \mapsto [\Tr(4A_E M A_E \BPhi^\ot)]^s+p_{s,M}(\BPhi^\ot)$, varying $M$, 
can be extended to a linear map from $\Pp_s(\BPhi^\ot)$ to $\Pp_{\leq s}(\BPhi^\ot)$.
Since the natural projection from $\Pp_{\leq s}(\BPhi^\ot)$ onto $\Pp_{s}(\BPhi^\ot)$ is linear,
the map $[\Tr(M\BPhi^\ot)]^s \mapsto [\Tr(4A_E M A_E \BPhi^\ot)]^s$ can be extended to a linear map 
$\gamma_s:\Pp_s(\BPhi^\ot)\to\Pp_s(\BPhi^\ot)$.
Using all real symmetric matrices $M$, the polynomials of the form $[\Tr(M\BPhi^\ot)]^s$ 
span $\Pp_s(\BPhi^\ot)$. Hence the extension is unique.

Expanding these homogeneous polynomials one obtains 
\beq \label{eq-expand1}
[\Tr(M\BPhi^\ot)]^s=\sum_{\substack{j_1,\ldots,j_s\\k_1,\ldots,k_s}} \prod_{i=1}^s M_{j_i,k_i} 
(\BPhi^\ot)_{j_i,k_i}
 = \sum_{\substack{J\in\Delta(m,\ZZ_+)\\|J|=s}} \!\!c(M,J) (\BPhi^\ot)^J\,,
\eeq
where the latter equation defines the coefficients $c(M,J)$.
Similarly, since $A_E$ is diagonal,
\begin{align} \label{eq-expand2}
[\Tr(4 A_E M A_E \BPhi^\ot)]^s &=\sum_{\substack{j_1,\ldots,j_s\\k_1,\ldots,k_s}} \prod_{i=1}^s 
M_{j_i,k_i}\,4(A_E)_{j_i,j_i} (A_E)_{k_i,k_i} (\BPhi^\ot)_{j_i,k_i} \nonumber \\
&=\sum_{\substack{J\in\Delta(m,\ZZ_+)\\|J|=s}} \lambda_J \, c(M,J) (\BPhi^\ot)^J\,.
\end{align}
Thus, we conclude that
\beq
\gamma_s\pa{(\BPhi^\ot)^J} = \lambda_J (\BPhi^\ot)^J \qtx{for all} J\in\Delta(m,\ZZ_+) \;\:\text{with}\; \; |J|=s.
\eeq
Therefore, \eqref{eq-CE-2} implies
\begin{equation}
 C_E\, (\BPhi^\ot)^J\,\zeta_{0,E}(\BPhi^\ot)  =  \left[\lambda_J (\BPhi^\ot)^J\;+\; \tilde p_J(\BPhi^\ot)\right]\,
\zeta_{0,E}(\BPhi^\ot),
\end{equation}
where $\tilde p_J \in \Pp_{<|J|}(\BPhi^\ot)$. 

\eq{eq-lbj2} follows   from \eq{eq-lbj} and \eqref{eq-def-AE} by explicit computations.  As $|\lambda_J|=K^{-|J|}$,  one has
$\lambda_J \neq \lambda_{J'}$ whenever $|J|>|J'|$.  Performing an induction with respect to $|J|$ 
yields eigenfunctions of the form \eqref{eq-fj} for the eigenvalues $\lambda_J$.

The linear span of the eigenfunctions $f_J$,  $J\in \Delta(m,\ZZ_+)$, is $\PE(-iA_E)$. It follows from Lemma~\ref{lem-PEB} that their closed linear span is $\Hh$, so we get \eq{eq-spectrum-CE}.

 For part (iii) note that $\sigma_{\Hh_\infty}(C_E)\subset \sigma_\Hh(C_E)$ by compactness of
$C_E^2$ in $\Hh_\infty$. Equality follows as all eigenfunctions $f_J$ are in $\PE(m)\subset \Hh_\infty$.
\end{proof}

\begin{remark} To obtain \eqref{eq-spectrum-CE} we work with the space $\Hh_\infty$,
{\it i.e.} the closure of $\PE(m)$ {with respect to} the norm $\hn f\hn_2+\hn f\hn_\infty$ rather than the closure of the Schwartz functions,
$\widehat{\Hh}_\infty$.
\end{remark}

\begin{lemma}  \label{F}  The map
 $F:\,\RR \times \RR  \times [0,\infty)  \times  {\Hh}_\infty \to {\Hh}_\infty$,  defined by
\begin{equation}
F(\lb, E,\eta,f) =  TB_{\lb,E + i \eta} f^K -f  \;,
\end{equation}
 is continuous.  The map $F$ is continuously Frechet differentiable with respect to $f$, the partial 
derivative being
\begin{equation}
F_f(\lb, E,\eta,f) = K  TB_{\lb,E + i \eta} M( f^{K-1}) - I \;.
\end{equation}
Moreover, for any $E\,\in\,I_{A,K}$ we have $ F(0, E,0,\ze_{0,E}) =0$ and 
\begin{equation}
0 \notin \si( F_f(0, E,0,\ze_{0,E}))  \;.\label{00}
\end{equation}     
\end{lemma}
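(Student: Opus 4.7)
The plan is to verify each of the four assertions separately, leaning heavily on Propositions~\ref{zeta} and Lemma~\ref{lemma-CE}.

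\textbf{Continuity of $F$.} Continuity of $(\lb,E,\eta,f)\mapsto TB_{\lb,E+i\eta} f^K$ from $\RR\times\RR\times[0,\infty)\times\Hh_\infty$ to $\Hh_\infty$ is precisely the content of Proposition~\ref{zeta}(i). Subtracting the bounded linear (hence continuous) map $(\lb,E,\eta,f)\mapsto f$ preserves continuity, giving continuity of $F$.

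\textbf{Frechet differentiability in $f$.} For fixed $(\lb,E,\eta)$, I would expand
\begin{equation}
(f+g)^K-f^K = K\,f^{K-1}g + R(f,g), \qquad R(f,g) = \sum_{j=2}^{K}\binom{K}{j}f^{K-j}g^j,
\end{equation}
so that
\begin{equation}
F(\lb,E,\eta,f+g)-F(\lb,E,\eta,f) = \bigl[K\,TB_{\lb,E+i\eta}M(f^{K-1})-I\bigr]g + TB_{\lb,E+i\eta}R(f,g).
\end{equation}
The claim is that $\|TB_{\lb,E+i\eta}R(f,g)\|_{\Hh_\infty}=o(\|g\|_{\Hh_\infty})$. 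Since $T$ is bounded $\Hh_1\to\Hh_\infty$ (Lemma~\ref{lem-T}) and $B_{\lb,E+i\eta}$ is bounded on $\Hh_1$ (Proposition~\ref{zeta}(i)), it suffices to estimate $\|R(f,g)\|_{\Hh_1}$. Using the Leibniz rule \eqref{eq-Leibn} and H\"older's inequality in the manner of the proof of Proposition~\ref{zeta}(i) (to handle products of supersymmetric functions in the $\hn\cdot\hn_p$ norms), one obtains
\begin{equation}
\|f^{K-j}g^j\|_{\Hh_1} \le C_K\,\|f\|_{\Hh_\infty}^{K-j}\,\|g\|_{\Hh_\infty}^{j-1}\,\|g\|_{\Hh_1} \le C_K\,\|f\|_{\Hh_\infty}^{K-j}\,\|g\|_{\Hh_\infty}^{j},
\end{equation}
for $j\ge 2$, yielding $\|R(f,g)\|_{\Hh_1}=O(\|g\|_{\Hh_\infty}^{2})$ locally uniformly in $f$. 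Hence $F$ is Frechet differentiable in $f$ with derivative $F_f(\lb,E,\eta,f) = K\,TB_{\lb,E+i\eta}M(f^{K-1})-I$. Joint continuity of $F_f$ in all four arguments follows again from Proposition~\ref{zeta}(i), applied to $TB_{\lb,E+i\eta}M(f^{K-1})g = TB_{\lb,E+i\eta}(f^{K-1}g)$, together with the multiplicative estimate above.

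\textbf{The identity $F(0,E,0,\ze_{0,E})=0$.} This is a direct restatement of the fixed point equation \eqref{eq-fpz} in Proposition~\ref{zeta}(iv), which is valid at $\lb=0$, $z=E$ for $E\in I_{A,K}$.

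\textbf{Invertibility of $F_f(0,E,0,\ze_{0,E})$.} Specializing the formula for $F_f$ gives
\begin{equation}
F_f(0,E,0,\ze_{0,E}) = K\,TB_{0,E}\,M(\ze_{0,E}^{K-1}) - I = K\,C_E - I,
\end{equation}
with $C_E$ the operator of Lemma~\ref{lemma-CE}. Thus $0\in\si(F_f(0,E,0,\ze_{0,E}))$ if and only if $K^{-1}\in\si_{\Hh_\infty}(C_E)$. By Lemma~\ref{lemma-CE}(ii)--(iii) we have $\si_{\Hh_\infty}(C_E)=\si_{\Hh}(C_E)=\{\lb_J:J\in\Delta(m,\ZZ_+)\}\cup\{0\}$, and none of the eigenvalues $\lb_J$ equals $K^{-1}$ by \eqref{eq-lbj2}. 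This yields \eqref{00}.

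\textbf{Main obstacle.} The principal technical point is the Frechet differentiability in $\Hh_\infty$: one must control products and powers of elements of $\Hh_\infty$ in the $\hn\cdot\hn_1$ seminorm using the Leibniz-type formula \eqref{eq-Leibn} together with H\"older's inequality, exactly as in the continuity argument for $f\mapsto f^K$ at the end of the proof of Proposition~\ref{zeta}(i). Everything else is either a direct appeal to a previously established result or an algebraic identification.
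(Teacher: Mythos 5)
Your proof is correct and follows essentially the same route as the paper, which treats the first two assertions as routine and spells out only the spectral computation behind \eqref{00}; your identification $F_f(0,E,0,\ze_{0,E})=KC_E-I$ and the appeal to Lemma~\ref{lemma-CE}, \eqref{eq-lbj2}, \eqref{eq-spectrum-CE} and \eqref{eq-spectrum-CEinfty} match the paper exactly. One slip in the remainder estimate: the step
\[
\|f^{K-j}g^j\|_{\Hh_1}\le C_K\,\|f\|_{\Hh_\infty}^{K-j}\,\|g\|_{\Hh_\infty}^{j-1}\,\|g\|_{\Hh_1}\le C_K\,\|f\|_{\Hh_\infty}^{K-j}\,\|g\|_{\Hh_\infty}^{j}
\]
tacitly assumes $\|g\|_{\Hh_1}\le\|g\|_{\Hh_\infty}$, which is false in general since the seminorms $\hn\cdot\hn_1$ and $\hn\cdot\hn_\infty$ are not comparable, and indeed an element of $\Hh_\infty$ need not lie in $\Hh_1$. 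The fix is to exploit $j\ge2$: expand with the Leibniz rule \eqref{eq-Leibn} and apply H\"older so that exactly two of the $g$-factors are measured in $L^2$ and all remaining factors in $L^\infty$, which gives $\hn f^{K-j}g^j\hn_1\le C\,\hn g\hn_2^{2}\,\hn g\hn_\infty^{j-2}\,\hn f\hn_\infty^{K-j}\le C\,\|g\|_{\Hh_\infty}^{j}\,\|f\|_{\Hh_\infty}^{K-j}$ (and analogously for the $\hn\cdot\hn_2$ part), so that $\|R(f,g)\|_{\Hh_1}=O(\|g\|_{\Hh_\infty}^{2})$ as required, without ever invoking the $\Hh_1$-norm of $g$.
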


\begin{proof}   Continuity  follows from Proposition~\ref{zeta}, differentiability is straightforward.
So let us show \eqref{00}.  We have
\beq F_f(0, E,0,\ze_{0,E}) = K C_{E}-I .
\eeq
 Recalling \eq{eq-spectrum-CEinfty} and  \eq{eq-spectrum-CE}, we get
 \beq \label{eq-spectrum-F}
 \si( F_f(0, E,0,\ze_{0,E})))= \{K\lambda_J-1: \, J\in \Delta(m,\ZZ_+) \}\cup\{-1\}.
\eeq
Thus, \eqref{00} follows immediately from \eq{eq-lbj}.
\end{proof}

\bl  \label{Q}  The map
 $Q:\,\RR \times \RR  \times [0,\infty)  \times  {\K}_\infty \to {\K}_\infty$,  defined by
\begin{equation}
Q(\lb, E,\eta,g) = {\T} {\cB}_{\lb,E + i \eta} g^K -g  \;,
\end{equation}
 is continuous.  $Q$ is continuously Frechet differentiable with respect to $g$, the partial 
derivative being
\begin{equation}
Q_g(\lb, E,\eta,g) = K {\T} {\cB}_{\lb,E + i \eta} M( g^{K-1}) - I \;.
\end{equation}
Moreover, for any $E\,\in\,I_{A,K}$ we have $ Q(0, E,0,\xi_{0,E}) =0$ and 
\begin{equation}
0 \notin \si( Q_g(0, E,0,\xi_{0,E})) \;. \label{000}
\end{equation}     
\el

\begin{proof}  The first two statements are completely analogous to the previous lemma.
We have to show  \eqref{000}. 
  $ Q_g(0, E,0,\xi_{0,E}) = K{\Cc}_{E} - I$ where $\Cc_{E}=  {\T}{\cB}_{0,E } M( \xi_{0,E}^{K-1}) $ . 
 It follows from (\ref{xi0}) that  ${\Cc}_{E}= C_{E} \otimes \overline{C}_{E}$ as an operator in $\K$,
 where $\overline{C}_{E}  = \Jj C_{E}\Jj$, with $\Jj$ being complex conjugation: $\Jj f= \bar{f}$ for 
any $f \in \Hh$.  Since $J$ is  anti-unitary on  $\Hh$ we get
\[
\si_{\Hh}(\overline{C}_{E}) =\overline{\si_{\Hh}(C_{E}) }\;, 
\] 
and hence
\begin{equation}
\si_{\K}({\Cc}_{E}) \; = \;\{ \lambda_{J,J'} =\lambda_J\bar{\lambda}_{J'} \,;\;\; J,J'\,\in\,\Delta(m, \ZZ_+)\,  \} 
\cup \{0\}\;,
\end{equation}
with $\lambda_J$ given by \eqref{eq-lbj}.  The same arguments as in the previous Lemma show that 
${\Cc}_{0,E}^2$ is a compact operator on ${\K}_\infty$, so it follows that
\begin{equation}
\si({\Cc}_{E}) := \si_{{\K}_\infty} ({\Cc}_{E})  = \si_{\K}(\Cc_{E})  \;.
\end{equation}
Since $\lambda_{J,J'} \not= \frac1{K}$ for any $ J,J'\,\in\,\Delta(m, \ZZ_+)$, (\ref{000}) follows.  
\end{proof}  

\begin{remark} For the Lemmas~\ref{F} and \ref{Q} it is  crucial that $K\geq 2$. For the one dimensional strip, 
{where $K=1$,} $\lambda_{\mathbf{0}}=1$ (0 matrix for $J$) and $\lambda_{\mathbf{0}}\bar\lambda_{\mathbf{0}}=1$
lead to zero eigenvalues for
$ F_f(0, E,0,\ze_{0,E})$ and $Q_g(0, E,0,\xi_{0,E})$\,. For this reason the proof does not work in the one-dimensional strip.
In fact it is known that in this case one obtains Anderson localization instead of absolutely continuous spectrum
even for small disorder \cite{KLS}.
\end{remark}

We now use the Implicit Function Theorem on Banach Spaces as stated in \cite[Appendix~B]{K2}, a rewriting of \cite[Theorem~2.7.2]{N}.   If $E\in I_{A,K} $, it follows from Lemmas~\ref{F} and \ref{Q} that the hypotheses of this theorem
 are verified for  the functions $F(\lb, E,\eta,f)$ and 
$Q(\lb, E,\eta,g)$ at $(0,E,0,\ze_{0,E} )$ and
 $(0,E,0,\xi_{0,E} )$, respectively.
As a consequence, for each  $E \in I_{A,K}$
there exist $\lb_E > 0$, $\varepsilon_E >0$, $\eta_E >0$ and $\de_E >0$, such that for each 
\[  
(\lb, E',\eta) \in (-\lb_E,\lb_E)\times (E - \ve_E,E + \ve_E) \times [0, \eta_E) 
\]
there is a unique $\, \omega_{\lb,E',\eta} \in {\K}_\infty$ with 
$\| \omega_{\lb,E',\eta}- \xi_{0,E} \|_{ {\K}_\infty}< \de_E$, 
such that we have  $  Q(\lb, E',\eta,\omega_{\lb,E',\eta}) =0$.  Moreover, the map
\[
 (\lb, E',\eta) \in (-\lb_E,\lb_E)\times (E - \ve_E,E + \ve_E) \times [0, \eta_E) \;\longrightarrow\;
 \omega_{\lb,E',\eta} \in {\K}_\infty
\]
is continuous.  Similar statements hold for $F(\lb, E,\eta,f)$.   

\begin{theorem}  \label{xize}
For any $E\in I_{A,K}$ there exist $\lb_E > 0$ and $\varepsilon_E >0$, 
 such that the maps 
\begin{equation}
(\lb, E',\eta) \in (-\lb_E,\lb_E)\times (E - \ve_E,E + \ve_E) \times (0, \infty) \;\longrightarrow\;
 \xi_{\lb,E'+i\eta} \in {\K}_\infty  \label{xizexi}
\end{equation}
and 
\begin{equation}
(\lb, E',\eta) \in (-\lb_E,\lb_E)\times (E - \ve_E,E + \ve_E) \times (0, \infty) \;\longrightarrow\;
 \ze_{\lb,E'+i\eta} \in {\Hh}_\infty  \label{xizeze}
\end{equation}
have  continuous extensions to $ (-\lb_E,\lb_E)\times (E - \ve_E,E + \ve_E) \times [0, \infty)$
satisfying \eqref{eq-fpx} and \eqref{eq-fpz}, respectively.   
\end{theorem}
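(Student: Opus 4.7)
The plan is to identify the continuous IFT solutions produced in the paragraph preceding the theorem with $\xi_{\lb,E'+i\eta}$ and $\ze_{\lb,E'+i\eta}$ wherever the latter are defined (namely, for $\eta > 0$), and then to use the IFT solutions themselves as the continuous extensions to $\eta = 0$. Since the IFT solutions satisfy the fixed point equations by construction, the extensions inherit \eqref{eq-fpz} and \eqref{eq-fpx} automatically.

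I focus on $\xi$; the argument for $\ze$ is completely analogous, with $F$, Lemma~\ref{F}, and Proposition~\ref{zeta} in place of $Q$, Lemma~\ref{Q}, and Proposition~\ref{xita}. The IFT setup already furnishes $\lb_E,\ve_E,\eta_E,\de_E>0$ together with a unique continuous map
\[
(\lb,E',\eta)\in (-\lb_E,\lb_E)\times(E-\ve_E,E+\ve_E)\times[0,\eta_E) \;\longmapsto\; \omega_{\lb,E',\eta}\in \K_\infty
\]
such that $Q(\lb,E',\eta,\omega_{\lb,E',\eta})=0$ and $\|\omega_{\lb,E',\eta}-\xi_{0,E}\|_{\K_\infty}<\de_E$. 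By Proposition~\ref{xita}(iv), $\xi_{\lb,E'+i\eta}$ also satisfies $Q(\lb,E',\eta,\cdot)=0$ for every $\eta>0$.

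The key step is an open/closed/connected argument showing that $\xi_{\lb,E'+i\eta}=\omega_{\lb,E',\eta}$ throughout the open cube $U_0=(-\lb_E,\lb_E)\times(E-\ve_E,E+\ve_E)\times(0,\eta_E)$. Let
\[
U=\{(\lb,E',\eta)\in U_0: \xi_{\lb,E'+i\eta}=\omega_{\lb,E',\eta}\}.
\]
Openness of $U$ follows from joint continuity of $\xi$ on $U_0$ (Proposition~\ref{xita}(ii)), the fact that $\omega$ maps into the open $\de_E$-ball, and IFT uniqueness inside that ball. Closedness of $U$ in $U_0$ is immediate from continuity of both $\xi$ and $\omega$. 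For nonemptiness, Proposition~\ref{xita}(iii) gives $\xi_{0,E+i\eta}\to\xi_{0,E}$ in $\K_\infty$ as $\eta\downarrow 0$, so $\xi_{0,E+i\eta}$ enters the $\de_E$-ball for all sufficiently small $\eta>0$; IFT uniqueness then forces $\xi_{0,E+i\eta}=\omega_{0,E,\eta}$ there. Since $U_0$ is connected as a product of open intervals, $U=U_0$.

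With this identification, the desired continuous extension is obtained by setting it equal to $\omega_{\lb,E',\eta}$ on $[0,\eta_E)$ and to $\xi_{\lb,E'+i\eta}$ on $[\eta_E,\infty)$; the two pieces agree on the overlap, continuity at $\eta=0$ is inherited from $\omega$, continuity on $(0,\infty)$ from Proposition~\ref{xita}(ii), and the fixed point equation \eqref{eq-fpx} holds throughout. The main obstacle lies in the nonemptiness step: joint continuity of $\xi_{\lb,E'+i\eta}$ up to $\eta=0$ along lines with $\lb\neq 0$ is not known a priori (in fact, it is one of the theorem's conclusions), so one is forced to enter the IFT uniqueness ball through the $\lb=0$ axis, which is precisely what the $\lb=0$ limit statement in Proposition~\ref{xita}(iii) (and Proposition~\ref{zeta}(iii) for $\ze$) makes possible.
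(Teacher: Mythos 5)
Your proposal is correct and follows essentially the same route as the paper: identify the IFT solution $\omega$ with $\xi$ (resp.\ $\ze$) for $\eta>0$, use the fact that one can only reach $\eta=0$ along the $\lambda=0$ line (Proposition~\ref{xita}(iii)) to enter the uniqueness ball, and propagate the identity by continuity and IFT uniqueness. The paper phrases the continuation step tersely via continuity on the set $(\{0\}\times\{E\}\times[0,\eta_1])\cup(\RR\times\RR\times[\eta_1,\infty))$, while you spell out the same idea as an explicit open-closed-connectedness argument on $U_0$; these are the same argument.
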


\begin{proof}   For the map given in (\ref{xizexi}) it suffices to prove that 
\begin{equation}
\xi_{\lb,E' + i\eta}\; = \;\omega_{\lb,E',\eta}\;\;\mbox{for all}\;\;
(\lb, E',\eta) \in (-\lb_E,\lb_E)\times (E - \ve_E,E + \ve_E) \times (0, \eta_E)\;.  \label{eq}
\end{equation}
 But it 
follows  from Proposition~\ref{xita} that  $\xi_{\lb,E'+i\eta}$ is a continuous function of
$(\lb, E',\eta)$ in the set
\[\left(\{0\} \times  \{E'\}\times  [0, \eta_1]\right) \cup  
 \left(\RR \times\RR \times [\eta_1, \infty) \right)\,,
\]
for any $\eta_1 >0$, which satisfies 
(\ref{eq-fpx}).  Thus \eqref{eq} follows from the uniqueness in the Implicit Function Theorem.  
The proof for the map in \eqref{xizeze} is similar.  
\end{proof}

   
\section{Proofs of the main theorems}\label{sec-proofs}


Theorem \ref{cor}(i) now follows from \eqref{eq-EGG}, \eqref{eq-EG}, Theorem~\ref{xize}, the 
translation invariance of expectations, and a simple compactness argument.
Parts (ii) and (iii)  follow from part (i), as explained in the introduction. 

Theorem~\ref{main} follows from Theorem~\ref{cor}(iii).   Let $I=[a,b]\subset I_{A,K} $ and  $\lb(I) > 0$ be as in Theorem~\ref{cor}, 
so \eqref{sup} holds.  For $|\lb| < \lb(I) $ and any $x \in \B$ we use
Fubini's Theorem and Fatou's Lemma to obtain
\begin{align}
&\E \left( \liminf_{\eta \downarrow 0}\int_{a}^b \Tr(|G_{\lb}\, (x,x; E +i\eta)|^2)\,dE \right)\\
& \qquad\qquad\qquad 
\leq\;\; \liminf_{\eta \downarrow 0} \int_{a}^b \E (\Tr(|G_{\lb}\, (x,x; E +i\eta)|^2)  )\,dE\;\;<\;\; \infty\;. \notag
\end{align}
Thus, 
\beq\label{limprobone}
\liminf\limits_{\eta \downarrow 0}\int\limits_{a}^b \Tr(|G_{\lb}\, (x,x; E +i\eta)|^2)\,dE \,<\,\infty \quad
\text{with probability one}. 
\eeq
Let $d\nu_{\lb,x,k}(E)  = \left\langle {x,k|dP_\lb (E)|x,k} \right\rangle$,
where $dP_\lb (E)$ is the spectral measure of the operator $H_\lb$. 
The Stieltjes transform of $d\nu_{\lb,x,k}$ is given by
$(G_{\lb}\, (x,x; E +i\eta))_{k,k}$.
In view of \eq{limprobone}, it follows from \cite[Theorem~4.1]{K2} that, with probability one,  the finite measure  $\nu_{\lb,x,k}|_{(a,b)}$ is purely absolutely continuous  for all
$x \in \B,\,k\in \{1,\ldots,m\}$,
so Theorem~\ref{main} is proved.  (Although \cite[Theorem~4.1]{K2} is stated for intervals of the form $(-a,a)$, it clearly holds for general bounded intervals $(a,b)$.)




\begin{thebibliography}{99}
\bibitem
{AALR}  E. Abrahams,  P. Anderson, D. Licciardello and T. Ramakrishnan, \emph{
Scaling theory of localization: absence of quantum diffusion in two dimensions},  Phys. Rev. Lett. \textbf{42}, 673-675 (1979)

\bibitem{AK}  V. Acosta and A. Klein, \emph{ Analyticity of the density of states in the Anderson model
in the Bethe lattice},  J. Stat. Phys.  {\bf 69}, 277-305 (1992)


\bibitem{A}  M. Aizenman, \emph{ Localization at weak disorder:  some elementary bounds},
Rev. Math. Phys. {\bf 6}, 1163-1182 (1994)


\bibitem{ASW} M. Aizenman, R. Sims and S. Warzel, \emph{ Stability of the absolutely continuous spectrum of random
Schr\"odinger operators on tree graphs}, Prob. Theor. Rel. Fields, {\bf 136}, 363-394 (2006)


\bibitem{AM}  M. Aizenman and S. Molchanov, \emph{ Localization at large disorder and extreme
energies:  an elementary derivation},  Commun. Math. Phys. {\bf 157}, 245-278 (1993)

\bibitem{And}  P. Anderson,  \emph{ Absence of diffusion in certain random lattices}, 
Phys. Rev. {\bf 109}, 1492-1505 (1958)

\bibitem{B} F.A. Berezin, \emph{ The method of second quantization}, Academic Press, New York, 1966


\bibitem{CK}  M. Campanino and A. Klein, \emph{ A supersymmetric transfer matrix and 
differentiability of the density of states  in the one-dimensional Anderson model},
Commun. Math. Phys. {\bf 104}, 227-241 (1986)


\bibitem{CKM}  R. Carmona, A. Klein and F. Martinelli,  
\emph{ Anderson localization for Bernoulli and other singular potentials}, 
Commun. Math. Phys. {\bf 108}, 41-66 (1987)

 
\bibitem{CL}  R. Carmona and J. Lacroix, \emph{ Spectral Theory of Random Schrodinger Operators},
  Boston, MA:  Birkhauser, 1990


\bibitem{DLS}  F. Delyon, Y. Levy and B. Souillard,  \emph{ Anderson
localization for multidimensional systems at large disorder or low
energy},  Commun. Math. Phys. {\bf 100}, 463-470 (1985)

\bibitem{Dor} O. N. Dorokhov, 
\emph{ Solvable model of multichannel localization}, Phys. Rev.
B {\bf 37} 10526-10541 (1988)

\bibitem{DK}  H. von Dreifus and A. Klein,  \emph{ A new proof of localization
in the Anderson tight binding model},  Commun. Math. Phys. {\bf 124},
285-299 (1989)

\bibitem{E} K. B. Efetov, 
\emph{ Supersymmetry and theory of disordered metals}, Advances Phys. {\bf 32}, 53-127 (1983)


\bibitem{FHH} R. Froese, F. Halasan and D. Hasler, 
\emph{ Absolutely continuous spectrum for the Anderson model on a product
of a tree with a finite graph}, J. Funct. Anal. {\bf 262}, 1011-1042 (2012)


\bibitem{FHS} R. Froese, D. Hasler and W. Spitzer, 
\emph{ Absolutely continuous spectrum for the Anderson Model on a tree:
A geometric proof of Klein's Theorem}, Commun. Math. Phys., {\bf 269}, 239-257 (2007)


\bibitem{FMSS}  J. Fr\"ohlich, F. Martinelli, E. Scoppola and T. Spencer
 \emph{ Constructive proof of localization in the Anderson tight binding
model},  Commun. Math. Phys. {\bf 101}, 21-46 (1985)


\bibitem{FS}  J. Fr\"ohlich and T. Spencer,  \emph{ Absence of diffusion in the
Anderson tight binding model for large disorder or low energy},
Commun. Math. Phys. {\bf 88}, 151-184 (1983)


\bibitem{GMP}  Ya. Gol'dsheid, S. Molchanov and L. Pastur,  
\emph{ Pure point spectrum of stochastic one dimensional Schr\"odinger operators},
Funct. Anal. Appl. {\bf 11}, 1-10 (1977)


\bibitem{H} F. Halasan, \emph{ Absolutely continuous spectrum for the Anderson model on some tree-like graphs}, 
arXiv:0810.2516v3 (2008)



\bibitem{Ka}  T. Kato,  \emph{ Wave operators and similarity for some non self-adjoint operators},
Mat. Ann. {\bf 162}, 258-279 (1966) 


\bibitem{KLW} M. Keller, D. Lenz and S. Warzel, 
\emph{ On the spectral theory of trees with finite cone type}, arXiv:1001.3600v2 (2011),
Israel J. Math., to appear


\bibitem{K}  A. Klein,  \emph{ The supersymmetric replica trick and smoothness of the density of states 
for random Schrodinger operators},   Proc. Symposia in Pure Mathematics {\bf 51}, 315-331 
(1990)

\bibitem{K3}  A. Klein,  \emph{ Localization in the Anderson model with long range hopping}, 
Braz. J. Phys. {\bf 23}, 363-371 (1993)

\bibitem{K1}  A. Klein, \emph{ Absolutely continuous spectrum in the Anderson model on the Bethe lattice},
  Math. Res. Lett. {\bf 1}, 399-407 (1994)
  
 \bibitem{K9}  A. Klein, \emph{ Absolutely continuous spectrum in random Schr\"odinger operators},
 Quantization, nonlinear partial differential equations, and operator
 algebra (Cambridge, MA, 1994), 
 139-147, Proc. Sympos. Pure Math. \textbf{59}, Amer. Math. Soc., Providence, RI,  1996
 
 \bibitem{K8}  A. Klein,   \emph{ Spreading of wave packets in the Anderson model on the Bethe lattice},
Commun. Math. Phys. {\bf 177}, 755--773 (1996)
  
\bibitem{K2} A. Klein, \emph{ Extended states in the Anderson model on the Bethe lattice},
Advances in Math. {\bf 133}, 163-184 (1998)

 \bibitem{KLS2}    A. Klein, J. Lacroix and A. Speis, \emph{ Regularity of the density of states in the Anderson model on a strip
 for potentials with singular continuous distributions},
 J. Statist. Phys.  {\bf 57},   65-88  (1989)

 \bibitem{KLS}    A. Klein, J. Lacroix and A. Speis,  \emph{ Localization for the
Anderson model on a strip with singular potentials},  J. Funct. Anal.
 {\bf  94}, 135-155 (1990)

\bibitem{KS} A. Klein and C. Sadel,
\emph{Ballistic Behavior for Random Schr\"odinger Operators on the Bethe Strip},
J.~Spectr. Theory {\bf 1}, 409-442 (2011)

\bibitem{KSp2} A. Klein and A. Speis,  \emph{ Smoothness of the density of states in
the Anderson model on a one-dimensional strip}, Annals of Phys. {\bf 183}, 352-398 (1988)

\bibitem{Klo} F. Klopp,  \emph{ Weak disorder localization and Lifshitz tails},  
Commun. Math. Phys. \textbf{232}, 125-155 (2002)


\bibitem{KS1}  H. Kunz and B. Souillard,  \emph{ Sur le spectre des operateurs
aux differences finies aleatoires},  Commun. Math. Phys. {\bf 78},
201-246 (1980)


\bibitem{Lac}  J. Lacroix, \emph{ Localisation pour l'op\'erateur de
Schr\"odinger al\'eatoire dans un ruban},  Ann. Inst. H. Poincar\'e ser
{\bf A40}, 97-116 (1984)


\bibitem{N}  L. Nirenberg,  \emph{ Topics in Nonlinear Functional Analysis},  New York:  Courant Institute of 
Mathematical Sciences, 1974    

\bibitem{P}   L. Pastur,  \emph{ Spectra of random selfadjoint operators}  Russ. Math. Surv. 
 {\bf 28}, 1-67 (1973)


\bibitem{SS} C. Sadel and H. Schulz-Baldes, 
\emph{ Random Dirac Operators with time reversal symmetry},
Commun. Math. Phys. {\bf 295}, 209-242 (2010)


\bibitem{SW}  B. Simon and T. Wolff, \emph{ Singular continuum spectrum under
rank one perturbations and localization for random Hamiltonians},
Commun. Pure. Appl. Math. {\bf 39}, 75-90 (1986)

\bibitem{Weg} F. Wegner, \emph{ Disordered systems with $n$ orbitals per site: $n=\infty$ limit},
Phys. Rev. B. {\bf 19}, 783-792 (1979)

\bibitem{Wang} W.-M Wang, \emph{ Localization and universality of Poisson statistics
 for the multidimensional Anderson model at weak disorder}, 
 Invent. Math. \textbf{146}, 365-398 (2001)
  

\end{thebibliography}
\end{document}